\documentclass[11pt,letterpaper]{article}
\usepackage[margin=1in]{geometry}
\usepackage[T1]{fontenc}
\usepackage{lmodern}
\usepackage{amsmath,amssymb,amsthm,mathtools}
\usepackage{comment}
\usepackage{mathrsfs}
\usepackage{microtype}
\usepackage{enumitem}
\usepackage{booktabs}
\usepackage{mathtools}
\usepackage[hidelinks]{hyperref}
\usepackage{xcolor}
\usepackage[nameinlink,capitalize,noabbrev]{cleveref}
\usepackage{graphicx}

\hypersetup{pdftitle={Approximate Polynomial Satisfiability and Integral Closure Membership in the Counting Hierarchy},
  pdfauthor={Sanyam Agarwal, Archit Hari, Anurag Pandey, Himanshu Shukla}}
\newcommand{\A}{\mathbb A}
\newcommand{\C}{\mathbb C}
\newcommand{\Q}{\mathbb Q}

\newcommand{\Z}{\mathbb Z}

\newcommand{\PPj}{\mathbb P}
\newcommand{\CH}{\mathsf{CH}}
\newcommand{\val}{\mathsf{val}}
\newcommand{\HN}{\mathsf{HN}}
\newcommand{\PSPACE}{\mathsf{PSPACE}}
\newcommand{\K}{\mathcal K}
\newcommand{\EXPSPACE}{\mathsf{EXPSPACE}}
\newcommand{\APS}{\mathsf{APS}}

\newcommand{\AV}{\operatorname{AV}}
\newcommand{\Ann}{\operatorname{Ann}}

\newcommand{\Frac}{\operatorname{Frac}}
\newcommand{\Res}{\operatorname{Res}}
\newcommand{\TT}{\operatorname{TC}}
\newcommand{\rank}{\operatorname{rank}}
\newcommand{\ord}{\operatorname{ord}}
\newcommand{\trdeg}{\operatorname{trdeg}}
\newcommand{\poly}{\operatorname{poly}}

\newtheorem{theorem}{Theorem}[section]
\newtheorem{lemma}[theorem]{Lemma}
\newtheorem{proposition}[theorem]{Proposition}
\newtheorem{corollary}[theorem]{Corollary}
\theoremstyle{definition}
\newtheorem{definition}[theorem]{Definition}
\newtheorem{example}[theorem]{Example}
\newtheorem{remark}[theorem]{Remark}
\newtheorem{problem}[theorem]{Problem}
\numberwithin{equation}{section}
\setlist[enumerate]{label=(\roman*),leftmargin=2em}
\setlist[itemize]{leftmargin=1.5em}
\title{Approximate Polynomial Satisfiability and \\ Integral Closure Membership in the \\ Counting Hierarchy}
\author{Sanyam Agarwal\thanks{Universit\"at des Saarlandes, Saarbr\"ucken, Germany.
\texttt{agarwal@cs.uni-saarland.de}}
\and Archit Hari\thanks{Indian Institute of Technology Delhi, India.
\texttt{mt1240231@iitd.ac.in}}
\and Anurag Pandey\thanks{Universit\"at des Saarlandes, Saarbr\"ucken, Germany.
\texttt{apandey@cs.uni-saarland.de}}
\and Himanshu Shukla\thanks{Universit\"at Regensburg, Germany.
\texttt{hshukla.math04@gmail.com}}}
\date{}

\begin{document}
\maketitle
\begin{abstract}
Andrews, Garg, and Schost (FOCS~2026) recently, in a breakthrough result, placed Hilbert's Nullstellensatz (HN) in the Counting Hierarchy ($\CH$) using constant-depth circuits for multivariate resultants. While HN deals with exact polynomial satisfiability, Guo, Saxena, and Sinhababu (CCC~2018; Theory of Computing~2019) introduced approximate polynomial satisfiability (APS) as a natural approximative analogue of polynomial system satisfiability. They proved its $\mathsf{NP}$-hardness and a $\PSPACE$ upper bound, and used APS to construct hitting sets for $\overline{\mathsf{VP}}$ in polynomial space, substantially narrowing the GCT chasm in geometric complexity theory.

We place APS in the Counting Hierarchy for rational polynomials given by  arithmetic circuits.  Our proof combines Noether normalization with a geometric restriction on how points in the image of a polynomial map approach the origin, reducing APS to a test on multivariate resultants 
which allows solvability in $\CH$. It complements the independent and concurrent proofs of Balaji, Shirmohammadi, Tavenas, and Worrell, and Andrews, Garg, Oliveira, and Schost. Moreover, we also place integral closure membership testing in the Counting Hierarchy. Here, the question is whether a given polynomial $f$ is integral over $\C[f_1,\ldots,f_m]$. Our approach uses a global annihilator criterion and an effective degree bound to reduce non-integrality to APS in polynomial time. In fact, we show that the reduction holds both ways, and Integral Closure Membership Testing is equivalent to APS under polynomial-time reductions.
\end{abstract}
\clearpage

\section{Introduction}\label{sec:intro}

Deciding whether a system of polynomial equations has a complex solution is a central problem in algebraic complexity. This decision problem, usually called Hilbert's Nullstellensatz ($\HN$), is $\mathsf{NP}$-hard and has long been known to lie in $\PSPACE$~\cite{Can88}. Koiran placed it in $\mathsf{AM}$ under the generalized Riemann hypothesis~\cite{Koi96}. The recent breakthrough of Andrews, Garg, and Schost gives an unconditional Counting Hierarchy upper bound~\cite{AGS26}. Their uniform constant-depth circuits for multivariate resultants extend the line of work on shallow algebraic computation developed by Andrews and Wigderson for univariate resultants and related linear-algebra problems~\cite{AW24}.

Guo, Saxena, and Sinhababu~\cite{GSS18,GSS19} introduced \emph{approximate polynomial satisfiability} (APS) as a natural approximative analogue of polynomial system satisfiability. They established algebraic and geometric characterizations of APS, proved its $\mathsf{NP}$-hardness, and gave a $\PSPACE$ algorithm. Not only that, they also showed an equivalence of non-existence of APS to the system of polynomials having an annihilator with a non-zero constant term.

Intuitively, APS allows the input coordinates to vary, and even become unbounded, provided that all output values tend to zero. For example, the system
\begin{equation}\label{example:canonicalAPS}
  f_1(x,y)=x,\qquad f_2(x,y)=xy-1
\end{equation}
has no exact solution, but the substitution $(x,y)=(t,t^{-1})$ makes its outputs $(t,0)$ tend to the origin as $t$ approaches $0$. Formally, for a polynomial map $F=(f_1,\ldots,f_m):\C^n\to\C^m$, the decision problem $\APS$ asks whether origin lies in the Zariski closure $Y:=\overline{F(\C^n)}$ of the image of the polynomial map. In other words, $\APS$ asks whether $0 \in Y$. Note that over $\C$, the Euclidean and Zariski closures of this constructible image coincide \cite[Chapter 1]{mumford1995algebraic}. 
\begin{problem}[Approximate polynomial satisfiability]\label{prob:aps}
Given  arithmetic circuits over $\Q$ computing $F=(f_1,\ldots,f_m)\in\Q[x_1,\ldots,x_n]^m$, decide whether $0\in\overline{F(\C^n)}$.
\end{problem}
Moving from an image to its closure is fundamental in border complexity: it permits algebraic objects (eg: polynomials, tensors) to be represented as limits of ``simpler'' ones. Border rank of tensors is a basic example \cite{BCS97,Landsberg12}; bounds on the order of such approximations go back to Lehmkuhl and Lickteig~\cite{LL89}. Guo, Saxena, and Sinhababu showed that APS also supplies an algorithmic route to problems in geometric complexity theory (GCT). In particular, they used it to construct polynomial-size hitting sets for $\overline{\mathsf{VP}}$ in polynomial space~\cite{GSS19}. Here $\overline{\mathsf{VP}}$ consists of polynomial families of polynomially bounded degree approximable by polynomial-size arithmetic circuits. A hitting set must detect every nonzero polynomial in this closure, including those obtained only by approximation. Through Mulmuley's connection between hitting sets and effective Noether normalization~\cite{Mul17}, the construction narrows the GCT chasm: it replaces the $\mathsf{EXPSPACE}$ bound for these constructions by a $\PSPACE$ bound. With the recent improvement in the complexity of $\HN$, it begs the question whether the complexity of $\APS$ can also be improved in the same way? Our first result gives an affirmative answer for polynomials input as arithmetic circuits with coefficients from $\Q$.

Our second result concerns integral closure, a basic construction in commutative algebra and algebraic geometry~\cite[Chapter 5]{AM69}. Given a polynomial $f$ and a finitely generated subalgebra $A\coloneqq \C[f_1,\ldots,f_m]$, we ask whether $f$ satisfies a monic equation over $A$. Formally, we ask the following.
\begin{problem}[Integral closure membership]\label{prob:int-cl}
Given  arithmetic circuits over $\Q$ computing polynomials
$f,f_1,\ldots,f_m\in\Q[x_1,\ldots,x_n]$, decide whether $f$ is integral over $A=\C[f_1,\ldots,f_m]$. In other words, we want to test whether there exists $r \in \Z_{> 0}$ and $a_0,\ldots,a_{r-1}\in A$, such that $f^r + f^{r-1}a_{r-1} + \ldots + fa_1 + a_0=0$.
\end{problem}
$\EXPSPACE$ upper bounds for the general integral closure testing are well-known through Gr\"obner Bases computations \cite{de1997algorithm}. However, we show that deciding integrality over polynomial subalgebras can be done much more efficiently than these worst-case exponential space bounds. In particular, we also put this problem in $\CH$ by reducing it to an instance of $\APS$.

\subsection{Our Contributions}

The motivation for our work comes from two related recent results: the use of homogeneous multivariate resultants by Andrews, Garg, and Schost \cite{AGS26} to put the decision problem of Hilbert's Nullstellensatz in the Counting Hierarchy, and the Strong Approximative Nullstellensatz by Agarwal, Dutta, Pandey, and Shukla \cite{ADPS26} which showed that integral closure testing over a localized subring of a polynomial algebra is in $\PSPACE$. Knowing $\HN \in \CH$ naturally leads to the question whether we can improve the $\PSPACE$ upper bound for $\APS$ in the same way, also posed as an open problem in \cite{ADPS26}. Our first main result answers this question positively. 
\begin{theorem}\label{thm:aps}
Approximate polynomial satisfiability ($\APS$) belongs to the Counting Hierarchy. 
\end{theorem}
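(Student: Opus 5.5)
We reduce $\APS$ to a polynomial number of zero-tests of (specializations of) homogeneous multivariate resultants, and then invoke the constant-depth resultant circuits of Andrews, Garg, and Schost~\cite{AGS26}. Together with the standard fact that identity testing for circuits built from such resultants lies in $\CH$, this places $\APS$ in $\CH$. The starting point is the Guo--Saxena--Sinhababu characterization~\cite{GSS19}: $0\notin\overline{F(\C^n)}$ if and only if there is a polynomial $G\in\C[y_1,\ldots,y_m]$ with $G(f_1,\ldots,f_m)=0$ and $G(0)\neq 0$. This criterion is equivalently a statement about the ideal of $Y=\overline{F(\C^n)}$. Searching for such a $G$ directly is hopeless in $\CH$, so we use the geometry of $Y$ instead.

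\textbf{Step 1: Noether normalization of the image.} Let $k=\trdeg\,\C(f_1,\ldots,f_m)$, so $\dim Y=k$. This quantity is computable in $\CH$ via the Jacobian criterion, as the rank of a symbolic Jacobian, which reduces to PIT and determinant computations. Take a random linear projection $\pi:\C^m\to\C^{k+1}$. Write $g_i=\ell_i(f_1,\ldots,f_m)$ for $i=1,\ldots,m$, where the $\ell_i$ are generic linear forms chosen so that:
\begin{itemize}
\item $\pi_0=(\ell_1,\ldots,\ell_k)$ restricted to $Y$ is finite, since generic linear projections of affine varieties are finite after homogenizing, which is Noether normalization;
\item $0\in Y$ if and only if $0$ lies in the image closure of each map $(g_1,\ldots,g_k,g_j)$ for $j>k$, with these maps being hypersurface images.
\end{itemize}
For the second point we intend to use a ``geometric restriction on approach to the origin.'' If $F(x_t)\to 0$, a curve selection argument (Puiseux) gives a curve whose image approaches $0$ in $Y$. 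By finiteness of $\pi_0$, the fibre of $\pi_0|_Y$ over $0$ is finite. Hence $0\in Y$ exactly when the fibre of $\pi_0|_Y$ over $0\in\C^k$ contains the origin, and this can be tested coordinatewise through the minimal polynomials $P_j(z_1,\ldots,z_k,w)$ of $g_j$ over $\C[g_1,\ldots,g_k]$. Because $\pi_0$ is finite, each $P_j$ is monic in $w$, and $0\in Y$ if and only if $P_j(0,\ldots,0,0)=0$ for all $j$, provided the generic choice separates points of the finite fibre. To ensure that, we also include a generic combination of the coordinates as one of the $g_j$.

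\textbf{Step 2: expressing $P_j(0,w)$ via resultants.} The polynomial $P_j(0,w)$ is, up to multiplicity, $\prod_{p}(w-g_j(p))$, where $p$ ranges over the fibre points. We recover it as an eliminant. Homogenize the system $g_1(x)=\cdots=g_k(x)=0$, $g_j(x)=w$ using fresh variables, degree bounds and a generic hyperplane to cut down from $n$ to the right dimension (random linear restriction of $x$ to a $k$-dimensional affine subspace, which preserves the image closure's dimension and, generically, membership of $0$). This yields $k+1$ equations in $k$ affine unknowns, and their resultant in $w$ is a nonzero multiple of $P_j(0,w)$ whose roots are exactly the limit values, including contributions from points at infinity that converge under $F$. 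Testing whether $w=0$ is a root then becomes zero-testing a resultant of polynomials given by circuits of polynomial degree. Degrees are only exponential in the circuit size, so we first reduce to degree-bounded circuits by introducing auxiliary variables for each gate, which also adds their defining equations to the system. This is the standard AGS preprocessing step.

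\textbf{Main obstacle.} The crux is the second step: showing that the restriction to a generic $k$-dimensional slice, together with the behavior at infinity, neither creates nor destroys the limit point $0$. Spurious vanishing of the resultant can come from solutions at infinity of the homogenized system, and this must be matched exactly with sequences $x_t\to\infty$ whose images tend to $0$, as in the example $(t,t^{-1})$. I would handle this by using finiteness of $\pi_0$ on $Y$, rather than on $\C^n$, to argue that every solution at infinity of the homogenized system corresponds to a genuine limit of the image. Genericity of the random choices would then come from Schwartz--Zippel over a polynomially bounded range, with the verification carried out inside $\CH$ by a universal quantifier over the randomness.
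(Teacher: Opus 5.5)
Your Step 1 matches the paper's first reduction. Finite Noether normalization reduces $0\in Y$ to $0\in Y_0:=Y\cap\pi_0^{-1}(0)$, which is decided by one generic linear form separating the finite fibre. You are right to add that form, since the coordinatewise tests alone are not equivalent.

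The genuine gap is Step 2. You impose $g_1(x)=\cdots=g_k(x)=0$ exactly and homogenize only in $x$. Then the affine solutions of your system see only the points of $Y_0\cap F(\C^n)$, so the test degenerates to exact satisfiability. The points of $Y_0\setminus F(\C^n)$, which are the whole content of $\APS$, are reached only as $x\to\infty$. At $x_0=0$ the equation $G_j-wx_0^{d_j}$ reduces to its top-degree part, in which $w$ does not appear. So solutions at infinity cannot record the limiting value of $g_j$. They either make your resultant vanish identically in $w$, or, after a Canny--Ierardi perturbation and trailing coefficients, contribute only $w$-independent factors. The $w$-dependent factors then come from exact affine solutions alone.

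Here is a concrete failure. Take $F=(x_1,\,x_1x_2-1,\,x_1^2x_2)$, which has $\APS$ via $(t,t^{-1})$, and $F'=(x_1,\,x_1x_2-1,\,x_1^2x_2-1)$, which does not (here $Y'_0=\{(0,0,-1)\}$). Use the finite projection $\pi_0=(y_1,y_2)$ for both. In both cases $x_1=0$, $x_1x_2=1$ has no affine solution. The homogenized system $x_1=0$, $x_1x_2-x_0^2=0$, $x_1^2x_2-(w+c)x_0^3=0$ (with $c\in\{0,1\}$) has the solution $[0:0:1]$ for every $w$. So your resultant is identically zero in both cases and cannot separate them. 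Hence the claim that it is a nonzero multiple of $P_j(0,w)$ is false. Finiteness of $\pi_0$ on $Y$ cannot repair this, because it does not make $\pi_0\circ F$ finite on $\C^n$.

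The missing idea is to deform before specializing. The paper never imposes $\pi(y)=0$. Instead it solves the square system $y_i-f_i(x)=0$, $\pi(y)=t\alpha$ over the Puiseux field $\K$, for a generic direction $\alpha$.
\begin{itemize}
\item The line meets the open set over which the fibres of $F$ are finite, so the solution set is finite and consists of genuine image points.
\item Integrality of the $y_i$ over $\C[t]$ gives every branch a finite limit in $Y_0$.
\item The lying-over argument of Lemma~\ref{lem:horizontal} and Proposition~\ref{prop:rewrite-lift} shows that every point of $Y_0$ arises as such a limit, including points outside $F(\C^n)$.
\end{itemize}
Crucially, the outputs $y$ are kept as separate unknowns and the system is homogenized jointly in $(x,y)$. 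This way $y_i=z_{n+i}/z_0$ survives along branches with $x\to\infty$. Only then are the perturbation $\widehat G_i=G_i+\varepsilon z_i^{d_i}$ and the marker form $H_{T,w}$ applied, and $t\to0$ is taken last, as $\TT_t\TT_w\TT_\varepsilon\mathscr R$. This ordering is exactly what resolves the obstacle you identify.

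Separately, the random choices must be absorbed by a majority quantifier, using $\mathsf{BPP}^{C_b\mathsf P}\subseteq\mathsf{PP}^{C_b\mathsf P}$, not a universal one. Both the input reduction and the specialization of the auxiliary parameters can err in either direction.
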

Input polynomials are represented by arithmetic circuits over $\Q$, with binary rational constants. Division gates are allowed, under the promise that every divisor is nonzero as a rational function and that every output represents a polynomial. Complexity statements use the ordinary Turing model and binary encodings of rational coefficients.

\begin{remark}[Sparse representation]
Our results also apply to polynomials given in sparse representation, with exponents encoded in binary. Indeed, repeated squaring converts such inputs into division-free arithmetic circuits in time polynomial in the sparse input size.
\end{remark}
\begin{remark}[Division Gates]
    Note that the proof also holds if circuits of input polynomials $f_i$ have division gates. If degree of each $f_i$ is polynomial in input length, then we can use \cite{strassen1973vermeidung} to remove these division gates with only a polynomial blowup. On the other hand, even if the degree of $f_i$ were exponential in the input description, then writing each $f_i = \frac{P_i}{Q}$, it is easy to see that the system $F = (f_1(x),\ldots,f_m(x))$ has an APS if and only if the system $G=(uP_1(x),\ldots,uP_m(x), uQ(x)-1)$ has an APS, where $u$ is a new variable.  
\end{remark}
\begin{remark}[Coefficients from other fields]
    Theorem~\ref{thm:aps} also holds for arithmetic circuits over number fields and finite fields, with image closure taken over the algebraic closure of the coefficient field, see Remark~\ref{rem:APS-for-other-fields}. For simplicity of exposition, throughout this work, we assume that the input polynomials have rational coefficients.
\end{remark}

Several standard applications of $\APS$ consequently inherit
improved upper bounds. Below, $k$ is any of the coefficient
fields covered above, $K=\overline{k}$, and vectors and tensors are given by their full coordinate lists.

\begin{corollary}\label{cor:aps-applications}
Theorem~\ref{thm:aps} has the following consequences.
\begin{enumerate}
    \item \textbf{Tensor border rank.}
    Given a tensor over $k$ and an integer $r$ encoded in binary, deciding whether its border rank over $K$ is at most $r$ belongs to $\CH$.

  \item \textbf{Orbit-closure problems.} For rational linear actions of products of general and special linear groups, given by arithmetic circuits in explicit matrix coordinates, deciding
\[
    w\in\overline{G\cdot v},
    \qquad\text{equivalently,}\qquad
    \overline{G\cdot w}\subseteq\overline{G\cdot v},
\]
belongs to $\CH$. This includes null-cone membership, obtained by taking $w=0$.
    \item \textbf{Hitting sets for $\overline{\mathsf{VP}}$.} Given $n,s,d$ in unary, a hitting set of size $\operatorname{poly}(n,s,d)$ for all nonzero $n$-variate polynomials over $K$ of degree at most $d$ and approximative circuit size at most $s$ can be constructed in $\mathsf{FP}^{\CH}$. Thus these hitting sets are $\CH$-explicit. Their coordinates have polynomial-length encoding; over finite fields, we allow a finite extension of polynomial degree.
\end{enumerate}
\end{corollary}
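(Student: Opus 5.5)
The plan is to reduce each of the three applications to one instance (or polynomially many instances) of Problem~\ref{prob:aps}. The reductions must produce circuits of size polynomial in the input, and then Theorem~\ref{thm:aps} applies, using closure of $\CH$ under polynomial-time reductions and $\mathsf{P}^{\CH}=\CH$.

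\textbf{Border rank.} For a tensor $T\in k^{n_1}\otimes k^{n_2}\otimes k^{n_3}$ and a binary-encoded $r$, first note that if $r\ge n_1n_2n_3$ (or above the trivial rank bound $\min_i\prod_{j\neq i}n_j$), the answer is trivially yes. Otherwise $r$ is polynomial in the input size. We then introduce variables $a_{\ell},b_{\ell},c_{\ell}$ for $\ell\le r$ and form the polynomial map
\[
F(a,b,c)=\sum_{\ell=1}^r a_\ell\otimes b_\ell\otimes c_\ell - T,
\]
with one coordinate per entry of the tensor. Border rank is at most $r$ iff $T\in\overline{\sigma_r}$, where $\sigma_r$ is the image of the sum map. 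Since translation by $T$ is a homeomorphism, this is equivalent to $0\in\overline{F(K^{N})}$. This is exactly APS, given by small (even sparse) circuits.

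\textbf{Orbit closures.} For $G=\prod GL_{n_i}\times\prod SL_{m_j}$, we parametrize $G$ as an affine variety. We use matrix variables $g_i$ together with auxiliary variables $u_i$, and take the map
\[
(g,u)\mapsto\bigl(\rho(g)v-w,\;u_i\det g_i-1,\;\det h_j-1\bigr).
\]
Here $\rho$ is given by the input circuit, and for $GL$ the inverse is avoided by the $u$-trick, since the representation is rational in $g$ and $\det^{-1}=u$. Then $w\in\overline{G\cdot v}$ iff the origin is in the closure of the image, i.e.\ APS.

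For this step I need to check carefully that appending the equality constraints does not change the answer. The image closure of the restricted map $G\to V$, $g\mapsto \rho(g)v$, lies in the closure of the combined map's image intersected with the last coordinates equal to $0$. For the converse, I would use that $G$ is closed and the projection is dominant onto it. The cleaner route is to note that $G\cong$ an affine variety, and use an explicit surjective polynomial parametrization onto an open dense subset: for $GL_n$, $(g,u)\mapsto g$ on $\{u\det g=1\}$. Then apply the remark's division trick directly, which already asserts this equivalence. The equivalence $w\in\overline{Gv}\iff\overline{Gw}\subseteq\overline{Gv}$ is standard by $G$-invariance of the closure. The null cone is the case $w=0$. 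Over finite fields, Remark~\ref{rem:APS-for-other-fields} is invoked.

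\textbf{Hitting sets.} Here I would follow the GSS19 construction verbatim, replacing their $\PSPACE$ APS oracle by the $\CH$ one. They reduce the search for a hitting set to a polynomial-length sequence of APS queries. The universal approximative circuit gives a map $\Psi$ from circuit parameters to coefficient vectors, and a candidate set $H$ fails iff some point of $\overline{\mathrm{im}\,\Psi}$ is nonzero yet vanishes on $H$. They fix coordinates of $H$ bit by bit (or element by element from a polynomial-size grid), checking at each stage via APS-type queries on a projectivized/normalized system that a good extension still exists; existence of small hitting sets is guaranteed by a dimension count. Each query is an APS instance of polynomial size, so the whole procedure is in $\mathsf{FP}^{\CH}$. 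The main obstacle is this last item: verifying that GSS's reduction of "$H$ is not hitting" to APS is a polynomial-time many-one (or Turing) reduction with polynomial-size circuits and encodings, including the finite-field extension case. I would re-derive it by expressing "exists nonzero $c\in\overline{\mathrm{im}\Psi}$ with $c(H)=0$" through the normalization trick: add a variable scaling, and require a chosen coordinate tending to $1$ over each of the polynomially many coordinate charts.
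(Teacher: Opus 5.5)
Your border-rank reduction is the paper's ``standard parametrization'' argument, and your hitting-set plan (the GSS certification criterion plus bit-by-bit search) is also the paper's route. The gap is in the orbit-closure reduction. Your displayed map $(g,u)\mapsto(\rho(g,u)v-w,\;u_i\det g_i-1,\;\det h_j-1)$ imposes the group equations only \emph{approximately}. An approximate zero needs $u_i\det g_i\to1$ and $\det h_j\to1$, not equality. Off the group, the polynomial the circuit computes (after substituting $u$ for $\det^{-1}$) is unrelated to the action, and this produces false witnesses.

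For instance, let $GL_1$ act on $K^2$ by $\lambda\cdot(a,b)=(\lambda a,\lambda^{-1}b)$, with $\lambda^{-1}$ computed by the legitimate division circuit $1/g-(g\cdot(1/g)-1)(1/g)^2$. Your substitution turns this into $u-(gu-1)u^2$. Take $v=(1,1)$ and $w=0$, and set $g=t$ and $u=u(t)$ the root of $tu^2=u+1$ with $\val_t(u)=-1$. Then $u-(gu-1)u^2=0$ and $gu-1=1/u(t)$, so the image tends to $0$. Yet the orbit is the closed hyperbola $ab=1$. Likewise, the trivial representation of $SL_2$ written as $1-(\det h-1)h_{11}^2$ has the false witness $h=\operatorname{diag}(t^{-1/2},(1+t)t^{1/2})$ for $v=1$, $w=0$.

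Your fallback sentence does not close this. The division-gate remark asserts the equivalence only for $(uP_1,\ldots,uP_m,uQ-1)$ with $F=P/Q$. There $u$ multiplies the numerators, so at an approximate zero $P/Q=uP/(uQ)\to0$ at an honest point with $Q\neq0$. It does not cover your system, and it says nothing about $SL$ factors.

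The repair is what the paper's ``denominator clearing for rational parametrizations'' amounts to: parametrize $G$ exactly from an open subset of affine space. Use the $GL_n$ factors themselves, and reach each $SL_m$ factor through the surjective morphism $GL_m\to SL_m$, $g\mapsto g\cdot\operatorname{diag}(\det(g)^{-1},1,\ldots,1)$. Write the composed orbit map as $P(g)/Q(g)$ via numerator/denominator circuits, and apply the remark to $(uP(g),uQ(g)-1)$. Since the image of $\{Q\neq0\}$ is dense in $G\cdot v-w$, this is a correct polynomial-size APS instance.

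In the hitting-set part, two details of your sketch need correcting, though neither changes the route. First, the prefix query ``a good extension exists'' is not a single APS instance. It is an existential quantifier over the remaining candidate bits in front of a co-APS certification, which is still in $\CH$ because $\CH$ absorbs polynomially bounded quantifiers. Second, normalizing by coordinate charts of the coefficient vector fails on both counts you claim. There are $\binom{n+d}{d}$ coefficients, not polynomially many. Worse, a single coefficient of the universal circuit need not have a small circuit: the coefficient of $x_1\cdots x_n$ in $\prod_i\sum_j y_{ij}x_j$ is the permanent. Normalize instead by an evaluation at an existentially quantified grid point $p\in S^n$ with $|S|>d$, rescaling diverging coefficient vectors using the cone structure. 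Alternatively, simply invoke GSS's Theorem~5.2 and Remark~5.3, as the paper does.
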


The first two consequences follow from the standard parametrizations of bounded-rank tensors and group orbits, using denominator clearing for rational parametrizations. For hitting sets, the certification criterion of Guo, Saxena, and Sinhababu~\cite[Theorem~5.2 and Remark~5.3]{GSS19} reduces failure of a candidate hitting set for their universal circuit family to $\APS$. Certification therefore belongs to $\CH$, and bitwise search over polynomial-length candidate encodings gives the stated construction, improving their $\PSPACE$ upper bound.

For our second main result, we provide a solution to the question raised in Problem~\ref{prob:int-cl}, bringing down the complexity of integral closure membership testing from $\mathsf{EXPSPACE}$ to $\CH$.

\begin{theorem}\label{thm:icm}
Integral closure membership testing belongs to $\CH$. More precisely, non-integrality has a deterministic polynomial-time many-one reduction to $\APS$.
\end{theorem}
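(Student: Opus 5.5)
The plan is to encode non-integrality as an approximate satisfiability instance in which an auxiliary variable $y$ plays the role of $1/f$. The input polynomials are raised to a large exponent $D$ so that polynomial, rather than merely bounded, growth of $f_1,\dots,f_m$ is already "small" compared to $f$. Concretely, from circuits for $f,f_1,\dots,f_m$ I would output the system
\[
G=\bigl(y,\; y f_1^{D},\;\dots,\; y f_m^{D},\; y f-1\bigr)\in\Q[x_1,\dots,x_n,y]^{m+2},
\]
with $D$ an explicit bound that is at most singly exponential in the input size. The powers $f_i^{D}$ are computed by repeated squaring, so $G$ has a circuit of polynomial size and the reduction is deterministic polynomial time. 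The claim to prove is that $f$ is \emph{not} integral over $A=\C[f_1,\dots,f_m]$ if and only if $0\in\overline{G(\C^{n+1})}$. Membership in $\CH$ then follows from Theorem~\ref{thm:aps}, since $\CH$ is closed under complement and polynomial-time many-one reductions.

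\emph{Integral $\Rightarrow$ no APS.} Here I need an effective degree bound. If $f$ is integral over $A$, I want a monic relation
\[
f^r+a_{r-1}f^{r-1}+\dots+a_0=0, \qquad a_j=p_j(f_1,\dots,f_m),
\]
in which $\deg p_j\le (r-j)D'$ for a single explicit $D'<D$. I would obtain it from the minimal annihilating polynomial $P(z_1,\dots,z_m,w)$ of $(f_1,\dots,f_m,f)$, whose total degree is at most $\deg f\cdot\prod_i\deg f_i$ by Perron's bound. Integrality forces the leading coefficient of $P$ in $w$ to be a nonzero constant. After dividing by it, comparing total degrees gives $\deg p_j\le \deg P\le(r-j)\deg P$, so $D'=\deg P$ works, and $\deg P$ is singly exponential in the circuit size.

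Now suppose a sequence of points has $G\to 0$. Then $f\to\infty$ and $|f_i|^{D}=o(|f|)$. Hence each $|a_j|\le C\max(1,|F|)^{(r-j)D'}$ is $o(|f|^{r-j})$, so the relation gives $|f|^r=o(|f|^r)$, a contradiction.

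\emph{Non-integral $\Rightarrow$ APS.} Here I use the valuative description of integral closure: the integral closure of $A$ in $\C(x)$ is the intersection of the valuation rings containing $A$. So non-integrality yields a valuation $v$ with $v(f_i)\ge 0$ and $v(f)<0$. I then need to realize such a valuation by an analytic curve, i.e.\ a Puiseux arc $x(t)$ with $f_i(x(t))$ bounded and $f(x(t))\to\infty$ as $t\to 0$. Along that arc, $y=1/f\to 0$, $y f_i^{D}\to 0$ and $yf-1=0$, which is exactly an approximate solution of $G$.

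To produce the arc I would argue geometrically. Let $Z=\overline{(F,f)(\C^n)}\subseteq\C^{m+1}$. Non-integrality means the projection $Z\to\C^m$ is not finite, so some point $b\in\C^m$ has a point $(b,\infty)$ in the closure of $Z$ in $\C^m\times\PPj^1$. Since the image of $(F,f)$ is constructible, hence dense in $Z$, that point is a limit along a curve through the image. Lifting the curve to $\C^n$ via a Puiseux series or a curve-selection argument gives the desired $x(t)$.

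I expect this last step (and checking that the degree bound survives when $f$ is integral but the relation has nonconstant lower terms) to be the main obstacle. The cleanest route may be to apply Noether normalization to $A[f]$ over a polynomial subring of $A$, so that finiteness is tested by a monic-leading-coefficient condition. The converse reduction mentioned in the abstract is not needed for this theorem.
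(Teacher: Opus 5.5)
\paragraph{Gap: the monic relation when the generators are dependent.}
Your reduction is the paper's $\Psi_M$, with $u=y$ and $M=D$. Your growth estimate for ``integral $\Rightarrow$ no APS'' is a correct Euclidean version of the paper's valuation computation, but only once you have a monic relation $f^r+\sum_j p_j(f_1,\dots,f_m)f^j=0$ with $\deg p_j\le D$. Your way of producing that relation breaks down whenever $f_1,\dots,f_m$ are algebraically dependent, which the problem allows (e.g.\ $m>n$).

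\begin{itemize}
\item If $\rho=\operatorname{trdeg}(f_1,\dots,f_m)<m$, the prime $\Ann(f_1,\dots,f_m,f)$ has height $m+1-\rho\ge 2$. So it is not principal, there is no ``minimal annihilating polynomial,'' and Perron's bound in the form you quote does not apply.
\item The step ``integrality forces a constant leading coefficient in $w$'' uses that $A$ is integrally closed (a polynomial ring, via Gauss's lemma). This fails for dependent generators. For $f_1=x^2$, $f_2=x^3$, $f=x$, the annihilator ideal is $(z_1-w^2,\,z_2-w^3)$. An annihilator of minimal positive $w$-degree is $z_1w-z_2$, with leading coefficient $z_1$, even though $f$ is integral. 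Here $A=\C[x^2,x^3]$ is not normal.
\end{itemize}

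The Noether normalization you mention at the end is therefore the missing step, not an optional cleanup:
\begin{itemize}
\item Take generic linear combinations $g_1,\dots,g_\rho$ of the $f_i$ so that $A$ is integral over the polynomial ring $\C[g_1,\dots,g_\rho]$. Then $f$ is integral over $\C[g]$.
\item $\Ann(g_1,\dots,g_\rho,f)$ is then principal, with generator monic in $w$ up to a scalar, and its degree is singly exponentially bounded (Perron/Kayal).
\item Substituting $g_j=\sum_i c_{ji}f_i$ gives a monic relation over $A$ with the same degree bound. Since $|g_j|\le C\max_i|f_i|$, your estimate still goes through.
\item Because $\rho$ is not known in advance, fix $D$ using $\rho\le n$.
\end{itemize}
This is exactly the content of Theorem~\ref{thm:exp-deg-bd-annihilator}.

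\paragraph{The converse direction.}
Your properness argument is a valid alternative to the paper's route, which goes through the pointwise annihilators $Q_c$, Theorem~\ref{thm:pointwise-global-equivalence}, and Corollary~\ref{cor:boundedness}. Non-integrality means $Z\to Y$ is not finite, hence not proper. So the closure of $Z$ in $\C^m\times\PPj^1$ contains a point $(b,\infty)$.

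The step you flag as the main obstacle is unnecessary. Drop the detour through abstract valuation rings, since realizing an arbitrary valuation by an arc is not automatic. Instead, note that the Zariski closure of the constructible set $(F,f)(\C^n)$ in $\C^m\times\PPj^1$ equals its Euclidean closure. Hence there are points $x_k$ with $F(x_k)\to b$ and $f(x_k)\to\infty$. Putting $y_k=1/f(x_k)$ gives $G(x_k,y_k)\to 0$ directly, with no curve selection or lifting.

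This gives a shorter, certificate-free proof of this direction. The paper's route, in exchange, produces explicit valuative witnesses and the global certificate it reuses for the degree bound.
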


\begin{remark}[Other coefficient fields]
Theorem~\ref{thm:icm} also holds over the number fields and finite fields covered by our extension of Theorem~\ref{thm:aps}, with integrality taken over $\overline{k}[f_1,\ldots,f_m]$, where $k$ is the base field, see Remark~\ref{rem:icm-other-fields}.
\end{remark}

\begin{corollary}\label{cor:integrality-applications}
Under the input and coefficient-field conventions above, the following problems belong to $\CH$, where $K$ denotes the algebraic closure of the coefficient field:
\begin{enumerate}
    \item \textbf{Integral closure of ideals.} Given polynomials $h,g_1,\ldots,g_m$, decide whether $h$ belongs to the integral closure of the ideal $(g_1,\ldots,g_m)\subseteq K[x_1,\ldots,x_n]$.

    \item \textbf{Finiteness of polynomial maps.}     Given a polynomial map $F=(f_1,\ldots,f_m):K^n\to K^m$, decide whether $F$ is finite, or equivalently, proper.
\end{enumerate}
\end{corollary}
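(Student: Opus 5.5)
Both items reduce by polynomial-time constructions to integral closure membership (Problem~\ref{prob:int-cl}), which lies in $\CH$ by Theorem~\ref{thm:icm} and the remark extending it to other coefficient fields. The only extra ingredient is that $\CH$ is closed under polynomial-time Turing reductions. This holds because $\mathsf{P}^{C_k\mathsf{P}}\subseteq C_{k+1}\mathsf{P}$, so $\mathsf{P}^{\CH}=\CH$; in particular $\CH$ is closed under complement and under polynomially many conjunctive queries.

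\textbf{Integral closure of ideals.} We use the Rees algebra. Let $R=K[x_1,\ldots,x_n]$, $I=(g_1,\ldots,g_m)$, and let $t$ be a new variable. The Rees algebra is $R[It]=K[x_1,\ldots,x_n,g_1t,\ldots,g_mt]\subseteq R[t]$, a finitely generated subalgebra of a polynomial ring in $n+1$ variables. We invoke the standard fact that the integral closure of $R[It]$ in $R[t]$ is the graded ring $\bigoplus_{k\ge 0}\overline{I^k}\,t^k$. It follows that $h\in\overline I$ if and only if $ht$ is integral over $R[It]$. We sketch the direct argument as well. Suppose $h^r+a_1h^{r-1}+\cdots+a_r=0$ with $a_i\in I^i$. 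Multiplying by $t^r$ gives a monic equation for $ht$ with coefficients $a_it^i\in R[It]$. Conversely, take a monic equation for $ht$ over $R[It]$ and extract the $t^r$-graded component. This yields an equation of integral dependence of $h$ on $I$.

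Hence we output the instance $f=ht$ with generators $x_1,\ldots,x_n,g_1t,\ldots,g_mt$. Each of these is given by a circuit of size linear in the input circuits, so this is a polynomial-time many-one reduction to Problem~\ref{prob:int-cl}.

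\textbf{Finiteness of polynomial maps.} The map $F$ is finite exactly when $K[x_1,\ldots,x_n]$ is a finite module over $A=K[f_1,\ldots,f_m]$. Since $K[x]$ is generated as an $A$-algebra by $x_1,\ldots,x_n$, this holds if and only if each $x_i$ is integral over $A$. We therefore make $n$ queries of the form ``is $x_i$ integral over $A$?'' to the $\CH$ procedure of Theorem~\ref{thm:icm} and accept if and only if all of them succeed. This is a conjunctive truth-table reduction, so the problem lies in $\CH$ by the closure property above.

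The equivalence of finiteness and properness is the classical characterization: a morphism of affine varieties is finite if and only if it is proper, meaning universally closed, or, over $\C$, proper in the Euclidean topology. We simply cite this fact; it involves no complexity-theoretic content.

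The step needing most care is the Rees-algebra equivalence, specifically the grading argument in the converse direction. It requires the monic equation over $R[It]$ to be replaceable by a homogeneous one. This is achieved by taking the degree-$r$ component in $t$ of the relation $(ht)^r+\sum_j b_j(ht)^{r-j}=0$. Since $ht$ is homogeneous of degree $1$, that component is the relation with each $b_j$ replaced by its $t^j$-component, which lies in $I^jt^j$.
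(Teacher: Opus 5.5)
Your proposal is correct and follows the paper's own argument. For (i) you use the Rees-algebra equivalence ($h\in\overline{I}$ iff $ht$ is integral over $K[x_1,\ldots,x_n,g_1t,\ldots,g_mt]$, via the degree-one graded part), and for (ii) you use that $F$ is finite exactly when every $x_j$ is integral over $K[f_1,\ldots,f_m]$, with finiteness equivalent to properness for affine morphisms. Your explicit $t$-grading argument for the converse in (i) and your observation that the $n$ integrality queries in (ii) form a conjunctive reduction, under which $\CH$ is closed, supply details the paper leaves implicit.
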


The first consequence follows from the standard equivalence between integrality of $h$ over the ideal ${(g_1,\ldots,g_m)}$ and the integrality of $th$ over $K[x_1,\ldots,x_n,tg_1,\ldots,tg_m]$, where $t$ is a new variable. To see this consider the degree 1 graded part, i.e., $k=1$ 
in the proof of ~\cite[Proposition~5.2.1]{HS06} with $R=K[x_1,\ldots, x_n]$, and $I=(g_1,\ldots, g_m)$. For the second, $F$ is finite precisely when every coordinate $x_j$ is integral over $K[f_1,\ldots,f_m]$; finiteness and properness are equivalent for affine morphisms \cite[Tag~01WN]{Stacks26}.

Moreover, we also show that Integral Closure Membership Testing is equivalent to APS under polynomial-time reductions. \begin{theorem}\label{thm:aps-icm-equal} 
    Under polynomial time many-one reductions, non-existence of $\APS$ is equivalent to Integral Closure Membership Testing.
\end{theorem}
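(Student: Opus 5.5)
The plan is to establish the two polynomial-time many-one reductions separately. One direction, from Integral Closure Membership Testing to non-existence of $\APS$, is exactly the content of Theorem~\ref{thm:icm}. That theorem reduces non-integrality to $\APS$, and the same map reduces integrality to the complement of $\APS$. What remains is a deterministic polynomial-time map sending an instance $F=(f_1,\ldots,f_m)$ of $\APS$ to an integrality instance $(f;g_1,\ldots,g_\ell)$ such that $0\notin\overline{F(\C^n)}$ if and only if $f$ is integral over $\C[g_1,\ldots,g_\ell]$.

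First I reformulate non-existence of $\APS$ algebraically. Put $R=\C[f_1,\ldots,f_m]\cong \C[Y]$, the coordinate ring of $Y=\overline{F(\C^n)}$, and let $I=(f_1,\ldots,f_m)R$. Then $I$ is the ideal of the point $0$ restricted to $Y$. By the Nullstellensatz on $Y$, $0\notin Y$ if and only if $I=R$, that is, $1\in I$. Equivalently, there is a polynomial $P$ with $P(0)=1$ and $P(f_1,\ldots,f_m)=0$, which matches the annihilator characterization of Guo, Saxena, and Sinhababu.

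Next I use the Rees-algebra trick already invoked for Corollary~\ref{cor:integrality-applications}. Introduce a new variable $t$ and output the instance
\[
f=t,\qquad (g_1,\ldots,g_{2m})=(f_1,\ldots,f_m,\,tf_1,\ldots,tf_m)\subseteq \Q[x_1,\ldots,x_n,t].
\]
Its circuits are obtained from those of $F$ with $m$ extra multiplication gates. Then $\C[g_1,\ldots,g_{2m}]=R[It]$ is the Rees algebra of $I$, and $t$ is transcendental over $R$. I claim that $t$ is integral over $R[It]$ if and only if $1\in I$.

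If $1\in I$, then $t=1\cdot t\in R[It]$, so $t$ is trivially integral. Conversely, suppose $t^r+a_{r-1}t^{r-1}+\cdots+a_0=0$ with $a_j\in R[It]$. Since $t$ is transcendental over $R$, the ring $R[t]$ is graded by $t$-degree and $R[It]=\bigoplus_k I^k t^k$ is a graded subring. Taking the degree-$r$ component, I may replace each $a_j$ by its homogeneous part $b_{r-j}t^{r-j}$ with $b_{r-j}\in I^{r-j}$. Dividing by $t^r$ gives $1+b_1+\cdots+b_r=0$ in $R$. Every $b_k$ with $k\ge 1$ lies in $I$, so $1\in I$.

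This is the degree-one graded argument of \cite[Proposition~5.2.1]{HS06} applied to the non-polynomial ring $R$. The step needing care is that this argument is valid for an arbitrary commutative ring $R$ and only uses that $t$ is an indeterminate over $R$. That holds here because $t$ is a fresh variable and $R\subseteq\C[x]$. Combining the pieces, $0\notin\overline{F(\C^n)}$ if and only if $1\in I$ if and only if $t$ is integral over $\C[g_1,\ldots,g_{2m}]$. Together with Theorem~\ref{thm:icm}, this gives the claimed equivalence.
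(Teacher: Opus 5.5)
Your proof is correct. The direction from integrality to non-existence of $\APS$ is the same as the paper's: the map $\Psi_M$ of Theorem~\ref{thm:power-reduction}, packaged as Theorem~\ref{thm:icm}. For the other direction the paper uses a different instance. Lemma~\ref{lem:aps-to-nonintegrality} asks whether a fresh $z$ is integral over $K[zf_1,\ldots,zf_m]$, with the $f_i$ themselves not among the generators. Its forward direction multiplies a Guo--Saxena--Sinhababu annihilator $H=1+\sum_{0<|\alpha|\le D}h_\alpha Y^\alpha$ by $z^D$ to obtain a monic relation. Its converse reads off the $z^d$-coefficient of a monic relation, which is the same top-graded-component argument you run in $R[It]$.

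Including $f_1,\ldots,f_m$ among the generators changes what the forward direction needs. You get the full Rees algebra, so $1\in I$ makes $t=\sum_i r_i\,(tf_i)$ an actual element of $\C[f_1,\ldots,f_m,tf_1,\ldots,tf_m]$. That direction then needs only the weak Nullstellensatz on $Y$, not an explicit annihilator. Your instance also shows that non-existence of $\APS$ is equivalent to membership of $t$ in that subalgebra, not merely integrality. This fails for the paper's cone algebra, where $z\in K[zf_1,\ldots,zf_m]$ forces $1$ to be a constant linear combination of the $f_i$.

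The paper's instance is slightly leaner ($m$ generators rather than $2m$) and ties the degree of the monic relation directly to the annihilator degree. Neither difference matters for a polynomial-time many-one reduction.
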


\begin{remark}
    The above equivalence works over all algebraically closed fields.
\end{remark}

\subsection{Related work}

\paragraph{Independent and concurrent work.}
Independently and concurrently, Balaji, Shirmohammadi, Tavenas, and Worrell~\cite{BSTW26} obtained the Counting Hierarchy upper bound for $\APS$ established in \Cref{thm:aps}. Both proofs use generalized characteristic polynomials and the resultant circuits of Andrews, Garg, and Schost~\cite{AGS26}, but their geometric reductions differ. \cite{BSTW26} first reduce to a minimally dependent tuple and prescribe generic scalar multiples of a single infinitesimal parameter as the values of all but one coordinate. Their key argument uses the lowest-degree homogeneous part of an annihilating relation to show that, for a suitable choice of the remaining coordinate, a positive APS instance admits a solution on which that coordinate also tends to zero. They detect the existence of such a solution by comparing resultant valuations.

Our geometric reduction builds on ideas of Lehmkuhl and Lickteig~\cite{LL89} from their study of tensor border rank. We use finite Noether normalization of the image closure, reducing origin membership to membership in the finite fiber over zero. We then show that every point of this fiber is approached by image points whose projections lie on one generic line through zero. Crucially, finiteness of the normalization also ensures that the image coordinates of every solution branch have finite limits. Thus, the possible limits are exactly the points of this finite fiber. Applying a generic linear form allows us to represent these points by the roots of a univariate polynomial, with zero occurring as a root exactly when the original instance admits an approximate solution. We obtain this polynomial through the resultant construction of~\cite[Section~4]{AGS26}.

We also learned through personal communication that Andrews, Garg, Oliveira, and Schost~\cite{AGOS26} independently obtained the same upper bound. In concurrent work, Bhargav, Curticapean, Jindal, and Shukla~\cite{ETRinCH26a} establish
$\#\mathsf{P}$-hardness of computing a bit of the multivariate resultant.

\paragraph{Strong Approximative Nullstellensatz and its connection to integrality.}
Guo, Saxena, and Sinhababu characterize non-existence of APS by an annihilator with non-zero constant term and use bounded-degree relations and succinct linear algebra to obtain their $\PSPACE$ bound~\cite{GSS19}. This is very similar in spirit to the Weak Hilbert's Nullstellensatz. Inspired by this, Agarwal, Dutta, Pandey, and Shukla develop a strong approximative Nullstellensatz for containment of approximate vanishing sets~\cite{ADPS26}. Their criterion involves integral dependence over the maximal ideal of a \emph{local} subalgebra of $\C(f_1,\ldots, f_m)$, thus, allowing the use of elements from $\C(f_1,\ldots,f_m)$. While similar in nature, the problem is distinct from checking integrality over $\C[f_1,\ldots,f_m]$. For example, given $f=x^2y$ and $g=xy-1$, it can be checked that $h=x$ is not integral over $\C[f,g]$, but $h = \frac{f}{1+g}$, and hence lies in the localized subalgebra used in \cite{ADPS26}.

\paragraph{Polynomial Subalgebra Membership Testing.}
Kayser showed that Subalgebra Membership Testing is $\mathsf{EXPSPACE}$-complete, and that its homogeneous restriction is $\PSPACE$-complete~\cite{Kay25}. However similar, subalgebra membership is not the same as integral closure. For example, $x$ is integral over $\C[x^2]$ but does not belong to it.

\paragraph{Resultants, $\mathsf{ETR}$, and the Counting Hierarchy.}
Andrews and Wigderson gave constant-depth arithmetic algorithms for univariate resultants, greatest common divisors, and related problems~\cite{AW24}. This was later extended by Andrews, Garg, and Schost to multivariate resultants and they used it to place HN in the Counting Hierarchy~\cite{AGS26}, using generalized characteristic polynomials originating with Ierardi and Canny~\cite{Ier89,Can90} to treat affine systems. Recently, \cite{ETRinCH26a, ETRinCH26b} have shown that the complexity class Existential Theory of Reals ($\mathsf{ETR}$) also lies in the Counting Hierarchy, using the method of multivariate resultants. Additionally, \cite{ETRinCH26a} also show that computing a bit of this multivariate resultant is $\#\mathsf{P}$-Hard. 

\paragraph{Annihilator degree bounds.}
Degree bounds for annihilating polynomials were first described by Perron~\cite{Perron27}, and this line of work was further extended in \cite{Kay09, BeeckenMittmannSaxena13} that showed singly exponential bounds for the minimal annihilator. In particular, these bounds were used by Guo, Saxena, and Sinhababu~\cite{GSS19} to obtain an annihilator of singly exponential degree, which they then use to decide non-existence of $\APS$ in $\PSPACE$. In the same vein, we also use these bounds to establish the singly exponential degree bound for our integral closure membership testing certificate. 

\section{Proof Ideas}
We first describe our methods for proving \cref{thm:aps}.
\subsection{Approximate Polynomial Satisfiability in Counting Hierarchy} \label{sec:aps_overview}

Let
\[
  F=(f_1,\ldots,f_m):\mathbb C^n\longrightarrow\mathbb C^m,
  \qquad
  Y=\overline{F(\mathbb C^n)}.
\]
We want to decide whether \(0\in Y\). Write \(r=\dim Y\), and first assume \(r\geq1\). Without loss of generality, we can assume that $m=r+1$ and $n=r$ using the standard tricks which preserve existence of APS such as the random reduction theorem~\cite{GSS19} and the random hyperplane intersection lemma~\cite{AGS26}, see Lemma \ref{lem:rewrite-random-sections} and \cref{thm:rewrite-random-projection} for precise statements. So for the rest of this section we assume $n=\dim(Y)$, and $m=n+1$. 

\begin{figure}[t]
  \centering
  \includegraphics[width=\linewidth]{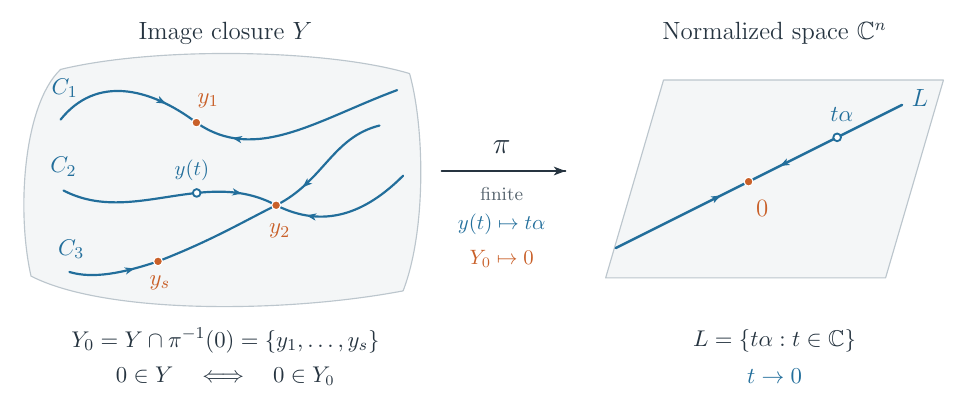}
  \caption{A finite Noether normalization $\pi:Y\to\mathbb \C^n$ reduces testing $0\in Y$ to testing membership in the finite set $Y_0=Y\cap\pi^{-1}(0)$ (orange points). For a generic line $L=\{t\alpha:t\in \C\}$ through the origin, the irreducible components $C_i$ of $Y\cap\pi^{-1}(L)$ are curves mapping onto $L$. Every point of $Y_0$ lies on at least one such curve. It can be approached along a branch with $\pi(y(t))=t\alpha$, using actual image points away from the limit. The parameter may require a finite algebraic extension of $\C(t)$. Arrows indicate specialization as $t\to0$. 
  }
  \label{fig:normalization-fiber}
\end{figure}
On a very high level, our proof has these main steps: (a) Initially, we need to check $0 \in Y = \overline{F(\C^n)}$, where $Y$ might be an infinite set. However, we show that we can restrict ourselves to checking membership of origin in a \emph{finite} subset of $Y$. We do this using the Noether  normalization which gives a linear map $\pi$ with the property that only a finite subset $Y_0 \subseteq Y$ maps to $0$ in the normalized space, that is, $\pi(Y_0) = \{0\}$. (b) We then show that every point \(y_0 \in Y_0\) can be approached by those points in the image $F(\C^n)$ which project to the same generic line $L$ in the normalizing space under the map $\pi$ (see \cref{fig:normalization-fiber} for a visualization). Parameterizing the line $L$ by $t$, we get that all solutions $x(t)$ of $(\pi\circ F)x(t) = t \alpha $ (where $\alpha$ is the direction of line $L$) satisfies $\lim_{t \to 0} F(x(t))=y_0$ for some $y_0 \in  Y_0$. Thus, our task of checking if origin is in the closure of the image is reduced to checking if $\lim_{t \to 0} F(x(t))=0 $ for some $x(t)$ with $(\pi\circ F)x(t) = t \alpha $. Moreover, Noether normalization and fiber-dimension theorem guarantees that such $x(t)$ will only be finitely many. (c) Next, we show that in order to check if $\lim_{t \to 0} F(x(t))=0 $ for some solution $x(t)$, it suffices to check if  $\lim_{t \rightarrow 0} (\Sigma_i \lambda_i f_i(x(t))) = 0$ for some solution $x(t)$, for a generic choice of $\lambda \in \C^m$. (d) We then define a univariate polynomial with roots $\Sigma_i \lambda_i f_i(x(t))$ corresponding to the solutions $x(t)$, which has $0$ as a limiting root $\Sigma_i \lambda_i f_i(x(t))$ if and only if $0 \in Y$. (e) This allows us to use the multivariate resultant method of \cite{AGS26} - with slight modifications - to check if this affine root indeed exists. 

\vspace{1em}

We now describe each of these steps in more details. 
\vspace{-1em}
\paragraph{1. Restricting to a finite subset of the image closure.}
Linear Noether normalization, is a standard tool in algebraic geometry which gives a linear map $\pi: \C^m \rightarrow \C^n$ such that
\[
  \pi(y)=
  \left(
    \sum_{i=1}^m a_{1i}y_i,\ldots,
    \sum_{i=1}^m a_{ni}y_i
  \right)
\]
whose restriction to \(Y\) is finite and surjective, so $\pi|_Y$ takes Zariski closed sets to closed sets, see Lemma~\ref{lem:rewrite-linear-nn} for a precise statement. In particular, $Y_0=\{y\in Y:\pi(y)=0\}$ is finite. Since \(\pi(0)=0\),
\[
  0\in Y\quad\Longleftrightarrow\quad 0\in Y_0.
\]

\paragraph{2. Constructing a parametrized system of equations with finitely many solutions.}
Note that some of these points in $Y_0$ will originate from $\C^n$ under the polynomial map $F$, while the others will be points purely in $Y \setminus F(\C^n)$, and hence have no preimage. If $0$ was indeed in $F(\C^n)$, then this corresponds to an actual root of the system, and the problem reduces to Hilbert's Nullstellensatz, which is known to be in $\CH$~\cite{AGS26}. Hence, we assume that $0$ has no preimage. 

We then show that every point \(y_0 \in Y_0\) can be approached by those points in the image $F(\C^n)$ which project to the same generic line $L\subset \C^n$ in the normalizing space under the map $\pi$. For this, choose a generic direction \(\alpha\in\mathbb C^n\) corresponding to a line $L\subset \C^n$ through the origin. Identify the one dimensional space $\C$ with $L$ via $t\mapsto \alpha t$. Then, $\pi^{-1}(L)$ can be split into irreducible curves, each of which is surjective on $L$, see Lemma~\ref{lem:horizontal}. Since $0 \in L$, any point of $Y_0$ necessarily lies on one of these curves, see \cref{fig:normalization-fiber}.

By another standard consequence of linear Noether normalization (Lemma~\ref{lem:rewrite-linear-nn}) the coordinates $y_i$ of $Y$ are integral elements over $\C[u_1,\ldots, u_n]$,
where $\pi(y)=(u_1,\ldots, u_n)$, i.e., $y_i$ satisfy monic polynomials equations over $\C[u_1,\ldots, u_n]$. Replacing $u_i$ by $\alpha_it$ makes $y_i(t)$ integral over $\C[t]$. Hence, no $y_i(t)$ can ``diverge'' at $t=0$. In other words, the valuation of each $y_i(t) \geq 0$, see Section~\ref{sec:prelim-aps-geometry}.

The connection to the approximate solutions can now be viewed as follows. Since the image of $F$ and its domain have the same dimension, the fiber-dimension theorem (see \cref{thm:dominance-function-fields}) gives a Zariski open set $U\subset Y$
such that for each $y\in U$,
the fiber set $\{x\in\C^n\ :\  F(x)=y\}$ is finite.
Since $\pi$ is surjective, there is a Zariski open set $V\subset\C^n$ such that $V = \A^n \setminus \pi(Y \setminus U)$, and $\pi^{-1}(V) \cap Y \subseteq U$. A general direction $\alpha$ will lie in $V$ with high probability. Hence, for such an $\alpha\in V$, the line $L$ is ``almost'' contained in $V$, i.e., $L\cap V$ misses only finitely many points of $L$. Thus every point \(y_0 \in Y_0\) can be approached by actual outputs \(F(x(t))\) satisfying
\begin{equation}\label{eq:idea-line}
  \sum_{i=1}^m a_{ji}f_i(x(t))=t\alpha_j,
  \qquad j=1,\ldots,n,
\end{equation}
along an algebraic branch $x(t)\in \overline{\C(t)}^n$ such that 
$F(x(t))\in U$.
Hence, for every \(y_0\in Y_0\), there is an algebraic branch \(x(t)\) with
\[
  F(x(t))\longrightarrow y_0,
  \qquad
  \pi(F(x(t)))=t\alpha.
\]

Note that the output paths \(F(x(t))\) may differ for different points, and only their projections lie on the fixed line \(t\alpha\). However, this allows us to search for the desired ``limiting output'' using the exact equations \eqref{eq:idea-line}. Concretely, we work with the following system of polynomials:
\begin{equation}\label{eq:idea-finite}
  \begin{cases}
    f_i(x)-y_i=0
      & 1\leq i\leq m=n+1,\\
    \displaystyle\sum_{i=1}^m a_{ji}y_i-t\alpha_j=0
      & 1\leq j\leq n.
  \end{cases}
\end{equation}
By the above discussion, this system has a finite nonempty set of solutions, and we have $2n+1$ equations in $2n+1$ variables $x_1,\ldots,x_n,y_1,\ldots,y_{m}$.
Furthermore, every solution of \eqref{eq:idea-finite} has a finite limit in the $y$-coordinates, ($y$ coordinates are integral over $\C[t]$ as discussed) and that limit belongs to $Y_0$. Since $Y_0$ is covered exactly by these limits, the possible output limits are \emph{exactly} the points of \(Y_0\). Thus, we can restrict our attention to identifying if $0$ occurs as a limit, see Section~\ref{sec:geometry-rewrite} for a detailed treatment. In particular, notice that we need to look at the algebraic solutions using Puiseux
series and take the projective closure of a curve defined by Equation \eqref{eq:idea-finite} on the graph associated to the polynomial system $F$ to ensure that the ideas we describe extend to homogeneous systems required by the multivariate resultant.

\paragraph{3. Constructing the univariate polynomial associated to $Y_0$.}
Rename the $x$ coordinates to $z_1,\ldots,z_n$ and $y$-coordinates 
as $z_{n+1},\ldots, z_{n+m}$ in 
Equation~\eqref{eq:idea-finite}.
  Since $Y_0$ is finite, we can always choose \(\lambda\in\mathbb C^m\) such that
\(\lambda\cdot y\ne0\) for every \emph{non-zero} \(y  \in Y_0\), and define
\[
  \ell(z)=\sum_{i=1}^m\lambda_i z_{n+i}
      =\sum_{i=1}^m\lambda_i f_i(z_1,\ldots, z_n).
\]
This gives us the following equivalence:
\begin{equation}\label{eq:0inY-lim_h(z)is0}
    0\in Y
  \quad\Longleftrightarrow\quad
  \text{some solution of \eqref{eq:idea-finite} satisfies }
  \lim_{t\to0}\ell(z(t))=0.
\end{equation}
 Using the above defined 
linear form $\ell$, construct the 
polynomial
\[Q_{Y_0,\ell}(T)\coloneqq \prod\limits_{y\in Y_0}(T-\ell(y)).\]
So $0\in Y_0$ exactly when $Q_{Y_0,\ell}(0)=0$.
Recall that each $y\in Y_0$ 
is obtained as a limit of some 
solution $z_y(t)$ of \eqref{eq:idea-finite}, and each 
solution $z(t)$ of
\eqref{eq:idea-finite} approaches a point in $Y_0$ in $z_{n+i}$ coordinates.
So the polynomial 
\[P_\ell(T,t)=\prod
\limits_{z(t)\text{ a solution to \eqref{eq:idea-finite}}}(T-\ell(z(t)))\]
has the same zeros at $t=0$ as $Q_{\ell,Y_0}(T)$.

The observation that
$P_\ell(T,t)$ is
obtained as a product of evaluations of a linear polynomial on the solutions of a 
polynomial system \eqref{eq:idea-finite} intuitively prompts us towards using resultants. Precisely,
the \emph{Poisson formula} \cite[Theorem 4.3]{AGS26} allows us to evaluate the new polynomial on roots of the rest of the polynomials.

\paragraph{4. Using the multivariate resultant.} 
Let $Z$ be the solutions to the system \eqref{eq:idea-finite} over $\overline{\C(t)}$. Note that this system need not be homogeneous. On the other hand, the multivariate resultant is only defined for homogeneous systems as discussed in \cite{AGS26}. Homogenizing the system \eqref{eq:idea-finite} considered as polynomials over $\C(t)$ can introduce positive dimensional components at infinity that are not closures of their affine solutions. In \cite{AGS26} the authors tackle exactly this issue by using the generalized characteristic polynomials introduced by Canny-Ierardi \cite{Can90,Ier89} which uses a perturbation of these polynomials. In general, let $G_1,\ldots,G_N$ be the homogenizations of \eqref{eq:idea-finite}, with 
$N=2n+1$ and $d_i$ is the degree of $G_i$ in $z_0,\ldots, z_N$ variables over $\C(t)$. Note that we are only homogenizing with respect to the $z$ variables. Introduce $\varepsilon$ and replace them by $\widehat G_i=G_i+\varepsilon z_i^{d_i}$. 
This removes the positive dimensional component but some finite set of points at infinity might still survive the Canny-Ierardi perturbation, for example $F=(x_1+x_2,x_1+x_2-1)$ has no APS, but the homogenized system $F=(x_1+x_2,x_1+x_2-x_0)$ has a root $[0:1:-1]$.
Since we are concerned with the solutions to the affine system \eqref{eq:idea-finite} over $\C(t)$,
we introduce a new parameter $w$, that will track the points at infinity. So similarly to \cite{AGS26} define
\[
 H_{T,w}=T z_0^2-z_0\ell(z)+\sum_{j=1}^Nw^jz_j^2,
\]
be a quadratic form over $\C(T,w)$. 
At an affine point $z$
with $z_0=1$, and $w=0$ $H_{T,0}(z)=T-\ell(z)$,
which is exactly the form of the polynomial $P_\ell$.
So taking the multivariate resultant of $G_i$ and $H_{w,T}$ should be connected to $P_\ell$, and so to $Q_{Y_0,\ell}$.
Let
\begin{equation}\label{eq:overview-resultant}
 \mathscr R(\varepsilon,t,w,T)=
 \Res_{z_0,\ldots,z_N}
       (\widehat G_1,\ldots,\widehat G_N,H_{T,w}).
\end{equation} 
 We now use the finite limit set and factorization statements in~\cite[Lemmas~4.6--4.7]{AGS26}. The limit set contains every isolated point of the original projective zero set, so every point of \(Z\). Its affine points are precisely the set $Z$. Thus, we get \[
  \widetilde{P_\ell}(T,t):=
  \operatorname{TC}_w\operatorname{TC}_{\varepsilon}\mathscr R
  =
  a(t)\prod_{z\in Z}
            (T-\ell(z(t)))^{\mu_z},
  \qquad a(t)\in\C[t],\quad\mu_z>0.
\]
Here $\TT_v(\cdot)$ is the 
least degree coefficient of 
a polynomial in variable $v$.
The polynomial $P_\ell$ and 
$\widetilde{P_\ell}$ differ
by a non-zero factor $a(t)$.
Thus, writing $a(t) = t^k v(t)$ such that $v(0)\neq 0$, we get  $R_\ell(T)\coloneqq \TT_t \widetilde{P_\ell}(T,t)=cP_\ell(T,0)$ for some \(c\in\mathbb C^\times\). Hence, by using multivariate resultants we construct a polyomial $R_\ell$ that has same zeros as $Q_{Y_0,\ell}$, see Section~\ref{sec:rewrite-resultant}, and Section~\ref{sec:same-root} for details.

\paragraph{5. Perform the resulting coefficient test in CH.}
The system of homogeneous polynomials $\widehat G_i, H_{T,w}$ has polynomial size. By \cite{AGS26} we can access any specified coefficient bit of its resultant in $\CH$, hence we do not need to write out the entire resultant.  

After clearing denominators, the resultant has degree and integer coefficient bit lengths bounded by $2^{\poly(L)}$, where $L$ is the input length. Thus both monomial exponents and positions within a coefficient's binary representation have polynomial-length encodings. Given $\CH$ access to individual coefficient bits, coefficient vanishing can be tested by quantifying over these bit positions. A fixed number of quantifier blocks over polynomial-length indices then suffices to identify the successive lowest nonzero powers and test the resulting constant coefficient in $T$.
Hence the entire test lies in $\CH$; see Section~\ref{sec:effective}.

\subsection{Integral Closure Membership Testing in Counting Hierarchy}
In this section, we discuss the ideas we use to place the problem of deciding  whether a polynomial $f$ lies in the integral closure of the ring $A=\C[f_1,\ldots,f_m]$ in $\CH$ (see Problem \ref{prob:int-cl}), and the ideas needed to establish the computational equivalence of $\APS$ and Integral Closure Membership Testing. 

In order to prove this we want to use the above $\APS \in \CH$ result. We do precisely that. We give a polynomial time reduction from integral closure membership testing to $\APS$ to get the desired result. Apriori, it is not clear how integrality relates to $\APS$. However, we observe that the Approximate Rabinowitsch's Trick~\cite[Lemma 2]{ADPS26} exactly hints towards this connection. In fact, we can use the classical \emph{valuative criterion} for integral dependence \cite[Corollary 5.22]{AM69} (see Corollary~\ref{cor:boundedness} for a precise statement) to make this connection precise. In particular, we can define a system of polynomials $\Psi_M = (uf_1^M,\ldots, uf_m^M, u, 1-uf)$ such that $f$ is integral over $\C[f_1,\ldots,f_m]$ if and only if $\Psi_M$ has no
APS, where $M$ is singly exponential in the degrees of the input polynomials. See Section~\ref{sec:integral} for a full proof. This also establishes that Integral Closure Membership Testing reduces to non-existence of $\APS$. To show the reduction in the other direction, we observe that $f_1,\ldots,f_m \in \C[x_1,\ldots,x_n]$ have no APS if and only if $z$ is integral over the polynomials $zf_1,\ldots,zf_m \in \C[z,x_1,\ldots,x_n]$, see Lemma~\ref{lem:aps-to-nonintegrality}.

We now describe the proof idea for \cref{thm:icm} in more details. For the proof we need some concepts from valuation theory of rings \cite{AM69}. In particular, for $\beta_i \in \C(t)$, we say that the valuation of a polynomial $g \in \C[x_1,...,x_n]$ (denoted $\val_t(g(x))$) is the exponent of the least power of $t$ in $g(\beta_1(t),\ldots,\beta_n(t))$. If $g(\beta)=0$, then valuation is defined to be infinity. For a precise definition, see Section~\ref{sec:prelim-approximation}. \begin{example}[Valuations]\label{example:valuations}
    Let $g = x_1x_2-1$. \begin{enumerate}
        \item If $\beta_1 = \beta_2 = t$, then $g(\beta) = t^2-1$, and $\val_t(g(\beta)) = 0$.
        \item If $\beta_1 = t; \beta_2 = t + \frac{1}{t}$, then $g(\beta) = t^2$, and $\val_t(g(\beta)) = 2$.
        \item If $\beta_1 = t; \beta_2 = \frac{1}{t}$, then $g(\beta) = 0$, and $\val_t(g(\beta)) = \infty$.
    \end{enumerate}
\end{example}

By definition, $f$ is integral if and only if there is a monic polynomial in $f$ with coefficients from $A$. Observe that this monic polynomial guarantees that if valuation of each $f_i$ is non-negative then valuation of $f$ will also be non-negative. At this point, it is not so clear if the converse also holds. But the {valuative criterion} for integral dependence says exactly this, see Corollary~\ref{cor:boundedness}. In other words, it means that if $f$ is not integral, then there is an assignment $\beta$ such that all the $f_i$ are bounded (i.e., $\val_t(f_i(\beta)) \geq 0$), but $f$ diverges (i.e., $\val_t(f(\beta)) < 0$). We observe that this is very similar to the approximate Rabinowitsch's trick relating to non-divergence used in \cite[Lemma 2]{ADPS26} which gives a criterion for this non-divergence in terms of APS. Since we have shown that $\APS$ is in $\CH$, this gives an approach towards also putting integral closure membership testing in $\CH$. However, at this point there is still a computational roadblock as this requires deciding APS for $f_i-c_i$, for each $c_i\in \C$. 
Precisely, it requires deciding whether for each $c \in \C^m$, $f_1-c_1,\ldots,f_m-c_m,u,1-uf$ have no APS. 

In terms of annihilators, using \cite{GSS19} criteria, the system $f_1-c_1,\ldots, f_m-c_m, u, 1-uf$ might give different annihilators for different $c\in \C^m$. We show in \Cref{thm:pointwise-global-equivalence} that
in fact the ``local'' annihilators for each $c\in \C^m$ can be used to get a single \emph{global} polynomial certificate $Q$ of degree $D$ which is singly exponential in the input parameters, see~\cref{thm:exp-deg-bd-annihilator} for a full statement. At this point, since linear algebra over a singly exponential system is in $\PSPACE$~\cite{csanky1976fast,Ber84}, we can check for integrality in $\PSPACE$. Still, we cannot reduce the complexity to $\CH$ yet, as this global polynomial doesn't directly correspond to an $\APS$ instance.

To achieve this, we again use Corollary~\ref{cor:boundedness}. We claim that setting $M>D$, we can define a system of polynomials $\Psi_M = (uf_1^M,\ldots, uf_m^M, u, 1-uf)$ such that $f$ is integral over $\C[f_1,\ldots,f_m]$ if and only if $\Psi_M$ has no
APS, see \cref{thm:power-reduction} for a full proof. For the easy direction, $f$ 
not being integral implies
that there is an algebraic branch $x(t)$ such that each $f_i$ had a non-negative valuation but $f$ had a negative valuation. Then, select $u = \frac{1}{f}$, and so $u$ has a positive valuation. Thus each of the polynomials in $\Psi_M$ has a positive valuation, and hence there is an APS. The converse is where the choice of $M$ becomes crucial. Intuitively, it ensures that the valuation of $u$ doesn't dominate the valuation of any $f_i$ and we don't get false witnesses of non-integrality. 
\begin{example}\label{example:integrality-witness}
   Let $f_1 = x$ and $f = x^2$. Clearly, $f$ is integral over $\C[f_1]$, yet if we set $M=1$, $(u,1-ux^2,ux)$ has the approximate zero $(x,u)=(t^{-1},t^2)$.
   Here, the generator $f_1$ is also diverging, just more slowly than $f$ along $x=1/t$.
\end{example}
 Suppose $f$ is integral and there is an APS of $\Psi_M$, and let $x(t), u(t)$ be an approximate solution witnessing APS. 
 Thus, $a=\val_t(u(t))$ and each $u(t)f_i^M(x(t))$ is 
 positive. Hence, valuation of each $f_i(x(t))$ is greater than $-\frac{a}{M}$. Now using the global degree $D$ certificate $Q$ obtained above we get the identity
\[
  1+\sum_{j=1}^{r}b_j(f_1,\ldots,f_m)f^{-j}=0.
\]
At an approximate zero of $\Psi_M$, the choice $M>D$
makes every summand after $1$ have positive valuation. They therefore cannot cancel the constant term, giving a contradiction.
 
 Since $M$ is singly exponential, we can compute each $uf_i^M$ in polynomial depth and get a polynomial sized APS criterion for the integral closure test putting integral closure testing in $\CH$.

\section{Preliminaries}\label{sec:prelim}
Let $K$ be an algebraically closed field.
We write $x=(x_1,\ldots,x_n)\in\A^n_K$, for a general
point in the affine $n$-space $\A^n_K$ over $K$, 
and also for coordinates on $\A^n_K$. The distinction 
will be clear from the context. We use dimension in the sense
of~\cite[Definition~12.1.1]{Vak25}. The degree of an affine variety
is the degree of its projective closure, with projective degree defined
in~\cite[Section~18.6, p.~520]{Vak25} 

\subsection{Dominant and finite morphisms, and integral dependence}
\label{sec:prelim-integrality}
Throughout this paper, an 
affine algebraic set will be the same as an algebraic variety, and when an algebraic variety it irreducible we will explicitly state this. 
We first recall the definition of coordinate ring associated to 
an affine algebraic set over $K$. 
Let $X\subset \A^n_K\simeq K^n$ (the last isomorphism
is as a set) be an algebraic set, and 
$I_X\subset K[x_1,\ldots,x_n]$ be
its assoicated ideal, then the cooridinate ring is given
by $K[X]\simeq K[x_1,\ldots, x_n]/I_X$. 
A morphism of affine algebraic sets $f:X\to Y$ given 
by polynomials induces a ring homomorphism 
of associated coordinate rings $f^*:K[Y]\to K[X]$
in the opposite direction.

\begin{definition}[{\cite[Section~3.6.4]{Vak25}}]
\label{def:irreducible-affine-variety}
Let $K$ be an algebraically closed field. An \emph{irreducible affine
variety over $k$} is a nonempty Zariski-closed subset
$X\subseteq\A^n_K$ that cannot be written as the union of two proper
Zariski-closed subsets of $X$. Equivalently, the associated ideal $I_X$ is a prime 
ideal.
\end{definition}
For an irreducible affine variety $X$ over $K$, since
$I_X$ is a prime ideal its coordinate ring $K[X]$ is 
an integral domain. Hence, 
its \emph{function field} is $K(X)=\Frac(K[X])$, where $\Frac(K[X])$ is the field
of fractions of $K[X]$.
\begin{example}\label{ex:example-irred}
  Any affine space $\A^n_K$ is by definition 
  irreducible. If $X$ and $Y$ are irreducible, then 
  so is $X\times Y$. 
  Furthremore, if $f:X\to Y$ is map  affine
  algebraic sets, and $X$ is irreducible, then the 
  image $f(X)$ is irreducible. In particular, for
  $X=\A^n_K$, $f(\A^n_K)$ is always irreducible. 
\end{example}
\begin{definition}[{\cite[Tag~\href{https://stacks.math.columbia.edu/tag/01RI}{01RI}]{Stacks26}
}]
\label{def:dominant-morphism}
Let $X$ and $Y$ be irreducible affine varieties over an algebraically
closed field $K$. A morphism $f:X\to Y$ is \emph{dominant} if its
image is Zariski dense in $Y$, that is,
\[
 \overline{f(X)}=Y,
\]
where the closure is taken in $Y$. Equivalently,
$f^{-1}(U)\ne\varnothing$ for every nonempty Zariski-open subset
$U\subseteq Y$.
\end{definition}

We recall the following properties of dominant
morphisms~\cite[Exercise~7.5.C and Theorem~12.2.1]{Vak25}.

\begin{theorem}[{\cite[Theorem~8.4.2,Corollary 12.4.2]{Vak25}}]
\label{thm:dominance-function-fields}
Let $X$ and $Y$ be irreducible affine varieties over an algebraically
closed field $K$, and let $f:X\to Y$ be a dominant morphism.
Then $f$ induces a field embedding $f^*:K(Y)\hookrightarrow K(X)$.
Identifying $K(Y)$ with its image, we have
\[
 \trdeg_{K(Y)}K(X)=\dim X-\dim Y.
\]
Moreover, there is a nonempty Zariski-open subset $U\subseteq Y$ such
that, for every $y\in U$, the fiber $f^{-1}(y)$ is 
nonempty, and each of its irreducible components has 
dimension $\dim X-\dim Y$.
\end{theorem}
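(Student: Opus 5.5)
The statement is Theorem~\ref{thm:dominance-function-fields}: a dominant morphism $f:X\to Y$ of irreducible affine varieties gives a field embedding $K(Y)\hookrightarrow K(X)$, we have $\trdeg_{K(Y)}K(X)=\dim X-\dim Y$, and over a nonempty open set of $Y$ the fibers are nonempty and equidimensional of that dimension. I would prove it in three steps: the embedding, the transcendence-degree identity, and the generic fiber dimension via generic Noether normalization.

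\textbf{Embedding.} The pullback $f^*:K[Y]\to K[X]$ is injective. If $f^*g=0$, then $g$ vanishes on $f(X)$, hence on its closure, which is $Y$ by dominance, so $g=0$. An injective map of domains extends uniquely to their fraction fields, which gives $K(Y)\hookrightarrow K(X)$.

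\textbf{Transcendence degree.} For an irreducible affine variety, $\dim X=\trdeg_K K(X)$; this is Noether normalization together with the fact that dimension is preserved under finite extensions. Additivity of transcendence degree in the tower $K\subset K(Y)\subset K(X)$ then gives
\[
\trdeg_{K(Y)}K(X)=\trdeg_K K(X)-\trdeg_K K(Y)=\dim X-\dim Y .
\]

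\textbf{Generic fibers.} Set $r=\dim X-\dim Y$, $A=K[Y]$ and $B=K[X]$, so $A\subseteq B$ is an inclusion of finitely generated domains. By generic Noether normalization there is a nonzero $g\in A$ and elements $y_1,\dots,y_r\in B_g$, algebraically independent over $A_g$, such that $B_g$ is finite over $A_g[y_1,\dots,y_r]$. To get these, choose a transcendence basis of $K(X)$ over $K(Y)$ inside $B$, run Noether normalization over the field $K(Y)$, and clear the finitely many denominators appearing in the integral equations of the generators of $B$; this produces a single $g$.

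Put $U=D(g)$. For $y\in U$ with maximal ideal $\mathfrak m_y$, the fiber has coordinate ring
\[
B_g/\mathfrak m_yB_g ,
\]
which is finite over $K[y_1,\dots,y_r]$.

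\emph{Nonemptiness.} Since $A_g[y]$ is a polynomial ring over $A_g$, its map to $B_g$ is injective and integral. Lying over then gives a prime of $B_g$ above $\mathfrak m_yA_g[y]$, so the fiber is nonempty.

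\emph{Every component has dimension at most $r$.} This follows from finiteness of the fiber over $K[y_1,\dots,y_r]$.

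\emph{Every component has dimension at least $r$.} This is the main obstacle; upper semicontinuity alone does not give it. I would use the standard inequality that every component of a nonempty fiber $f^{-1}(y)$ has dimension $\ge \dim X-\dim Y$. It follows from Krull's principal ideal theorem: cut $Y$ at $y$ by $\dim Y$ local equations, which is possible because $\mathfrak m_y$ is minimal over a system of parameters of the local ring, then pull these equations back to $X$.

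Combining the two bounds shows every irreducible component of $f^{-1}(y)$ has dimension exactly $\dim X-\dim Y$ for all $y\in U$. One could instead cite \cite[Corollary 12.4.2]{Vak25} directly, as the statement does.
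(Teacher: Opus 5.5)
Your proof is correct. Injectivity of $f^*$ and additivity of transcendence degree give the first two claims, generic Noether normalization over $D(g)$ gives nonemptiness and the upper bound, and Krull's height theorem gives the lower bound. One wording fix: the $y_i$ must be the elements produced by Noether normalization of $B\otimes_A K(Y)$ over $K(Y)$ and then scaled into $B$, not an arbitrary transcendence basis, which your construction does implicitly. The paper gives no argument of its own and only cites \cite[Theorem~8.4.2, Corollary~12.4.2]{Vak25}; your route is, as far as I can tell, the standard textbook proof behind those results.
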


\begin{example}
In Theorem~\ref{thm:dominance-function-fields}, if $\dim X=\dim Y$,
then $K(X)/K(Y)$ is finite
algebraic, since it is a finitely generated field extension of
transcendence degree zero. Thus a dominant morphism from an irreducible
curve $D$ to $\A^1_K$ gives a finite extension $K(D)/K(t)$, with the function field of $\A^1_K$ as $K(t)$.
\end{example}
Now we recall the notions of integrality and integral closures, and state some geometric consequences.

\begin{definition}
\label{def:integral-dependence}
For a ring extension of commutative rings with 1 
$A\subseteq B$, an element $b\in B$ is
\emph{integral over $A$} if
\[
 b^d+a_{d-1}b^{d-1}+\cdots+a_0=0
 \quad\text{for some }d\geq1\text{ and }a_i\in A.
\]
The extension is \emph{integral} if every element of $B$ is integral
over $A$. The elements integral over $A$ form its \emph{integral closure}
in $B$. An integral domain $A$ is \emph{integrally closed} if its integral closure
in $\Frac(A)$ equals $A$; see~\cite[Chapter~5, pp.~59--63]{AM69}.
\end{definition}

A morphism $f:X\to Y$ of affine varieties is 
\emph{finite} when its coordinate
ring $\C[X]$ is a finite $\C[Y]$-module via $f^*$.
\begin{remark}[{\cite[Proposition 5.1 and Corollary 5.2]{AM69}}]\label{rem:finite-integral}
Every finite ring extension $A\subseteq B$ is integral.
Furthermore, if $B$ is generated by adjoining finitely many integral elements over $A$, then $B$ is finite as an $A$-module. 
\end{remark}
We recall the following
standard geometric properties of finite morphisms.

\begin{theorem}[{\cite[Tags 01WM,02LS,0ECG,0F32,0BQ3]{Stacks26}}]
\label{thm:finite-morphisms-geometric}
Let $f:X\to Y$ be a finite morphism of affine varieties
over $K$. Then the map $f$ is closed (maps closed subspaces to closed subspaces) and every fiber is finite. 
So $\dim(f(Z))=\dim(Z)$ for each closed $Z\subset X$. 
In particular,
if $f$ is dominant, then 
it is surjective.

Furthermore, if $X$ and $Y$ are irreducible, with 
$K[Y]$ integrally closed in $K(Y)$, and $f$ is
dominant, then $f$ is open, i.e., $f$ maps open 
subsets to open subsets.
\end{theorem}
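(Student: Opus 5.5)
The plan is to translate each geometric claim into a statement about the integral ring extension $f^*:K[Y]\to K[X]$ and then apply the Cohen--Seidenberg theorems (lying over, going up, going down) from \cite[Chapter~5]{AM69}. By Remark~\ref{rem:finite-integral}, finiteness makes $K[X]$ integral over the image of $K[Y]$.

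\textbf{Closedness.} It suffices to treat an irreducible closed subset $Z=V(P)\subseteq X$, with $P$ prime, since closed sets are finite unions of these. Put $q=(f^*)^{-1}(P)$. Then $K[Y]/q\hookrightarrow K[X]/P$ is an injective integral extension of domains. The closure of $f(Z)$ is $V(q)$. Take any point $y\in V(q)$ with maximal ideal $m_y\supseteq q$. Lying over and going up give a prime $P'\supseteq P$ with $(f^*)^{-1}(P')=m_y$. Such a $P'$ is maximal, because the contraction of a prime in an integral extension is maximal exactly when the prime is. So $P'$ is a point of $Z$ mapping to $y$, and $f(Z)=V(q)$ is closed.

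\textbf{Finite fibers and dimension.} The fiber over $y$ has coordinate ring a quotient of $K[X]/m_yK[X]$. This is a finite-dimensional $K$-algebra, hence Artinian with finitely many maximal ideals, so the fiber is finite. For dimension, the extension $K[f(Z)]=K[Y]/q\subseteq K[X]/P=K[Z]$ is integral. Hence the fraction field of $K[Z]$ is algebraic over that of $K[f(Z)]$, so the two transcendence degrees agree, and $\dim f(Z)=\dim Z$; equivalently, Krull dimension is preserved under integral extensions. If $f$ is dominant, then $f(X)$ is dense in $Y$ and closed by the previous step, so $f$ is surjective.

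\textbf{Openness (main obstacle).} Now $X,Y$ are irreducible, $K[Y]$ is integrally closed, and $f$ is dominant, so $f^*$ is injective and $K[X]$ is a domain. This is exactly the setting of the going-down theorem \cite[Theorem~5.16]{AM69}. Let $U\subseteq X$ be open. By Chevalley's theorem $f(U)$ is constructible. A constructible set that is stable under generization is open, so it remains to show that $f(U)$ is stable under generization.

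To see this, take $y=f(x)$ with $x\in U$, and an irreducible closed $W\subseteq Y$ containing $y$, with generic prime $q'\subseteq m_y$. Going down provides a prime $P''\subseteq m_x$ lying over $q'$. Then $V(P'')$ is an irreducible closed set through $x$, so it meets the open set $U$. Since $V(P'')$ is irreducible, $U\cap V(P'')$ is dense in it. By the closedness step, $f(V(P''))=V(q')=W$, so $f(U\cap V(P''))$ is dense in $W$, and therefore $f(U)$ meets every nonempty open subset of $W$.

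In scheme language this says the generic point of $W$ lies in $f(U)$. Phrased with closed points only, one concludes openness from the criterion that a constructible set meeting every irreducible closed $W$ through one of its points in a dense subset of $W$ is open. Checking that this closed-point formulation of ``stable under generization'' is correctly matched to Chevalley's constructibility is the step I expect to require the most care.
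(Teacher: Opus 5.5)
Your proof is correct. It is a more explicit route than the paper's, which gives no argument here and simply cites the Stacks Project, where these facts are proved for schemes and ``stable under generization'' refers to actual generic points. You instead derive everything in the classical closed-point setting from the Cohen--Seidenberg theorems:
\begin{itemize}
\item closedness from lying over and going up, together with the fact that maximality is detected by contraction;
\item finite fibres from the finite-dimensional algebra $K[X]/\mathfrak m_yK[X]$;
\item dimension from equality of transcendence degrees (for reducible $Z$, take the maximum over irreducible components, since $f(Z)=\bigcup_i f(Z_i)$);
\item openness from going down plus Chevalley.
\end{itemize}
The citation buys brevity and generality. Your version buys self-containment in the paper's own language of affine varieties over $K$. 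The price is the one step you flagged: without generic points, generization must be replaced by your density condition.

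That replacement is valid. Let $E\subseteq Y$ be constructible, and assume $E\cap W$ is dense in $W$ whenever $W$ is an irreducible closed set meeting $E$. Write $Y\setminus E=\bigcup_k(O_k\cap C_k)$ with $C_k$ irreducible closed, $O_k$ open, and $O_k\cap C_k\neq\varnothing$. Suppose some point of $E$ lay on some $C_k$. Then $E\cap C_k$ would be a dense constructible subset of the irreducible space $C_k$, so it would contain a nonempty open subset of $C_k$. That open set must meet the nonempty open subset $O_k\cap C_k\subseteq Y\setminus E$, a contradiction. Hence $E$ misses every $C_k$, so $Y\setminus E=\bigcup_k C_k$ is closed and $E$ is open.

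In your going-down step, note also that $x$ itself lies in $U\cap V(P'')$, so the nonemptiness you need there is immediate.
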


The following useful consequence is an application of 
the above theorem to $Y=\A^d_K$.

\begin{lemma}\label[lemma]{lem:horizontal}
Let $X$ be an irreducible affine variety of dimension $d\geq1$ and let
$\pi:X\to\A^d_K$ be finite and dominant. 
For any affine line $\A_K^1\simeq L\subseteq\A^d_K$, let 
$C\coloneqq \pi^{-1}(L)$, and let $W_1,\ldots, W_k$ be 
the irreducible components of $C$.  Then $\pi$ maps each
$W_i$ surjectively onto $L$.
In particular, for $b\in L$,
every point of $\pi^{-1}(b)$ is in one of $W_i$.
\end{lemma}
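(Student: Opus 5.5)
The plan is to exploit that $\A^d_K$ is normal, so that by Theorem~\ref{thm:finite-morphisms-geometric} the finite dominant morphism $\pi$ between irreducible varieties is both closed and open. Combining this with a dimension count, I will show that no component of $C=\pi^{-1}(L)$ can be too small, and then use closedness to conclude each component surjects onto $L$. The statement about fibers then follows immediately from $C=\bigcup_i W_i$.

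\textbf{Step 1 (dimension lower bound).} $C$ is cut out in $X$ by the $d-1$ linear equations defining $L$, pulled back along $\pi$. By Krull's principal ideal theorem, every irreducible component of $C$ has dimension at least $d-(d-1)=1$. Each $W_i$ is therefore at least one-dimensional.

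\textbf{Step 2 (dimension upper bound).} Since $\pi$ is finite, the restriction of $\pi$ to $W_i$ is finite. By Theorem~\ref{thm:finite-morphisms-geometric}, $\pi(W_i)$ is closed and $\dim \pi(W_i)=\dim W_i$. Because $\pi(W_i)\subseteq L$ and $L$ is an irreducible curve, $\dim W_i\le 1$. Hence $\dim W_i=1$, so $\pi(W_i)$ is an irreducible closed one-dimensional subset of $L$, which forces $\pi(W_i)=L$.

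\textbf{Step 3 (fibers).} For $b\in L$, we have $\pi^{-1}(b)\subseteq C=\bigcup_i W_i$, so every point of the fiber lies in some $W_i$. Nonemptiness of the fiber follows from surjectivity of $\pi$, which holds since $\pi$ is finite and dominant.

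The main obstacle is Step 1. The Krull bound is the standard fact that every component of the zero set of $k$ functions on an irreducible $d$-dimensional affine variety has dimension at least $d-k$. This requires $C\neq\varnothing$, which holds because $\pi$ is surjective. If one wants to avoid Krull, the alternative is to use openness instead. Suppose some $W_i$ were a point $p$ isolated in $C$. Take a small open neighbourhood $U$ of $p$ in $X$ missing the other components; $\pi(U)$ is open in $\A^d_K$. Then $\pi(U)\cap L\subseteq\pi(U\cap C)=\{\pi(p)\}$. But a nonempty Zariski open subset of $\A^d_K$ containing $\pi(p)\in L$ meets $L$ in a cofinite set, which contradicts $\pi(U)\cap L$ being a single point. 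I would present the Krull argument and mention openness as the reason the normality hypothesis of Theorem~\ref{thm:finite-morphisms-geometric} is relevant.
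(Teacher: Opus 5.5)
Your proof is correct, but it reaches the key point by a different mechanism than the paper's. The paper never bounds the dimension of individual components from below. It takes $U=X\setminus(W_2\cup\cdots\cup W_k)$ and notes that $U\cap C$ is nonempty and contained in $W_1$. It then uses openness of $\pi$ (the normality clause of Theorem~\ref{thm:finite-morphisms-geometric}) to see that $\pi(U\cap C)=\pi(U)\cap L$ is a nonempty open subset of $L$ inside the closed set $\pi(W_1)$, so $\pi(W_1)=L$. Its opening remark that $\dim C=1$ only controls the top-dimensional components and is not really used. You instead obtain $\dim W_i\geq 1$ from Krull's height theorem, since $C$ is cut out by the $d-1$ pulled-back linear equations on the $d$-dimensional irreducible $X$. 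You then finish with dimension preservation and closedness of finite maps.

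The trade-off is this. Your argument never uses openness or going-down. It needs only the dimension theory of the affine domain $K[X]$ and the fact that a line is cut out by $d-1$ equations, so it would work over a non-normal $d$-dimensional target whenever $L$ is set-theoretically cut out by $d-1$ functions. The paper's argument works for any irreducible closed $L$, however many equations it needs, but it genuinely relies on normality of $\A^d_K$. Your fallback via openness is the paper's argument restricted to ruling out isolated points. The paper's version shows directly that each component has dense image in $L$, so it needs neither of your two dimension bounds.
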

\begin{proof}
The closed space $C$ is a curve because $\pi$ is 
finite and dominant, hence surjective, and 
$\dim(C)=\dim(\pi(C))=\dim(L)=1$. 

Since $K[\A^d]\simeq K[x_1,\ldots, x_d]$ is just the polynomial ring in $d$-variables, and so is integrally closed in $K(x_1,\ldots, x_d)$, the map $\pi$ is also open, and
so is its restriction to $C$.
We show without loss of generality that $\pi$ restricted to $W_1$ is surjective. 
Let $U$ be the open 
set obtained after removing the irreducible closed sets
$W_2,\ldots, W_k$ from $X$. 
Then $\pi(W_1) \supseteq \pi(U\cap C)=\pi(U)\cap L\ne \emptyset$ is
open in $L$. Also, $\pi(W_1)$ is closed, as $W_1$ is a closed set. Since $L$ is irreducible and $\pi(U)\cap L$ is a non-empty open set, its closure is $L$. Hence $\pi(W_1)=L$. 
\end{proof}

\subsection{Image closures and approximate vanishing}
\label{sec:prelim-approximation}
\label{sec:prelim-aps-geometry}
From now on we assume $K=\C$, however one may as well 
take $K$ to be any algebraically closed field.
Recall from Theorem \ref{thm:dominance-function-fields}
and Example \ref{ex:example-irred}
that for a polynomial map 
$F=(f_1,\ldots,f_m):\A_\C^n\to\A_\C^m$ and $Y=\overline{F(\A^n_\C)}$, $Y$ is irreducible of dimension $\dim(Y)=
\trdeg(f_1,\ldots, f_m)$ with $I_Y$ being the 
annihilator ideal $\Ann(F)$ of $f_1,\ldots, f_m$.

Let
\[
 \K=\bigcup_{e\geq1}\C((t^{1/e})),\qquad
 \mathcal O_\K=\{a\in K:\val_t(a)\geq0\},\qquad
 \mathfrak m_\K=\{a\in K:\val_t(a)>0\}.
\]
Here $\K$ is the field of formal Puiseux series that is
algebraically closed, and $\val_t=\ord_t:K\to \Q$, is the valuation map, defined as follows: for an element $\alpha=t^{e}(c_0+c_1t^{e_1}+\ldots)$
with $c_0\ne 0$, we define $\val_t(\alpha)=e\in\Q$, and 
set $\val_t(0)=+\infty$.

\begin{definition}[Approximate vanishing and APS]
  For a tuple $F=(f_1,\ldots,f_m)$ of polynomials in $\C[x_1,\ldots, x_n]$, let
\[
 \AV(F)=\{\beta\in \K^n:\val_t(f_i(\beta))>0
                      \text{ for every }i\}.
\]
We say that the system (equivalently the map) $F$ has
APS if $\AV(F)\ne \emptyset$.
\end{definition}
The following is the statement of weak approximative 
nullstellensatz (in the sense of \cite{ADPS26}).
\begin{theorem}[{\cite[Lemma~4.1 and Theorem~4.2]{GSS19}}]\label{thm:WAN}
Let $F$ be as above. Then 
\begin{equation}\label{eq:wan}
 \AV(F)\ne\varnothing
 \quad\Longleftrightarrow\quad
 0\in\overline{F(\A^n_\C)}
 \quad\Longleftrightarrow\quad
 P(0)=0\text{ for all }P\in\Ann(F).
\end{equation}
\end{theorem}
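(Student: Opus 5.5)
We prove the cycle of equivalences in \eqref{eq:wan} by establishing three implications: (a) $\AV(F)\ne\varnothing \Rightarrow$ every annihilator vanishes at $0$; (b) every annihilator vanishes at $0$ $\Rightarrow 0\in\overline{F(\A^n_\C)}$; (c) $0\in\overline{F(\A^n_\C)}\Rightarrow \AV(F)\ne\varnothing$. Throughout, $Y=\overline{F(\A^n_\C)}$ is irreducible with $I_Y=\Ann(F)$, as recalled above.

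Step (b) is immediate from the Nullstellensatz. Indeed, $Y$ is the zero set of $I_Y=\Ann(F)$, so if $P(0)=0$ for every $P\in\Ann(F)$, then $0\in Y$.

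Step (a) is a valuation argument. Take $\beta\in\AV(F)$ and $P\in\Ann(F)$. Write $P=P(0)+P_{>0}$, where $P_{>0}$ has no constant term. Then
\[
0=P(f_1(\beta),\ldots,f_m(\beta))=P(0)+P_{>0}(f(\beta)).
\]
Every monomial of $P_{>0}$ evaluated at $f(\beta)$ has positive valuation, so $\val_t\bigl(P_{>0}(f(\beta))\bigr)>0$. A nonzero constant has valuation $0$ and therefore cannot cancel a series of positive valuation. Hence $P(0)=0$.

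Step (c) is the main obstacle: from $0\in Y$ we must produce a Puiseux-series point $\beta$ with $F(\beta)\to 0$. The plan has three parts.

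First, by Chevalley's theorem the image $F(\A^n_\C)$ is constructible and dense in $Y$, so it contains a nonempty open subset $U\subseteq Y$.

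Second, we pass to a curve. Choose an irreducible curve $C\subseteq Y$ through $0$ that meets $U$. If $0\in U$ we are already done, since $0$ is an exact image point and a constant $\beta$ suffices. Otherwise, cut $Y$ with a generic linear space through $0$ of complementary dimension $\dim Y-1$. By standard Bertini and dimension-count arguments, some component of the intersection through $0$ is a curve not contained in $Y\setminus U$.

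Third, we lift the curve to the domain. Let $D$ be an irreducible component of $F^{-1}(C)$ dominating $C$. Cut $D$ down, again by generic hyperplane sections, to an irreducible curve $D'$ with $F|_{D'}:D'\to C$ dominant. The image $F(D')$ is then dense in $C$ and omits only finitely many points, possibly including $0$. Let $\overline{D'}$ be the closure of $D'$ in projective space. The map $F$ extends to a rational map $\overline{D'}\to \overline{C}$, which is defined everywhere on the normalization of $\overline{D'}$. Since the extended map is proper and dominant onto the projective curve, it is surjective, so some point $p$ of the normalization maps to $0$. The local ring at $p$ is a DVR whose completion is $\C[[s]]$. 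Writing the coordinates $x_i$ of $D'$ as Laurent series in $s$, which is allowed since they may have poles at $p$, gives $\beta\in\C((s))^n\subseteq\K^n$ with $f_i(\beta)\in s\,\C[[s]]$ for every $i$, because $F(p)=0$. Hence $\beta\in\AV(F)$.

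The delicate point is ensuring that the lifted curve genuinely specializes to $0$ rather than the preimage escaping. Working with the projective closure and the normalization is exactly what handles this, since coordinates are allowed to diverge. An alternative route would use the valuative criterion, that is, extending the local ring $\mathcal O_{Y,0}$ to a valuation ring of the function field $\C(x)$.
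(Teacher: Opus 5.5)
Your argument is correct. The paper gives no proof of this statement: it imports it from \cite{GSS19}, remarking only that a $\C((t))$-witness and a Puiseux witness are interchangeable via $t\mapsto t^e$. So the fair comparison is between your route and the geometry the paper develops later.

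Steps (a) and (b) are the standard easy directions. For (b) you do not need the Nullstellensatz: $Y=V(I_Y)$ holds for any Zariski closure, and $I_Y=\Ann(F)$.

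For the hard direction (c) you use Chevalley, curve selection through a generic linear section at $0$, a dominating curve in the source, and a Laurent expansion at a point of the normalization lying over $0$. This is the classical curve-selection argument (cf.\ \cite{mumford1995algebraic}). It produces a witness in $\C((s))^n$, which is exactly the form used in \cite{GSS19}, so no reparametrization is needed.

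The closest argument inside the paper is Section~\ref{sec:geometry-rewrite}, which essentially reproves (c) in the normal form $m=n+1$, $\dim Y=n$, and differs from yours at both key steps:
\begin{itemize}
\item It replaces curve selection by a finite Noether normalization $\pi$ and a generic line $L$ through $0_n$ in the base. Lemma~\ref{lem:horizontal} makes every component of $Y\cap\pi^{-1}(L)$ surject onto $L$, so every component through a point of $Y_0$ meets $U$ as soon as $L\not\subseteq\pi(Y\setminus U)$.
\item It replaces your expansion at a normalization point by lying-over in the integral closure of $\C[t]$ in a Galois extension, together with an embedding into $\K$ (Proposition~\ref{prop:rewrite-lift}).
\end{itemize}
Your route is more general and needs no preliminary reduction. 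The paper's route buys structure the resultant algorithm needs: every point of the finite fiber $Y_0$ is reached along one fixed line, and the $y$-coordinates of every solution branch stay bounded.

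If you write this up, make your two ``standard'' steps explicit.
\begin{itemize}
\item Curve selection: for generic $\Lambda\ni 0$ of codimension $\dim Y-1$, an incidence count shows $(Y\setminus U)\cap\Lambda$ is finite. Krull's theorem shows every component of $Y\cap\Lambda$ through $0$ has dimension at least $1$. Hence some such component is a curve meeting $U$.
\item Dominating curve: the curve $D'\subseteq D$ is most cleanly obtained as the closure of a closed point of the generic fiber of $D\to C$. Generic hyperplane sections also work, but then you must still argue that some component dominates $C$.
\end{itemize}
Finally, the valuative-criterion alternative you mention is less direct than it looks. An arbitrary valuation ring dominating $\mathcal O_{Y,0}$ need not be discrete with residue field $\C$, and that is what you need in order to land in $\K$.
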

The definition of $\AV(F)$ in \cite{GSS19} is given 
in terms of $\beta\in \C((t))^n$, however it is enough
to assume  $\beta\in (\K)^n$.
Indeed, a Puiseux witness $\beta(t)\in\K^n$ of $\AV$ 
will belong to $\C((t^{1/e}))^n$, for some large enough
$e$, and so $\beta(t^e)
\in \C((t))^n$ gives a witness in the sense of
\cite{GSS19}.

In the following section we recall the standard results that reduce the general setting to the case when $\trdeg(f_1,\ldots, f_m)=n$ and $m=n+1$.

\subsection{Random hyperplane sections and random projections}
\label{sec:rewrite-reduction}

Set $r=\dim Y$ and $d=\max(1,\deg f_1,\ldots,\deg f_m)$.
We assume that $f_i$ are non-constant, otherwise there
is nothing to prove. If $r=m$, then $Y=\A^m$ and APS 
trivially holds. Hence we assume from now on that
$1\leq r<m$.

We state the random hyperplane-section and projection results in the forms used in this paper.
In each statement, coefficients are sampled independently and uniformly
from a finite set $B\subset\Q$. We recall the following
standard result mentioned in \cite{AGS26}. It estimates
the probability with which a random hyperplane 
intersection drops the dimension of a given affine 
algebraic set $W$. This is in some ways, 
explicitization of the classical result of Bertini 
on hyperplane intersections. Successively intersecting 
with $\dim(W)$ many random hyperplanes, one can reduce
the dimension of $W$ to $0$. 

\begin{lemma}[{\cite[Lemma~4.9]{AGS26}}]
\label{lem:rewrite-random-sections}
Let $W\subseteq\A^a_\C$ be an affine algebraic set of 
degree $\Delta$.
A random affine linear equation reduces its dimension to 
$\dim(W)-1$ with
probability at least $1-2\Delta/|B|$. Consequently, 
$\dim(W)$ successive random affine
linear equations have a finite nonempty intersection with $W$ with
probability at least $1-2\Delta\cdot\dim(W)/|B|$.
\end{lemma}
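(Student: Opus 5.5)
The plan is to decompose $W$ into irreducible components and bound separately the two ways a random hyperplane can fail to cut the dimension by exactly one. Write $W=W_1\cup\cdots\cup W_s$ with $s\le\sum_i\deg W_i\le\Delta$, set $k=\dim W\ge1$, and let $H=\{c\cdot x=c_0\}$ with $(c,c_0)$ drawn from $B^{a+1}$. We get $\dim(W\cap H)=k-1$ provided two conditions hold:
\begin{enumerate}
\item no component $W_i$ is contained in $H$, which gives $\dim(W_i\cap H)\le\dim W_i-1$ for every $i$;
\item $H$ meets at least one component $W_1$ of dimension $k$, which by Krull's principal ideal theorem gives $\dim(W_1\cap H)=k-1$.
\end{enumerate}

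\textbf{Bad event (i).} For each $i$, fix a point $p_i\in W_i$. If $W_i\subseteq H$, then $c\cdot p_i-c_0=0$. This is a nonzero affine polynomial of degree $1$ in the random coefficients, because the coefficient of $c_0$ is $-1$. By Schwartz--Zippel it vanishes with probability at most $1/|B|$. A union bound over the $s\le\Delta$ components bounds this event by $\Delta/|B|$.

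\textbf{Bad event (ii).} Fix a top-dimensional component $W_1$ and pass to its projective closure $\overline{W_1}\subseteq\PPj^a$, whose boundary $\overline{W_1}\cap\{x_0=0\}$ has dimension $k-1$. The projective closure $\overline H$ of $H$ meets $\overline{W_1}$ in a set of dimension at least $k-1$.
\begin{itemize}
\item If $k\ge2$: if $c\neq0$, then $\overline H$ and $\{x_0=0\}$ are distinct, so $\overline{W_1}\cap\overline H\cap\{x_0=0\}$ is a hyperplane section of the $(k-1)$-dimensional boundary and has dimension at most $k-2$. Hence $\overline{W_1}\cap\overline H$ has an affine point, and $H\cap W_1\ne\varnothing$ automatically. (The event $c=0$ is itself bad; it has probability at most $1/|B|$, and I would absorb it into the $\Delta/|B|$ budget below.)
\item If $k=1$: the boundary of $\overline{W_1}$ consists of at most $\deg W_1\le\Delta$ points $[0:\xi_j]$. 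If $c\cdot\xi_j\neq0$ for all $j$, then $\overline H$ contains no point at infinity of $\overline{W_1}$. Since $\overline{W_1}\cap\overline H\ne\varnothing$ by projective dimension counting, the intersection point is affine. Each condition $c\cdot\xi_j=0$ is a nonzero linear form in $c$, so by Schwartz--Zippel this bad event has probability at most $\Delta/|B|$.
\end{itemize}
Adding the two bad events gives failure probability at most $2\Delta/|B|$, which is the first claim.

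\textbf{Iteration.} For the consequence, I would apply the first claim inductively to $W^{(0)}=W$ and $W^{(j)}=W^{(j-1)}\cap H_j$. The key additional fact is Bézout's inequality, $\deg(W^{(j-1)}\cap H_j)\le\deg W^{(j-1)}\le\Delta$, so the same bound $2\Delta/|B|$ applies at every step. Under the good events the dimension drops exactly by one each time, so after $\dim W$ steps the intersection has dimension exactly $0$: it is finite and nonempty. A union bound over the $\dim W$ steps gives success probability at least $1-2\Delta\dim(W)/|B|$.

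\textbf{Main obstacle.} The delicate step is bad event (ii): an affine hyperplane may miss a variety entirely, with all intersection pushed to infinity. The projective argument above is what I would rely on; the case $k=1$ is the only place where the points at infinity genuinely need to be counted by the degree.
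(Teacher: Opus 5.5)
Your treatment of bad event (ii) breaks down for $k\ge2$. The claim that $c\neq0$ forces $\overline{W_1}\cap\overline H\cap\{x_0=0\}$ to have dimension at most $k-2$ is false. A hyperplane section of a $(k-1)$-dimensional set loses a dimension only if the hyperplane contains none of its $(k-1)$-dimensional components, and distinctness of $\overline H$ and $\{x_0=0\}$ does not ensure this. Inside $\{x_0=0\}\cong\PPj^{a-1}$, $\overline H$ cuts out the hyperplane $\{c\cdot\xi=0\}$. This hyperplane does not depend on $c_0$ and can contain a whole boundary component. For instance, take $W=V(x_1x_2-1)\subset\A^3$, so $k=2$, and $H=V(x_1)$. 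Here $c=(1,0,0)\neq0$ but $W\cap H=\varnothing$: the boundary $V(x_0,x_1x_2)$ consists of two lines, and one of them, $V(x_0,x_1)$, lies in $\overline H$. Parallel planes, $W=V(x_1)$ and $H=V(x_1-1)$, fail in the same way. So the conclusion ``$H\cap W_1\neq\varnothing$ automatically'' is wrong. Your closing remark has it backwards: the points at infinity must be controlled for every $k$, not only for $k=1$.

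The repair is to run your $k=1$ argument for every $k$. The boundary $\overline{W_1}\cap\{x_0=0\}$ is a hyperplane section of $\overline{W_1}$. It is therefore of pure dimension $k-1$, and by B\'ezout it has at most $\deg W_1\le\Delta$ irreducible components. Choose a point $[0:\xi_j]$ on each component and require $c\cdot\xi_j\neq0$ for all $j$. By Schwartz--Zippel this fails with probability at most $\Delta/|B|$, and it already forces $c\neq0$. Then $\overline H$ contains no boundary component, so $\overline{W_1}\cap\overline H\cap\{x_0=0\}$ has dimension at most $k-2$. Meanwhile every component of $\overline{W_1}\cap\overline H$ has dimension $k-1$. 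Each such component therefore has an affine point, and $\dim(W_1\cap H)=k-1$. The per-step bound stays at $2\Delta/|B|$, and your iteration then goes through as written.

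Imposing the same condition on every top-dimensional component would even make event (i) redundant. It forces $W_i\not\subseteq H$ for each top-dimensional $W_i$, and lower-dimensional components cannot push $\dim(W\cap H)$ above $k-1$. Finally, your bound $s\le\Delta$ and the step $\deg(W\cap H)\le\deg W$ use the convention that degree sums over all irreducible components, as in Heintz's B\'ezout inequality. You should state this convention explicitly. The paper gives no proof of its own here; it imports the lemma from \cite{AGS26}.
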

Degree does not increase under hyperplane sections.
The following result reduces the number of outputs to $r+1$ by a
random linear projection, preserving both the image dimension and APS.

\begin{theorem}[Geometric form of {\cite[Theorem~4.6]{GSS19}}]
\label{thm:rewrite-random-projection}
Let $F=(f_1,\ldots,f_m)$ be polynomials of degree at most $d$ and
transcendence degree $r<m-1$. Choose $c_{ij}\in B$ and put
$\widetilde f_i=\sum_{j=1}^m c_{ij}f_j$, $1\leq i\leq r+1$.
With probability at least $1-(r+1)d^r/|B|$,
\[
 \trdeg_\C(\widetilde f_1,\ldots,\widetilde f_{r+1})=r,
 \qquad
 0\in\overline{F(\A^n_\C)}
 \ \Longleftrightarrow\
 0\in\overline{\widetilde F(\A^n_\C)}.
\]
\end{theorem}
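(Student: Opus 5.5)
The plan has two parts: first preserve the image dimension, then preserve APS, each holding outside a proper Zariski-closed set of coefficient matrices $c=(c_{ij})$; a Schwartz--Zippel count then gives the stated probability.

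\textbf{Dimension.} Since $\trdeg(f_1,\ldots,f_m)=r$, there are indices $j_1,\ldots,j_r$ with $f_{j_1},\ldots,f_{j_r}$ algebraically independent. By the Jacobian criterion in characteristic zero, some $r\times r$ minor of the Jacobian $J_F$ is a nonzero polynomial in $x$. The Jacobian of $\widetilde F$ is $C\,J_F$, where $C$ is the $(r+1)\times m$ matrix of the $c_{ij}$. Hence the $r\times r$ minor of $CJ_F$ built from its first $r$ rows and the columns $j_1,\ldots,j_r$ is a polynomial in $(c,x)$. It is nonzero: specialise $c$ so that row $i$ selects $f_{j_i}$. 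By Cauchy--Binet its degree in $c$ is $r$. Writing this minor as a polynomial in $c$ with coefficients in $\C[x]$ and choosing a nonzero coefficient, Schwartz--Zippel shows that $\trdeg\widetilde F\geq r$ fails with probability at most $r/|B|$. The upper bound $\trdeg\widetilde F\leq r$ is automatic, since each $\widetilde f_i$ lies in the span of the $f_j$. The $d^r$ factor in the stated bound is reserved for the second step.

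\textbf{APS.} Write $\widetilde F=\phi_C\circ F$, where $\phi_C:\C^m\to\C^{r+1}$ is linear. Then $\overline{\widetilde F(\C^n)}=\overline{\phi_C(Y)}$. If $0\in Y$, then $0=\phi_C(0)\in\overline{\widetilde F(\C^n)}$, so this direction always holds. The converse is the main obstacle. A linear projection can create approximate zeros: if $\phi_C(y(t))\to0$ along a branch with $y(t)$ unbounded, or with a nonzero limit in $\ker\phi_C$, then $\widetilde F$ has an APS even though $F$ may not. To rule this out I would use the annihilator characterisation of Theorem~\ref{thm:WAN}. Suppose $0\notin Y$. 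Then $\Ann(F)$ contains a polynomial $P$ with $P(0)=1$. Since $\dim Y=r<m$, we can take $P\in I_Y$ with $\deg P\leq d^r$; this is Perron's bound, applied after eliminating down to $r+1$ generic coordinates. For generic $C$, the restriction $\phi_C|_Y$ is a finite map onto the hypersurface $\phi_C(Y)\subset\C^{r+1}$; this is linear Noether normalisation with one extra coordinate. The image is therefore closed, and equals $\overline{\widetilde F(\C^n)}$. With finiteness in hand, $0\in\phi_C(Y)$ if and only if $\ker\phi_C\cap Y\neq\emptyset$. Now $\ker\phi_C$ is a generic linear subspace through the origin of codimension $r+1>\dim Y$. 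The cone over $Y$, or equivalently the closure of $Y$ together with its points at infinity in $\PPj^m$, has dimension $r$. So a generic codimension-$(r+1)$ linear space avoids $\overline Y\setminus\{0\}$, both its affine part and its part at infinity, and this gives $0\notin\phi_C(Y)$.

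\textbf{Probability.} For the quantitative bound, I would encode both failure events (non-finiteness of $\phi_C|_Y$, and $\ker\phi_C$ meeting the projective closure of $Y$ away from the origin) through a Chow form/resultant of $\overline Y\subset\PPj^m$ of degree at most $\deg Y\leq d^r$ in each of the $r+1$ rows of $C$. This is a nonzero polynomial in $c$ of total degree at most $(r+1)d^r$, and it vanishes whenever the intersection is nonempty. Schwartz--Zippel then gives failure probability at most $(r+1)d^r/|B|$, which also absorbs the dimension step above. The point to verify carefully is that this single Chow-form polynomial controls the projective closure, not merely the affine set $Y$. Doing so is exactly what excludes the unbounded branches identified as the obstacle in the APS step.
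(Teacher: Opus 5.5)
The paper gives no proof of this statement. It is imported from \cite{GSS19} and used only as a black box, in Proposition~\ref{prop:rewrite-normal-form} and Theorem~\ref{thm:exp-deg-bd-annihilator}. So there is no in-paper argument to compare against; your proposal must stand on its own as a replacement for that citation, and it does.

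When $0\notin Y$, the rows of $C$ are linear forms in $y_1,\ldots,y_m$. Their common zero set in $\PPj^m$ is exactly $\overline{\ker\phi_C}$, the affine kernel together with $\PPj(\ker\phi_C)$ at infinity. Hence the Chow form of the irreducible $r$-dimensional $\overline Y$, evaluated at these rows, vanishes precisely when $\overline{\ker\phi_C}\cap\overline Y\neq\emptyset$. Nonvanishing gives two things at once:
\begin{itemize}
\item finiteness of $\phi_C|_Y$, hence $\phi_C(Y)=\overline{\widetilde F(\C^n)}$ and $\trdeg\widetilde F=r$;
\item $0\notin\phi_C(Y)$.
\end{itemize}
Its multidegree $\deg\overline Y\leq d^r$ in each of the $r+1$ rows is exactly the constant in the statement. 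So the point you flagged resolves in your favour.

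If you prefer to avoid Chow forms, the same bound follows in the style of Lemma~\ref{lem:rewrite-random-sections}. Each hyperplane $\{\sum_j c_{ij}y_j=0\}$ passes through $[1:0:\cdots:0]\notin\overline Y$, so it contains a given component of the current section with probability at most $1/|B|$. After $r+1$ cuts the section is therefore empty, except with probability at most $(r+1)\deg\overline Y/|B|$.

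Several repairs are needed.
\begin{itemize}
\item \emph{Jacobian step.} The columns of $CJ_F$ are indexed by variables, not by the $f_j$. Pick $x_{k_1},\ldots,x_{k_r}$ with the corresponding minor of $J_F$ nonzero. Then apply Schwartz--Zippel to a nonzero coefficient of the minor viewed as a polynomial in $x$ over $\C[c]$; this coefficient has degree $r$ in $c$. You had the roles of $x$ and $c$ reversed.
\item \emph{Dimension count.} Phrase it through projection from the origin. This maps $\overline Y$ finitely onto an $r$-dimensional $X\subset\PPj^{m-1}$ that contains both the directions $[y]$, $y\in Y$, and the points at infinity of $Y$. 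In $\PPj^{m-1}$, the subspace $\PPj(\ker\phi_C)$ has codimension $r+1$. A naive count in $\PPj^m$ does not suffice, because the subspaces there are forced through the origin. The projection argument is what shows the restricted Chow form is not identically zero.
\item \emph{Case split.} The restricted Chow form vanishes identically when $0\in Y$, so keep the case split explicit. In that case only the transcendence degree is at stake, and your bound $r/|B|$ suffices.
\item \emph{Degree bound.} The estimate $\deg Y\leq d^r$ needs a reference or the usual argument: B\'ezout's inequality applied to the preimage of $r$ generic hyperplane sections of $Y$.
\item \emph{Perron aside.} Delete the claim that Perron's bound gives $P\in I_Y$ with $P(0)=1$ and $\deg P\leq d^r$. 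As justified it is circular: such a $P$ arises as $\widetilde P\circ\phi_C$ only once one knows that $\phi_C$ preserves non-APS. Nothing later in your argument uses it.
\end{itemize}
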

Combining these two results reduces deciding APS to $m=n+1$ and
transcendence degree $n$ case, which we record as the following proposition.

\begin{proposition}
\label{prop:rewrite-normal-form}
With the notation above, impose $n-r$ random affine linear equations
on the variables and parametrize their intersection. If $m>r+1$,
then take $r+1$ random linear combinations of the outputs.
With probability at least
\[
 1-\frac{2(n-r)d^m+(r+1)d^r}{|B|},
\]
this construction produces a map
$\widetilde F:\A^r_\C\to\A^{r+1}_\C$ of transcendence degree $r$ such that
$F$ has APS if and only if $\widetilde F$ has APS.
\end{proposition}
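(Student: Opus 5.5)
The plan is to prove Proposition~\ref{prop:rewrite-normal-form} in two stages: first cut down the domain with Lemma~\ref{lem:rewrite-random-sections}, then cut down the codomain with Theorem~\ref{thm:rewrite-random-projection}, and finally combine the two failure probabilities with a union bound.

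\textbf{Stage one: reducing the domain.} Consider the graph $\Gamma=\{(x,F(x))\}\subseteq\A^{n+m}_\C$. It is irreducible of dimension $n$, and its degree is at most $d^m$ by Bézout, since it is cut out by the $m$ equations $y_i=f_i(x)$ of degree at most $d$. The generic fibre of $\Gamma\to Y$ has dimension $n-r$ (Theorem~\ref{thm:dominance-function-fields}).

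I would apply Lemma~\ref{lem:rewrite-random-sections} $n-r$ times, using random affine equations in the $x$-variables only. The sharper claim needed is that these sections cut the generic fibres down to finite nonempty sets, so that the restricted map $F|_H:H\to Y$, with $H\cong\A^r$ parametrized affinely, is still dominant. Each step costs probability at most $2d^m/|B|$, giving total failure probability at most $2(n-r)d^m/|B|$.

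Dominance of $F|_H$ gives $\overline{F(H)}=Y$. Therefore APS, which by Theorem~\ref{thm:WAN} is the condition $0\in Y$, is unchanged, and $\trdeg(F|_H)=r=\dim H$.

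\textbf{Stage two: reducing the codomain.} If $m>r+1$, apply Theorem~\ref{thm:rewrite-random-projection} to $F|_H$, which has degree at most $d$ in the $r$ parameters. With failure probability at most $(r+1)d^r/|B|$, the resulting map $\widetilde F:\A^r\to\A^{r+1}$ has transcendence degree $r$ and the equivalence $0\in\overline{F|_H(\A^r)}\iff0\in\overline{\widetilde F(\A^r)}$ holds. If $m=r+1$, skip this step.

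\textbf{Main obstacle.} The delicate step is stage one. Lemma~\ref{lem:rewrite-random-sections} as stated only controls the dimension drop of a single set, not dominance onto $Y$. To get dominance, I would apply it to the fibre over a generic point $y\in Y$, namely $F^{-1}(y)$, of dimension $n-r$ and degree at most $d^m$. This shows that the sections meet $F^{-1}(y)$ in a finite nonempty set, so $y\in F(H)$. Since this holds on a dense set of $y$, the map $F|_H$ is dominant.

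Alternatively, one can argue at the level of function fields. Random affine substitutions for $n-r$ of the variables preserve the algebraic independence of $r$ chosen $f_i$. This is a Schwartz--Zippel argument on a nonzero Jacobian minor of degree at most $r(d-1)$, and it guarantees $\trdeg(F|_H)=r$, hence $\dim\overline{F(H)}=r=\dim Y$. Because $\overline{F(H)}\subseteq Y$ with $Y$ irreducible, this forces $\overline{F(H)}=Y$. I would present this second route, since it needs no degree bound beyond the one already stated, and it is what makes APS preserved exactly. The union bound then gives the stated probability.
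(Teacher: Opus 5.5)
Your argument is correct, and the second route you say you would present for stage one differs from the paper's. The paper applies Lemma~\ref{lem:rewrite-random-sections} once, to the generic fibre $\{f_i(x)=y_i\}$ viewed as an affine set over $E=\overline{\C(Y)}$. This set has dimension $n-r$ and degree at most $d^m$ by B\'ezout. A finite nonempty section of this single object says the generic point of $Y$ lies in $F(H)$, so $F|_H$ is dominant, with failure probability at most $2(n-r)d^m/|B|$. You instead keep $r$ independent $f_i$ independent on $H$ via the Jacobian criterion and Schwartz--Zippel. You then deduce $\overline{F(H)}=Y$ from $\dim\overline{F(H)}=r=\dim Y$ and irreducibility. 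Your route is more elementary, needs no bound on the fibre degree, and gives a failure probability polynomial in $d$, which implies the stated bound. The paper's route avoids the Jacobian criterion, so it does not rely on characteristic zero; compare Remark~\ref{rem:APS-for-other-fields}, where Jacobian rank is unavailable.

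Two things to tighten.

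\textbf{The Schwartz--Zippel polynomial.} The degree $r(d-1)$ is right only if you set the variables outside the column set of a nonzero minor $\Delta_{I,S}$ to random constants. That slice depends on $F$ and is not the construction in the statement. For random equations $Ax=b$, use
$\Phi(p,A)=\det\begin{pmatrix}J_{F_I}(p)\\ A\end{pmatrix}$,
a nonzero polynomial of degree $n-r$ in $A$ and at most $r(d-1)$ in $p$. For fixed $A$, write $H=\{A^+b+Ns\}$ with $AA^+=I$ and the columns of $N$ a basis of $\ker A$. Schwartz--Zippel first in $A$, then in $b$, gives some $p\in H$ with $\Phi(p,A)\neq0$, except with probability at most $\bigl((n-r)+r(d-1)\bigr)/|B|$. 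For $n>r$ this is at most $2(n-r)d^m/|B|$. Such a $p$ forces $\rank A=n-r$ and $\det(J_{F_I}(p)N)\neq0$, so $f_I|_H$ remain independent.

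\textbf{The first route.} The phrase ``this holds on a dense set of $y$'' swaps quantifiers. A probability bound for each fixed $y$ does not produce one $H$ that works for a dense set of $y$. Either pass to the generic fibre over $\overline{\C(Y)}$, as the paper does, or use that the sample space is finite. In the latter case, pick a single $y$ outside the finitely many proper closed sets $\overline{F(H_j)}$ coming from the bad samples $H_j$.
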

\begin{proof}
Let $Y=\overline{F(\A^n_\C)}$, and
$E=\overline{\C(Y)}$, the algebraic closure of function field of $Y$.
Recall that the map $F$ is a map of irreducible algebraic sets and is dominant on its image 
$F(\A^n_\C)$, Theorem \ref{thm:dominance-function-fields} 
gives that a general fiber has dimension $n-r$,
and degree at most $d^m$ by B\'ezout's
inequality.  This corresponds to saying that in $\A^n_E$ the system of equations $f_i-y_i$, for all $1\leq i\leq m$ define an affine algebraic set of dimension $n-r$ over $E$. 
Since $\C^n\subset \A^n_E$ is Zariski dense we can as well sample the coefficients of our random hyperplanes from $\Q$.
Applying Lemma~\ref{lem:rewrite-random-sections} to this affine algebraic set over $E$ gives that 
$n-r$ random hyperplanes cut it to a finite nonempty set with failure probability at most 
$2(n-r)d^m/|B|$.
So the map $F$ restricted to the $r$-plane given by the 
intersection of $n-r$ random hyperplanes is still 
dominant onto $F(\A^n_\C)$. So we obtain a map 
$\A^r_\C\to \A^m_\C$, that preserves APS. 

Theorem~\ref{thm:rewrite-random-projection}, then 
gives $r+1$ outputs while preserving APS and 
transcendence degree. The sum of the failure probabilities, then
give the claimed bound.
\end{proof}

Both operations preserve polynomial circuit size. If the input circuit
length is $L$, then $\log d$ and the logarithms of the displayed degree
bounds are polynomial in $L$. Hence an integer grid $B$ of polynomial
coefficient bit length makes the failure probability a fixed small
constant. The image dimension can be found as in
Lemma~\ref{lem:dimension}. Thus the geometric construction may be
carried out with the reduced map, renamed $F$, and dimensions $n,m$
satisfying
\begin{equation}\label{eq:rewrite-normal-form}
 m=n+1,\qquad \dim Y=\trdeg_\C(f_1,\ldots,f_m)=n\geq1.
\end{equation}

In this reduced case, we use the following standard form of linear
Noether normalization.

\begin{lemma}[{\cite[Proposition~2.42 and Lemma~2.43]{Mil24}}]
\label{lem:rewrite-linear-nn}
Let $Y\subseteq\A^{n+1}_\C$ be an irreducible affine variety of dimension
$n\geq1$. For a linear map
$\pi:\A^{n+1}_\C\to\A^n_\C$ given by $\pi(y_1,\ldots,y_{n+1})=(y_1-c_1y_{n+1},\ldots, y_n-c_ny_{n+1})$,
for general $(c_1,\ldots, c_n)\in \A^n_\C$ 
the restriction $\pi:Y\to\A^n_\C$ is finite
and surjective.
\end{lemma}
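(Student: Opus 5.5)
The statement to prove is the linear Noether normalization lemma (Lemma~\ref{lem:rewrite-linear-nn}): for an irreducible hypersurface-type variety $Y\subseteq\A^{n+1}_\C$ of dimension $n$ and general $c$, the projection $\pi(y)=(y_1-c_1y_{n+1},\ldots,y_n-c_ny_{n+1})$ restricts to a finite surjective map $Y\to\A^n_\C$. The plan is the classical one: show that $y_{n+1}$ is integral over $\C[u_1,\ldots,u_n]$, where $u_i=y_i-c_iy_{n+1}$. Then $\C[Y]=\C[u][y_{n+1}]$ is finite over $\C[u]$ by Remark~\ref{rem:finite-integral}, and surjectivity follows from Theorem~\ref{thm:finite-morphisms-geometric} once dominance is known.

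\textbf{Step 1: the defining polynomial.} Since $\dim Y=n$ inside $\A^{n+1}_\C$, $Y$ is an irreducible hypersurface, so $I_Y=(P)$ for a single irreducible polynomial $P$ of degree $e\ge1$. Let $P_e$ be its top homogeneous component; it is a nonzero form.

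\textbf{Step 2: monic substitution.} Substitute $y_i=u_i+c_iy_{n+1}$ for $i\le n$, so that $P(u+c\,y_{n+1},y_{n+1})$ becomes a polynomial in $y_{n+1}$ over $\C[u]$. Its coefficient of $y_{n+1}^e$ is $P_e(c_1,\ldots,c_n,1)$, because every other monomial contributes degree less than $e$ in $y_{n+1}$. The polynomial $P_e(c,1)$ is not identically zero as a function of $c$: otherwise $P_e$ would vanish on the dense affine chart $\{y_{n+1}\ne0\}$, hence everywhere. So for $c$ outside the hypersurface $\{P_e(c,1)=0\}$, which is exactly the genericity condition, dividing by this constant gives a monic equation for $y_{n+1}$ over $\C[u]$. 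Each $y_i=u_i+c_iy_{n+1}$ is then integral too, so $\C[Y]$ is a finitely generated $\C[u]$-module and $\pi|_Y$ is finite.

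\textbf{Step 3: injectivity and surjectivity.} The map $\C[u_1,\ldots,u_n]\to\C[Y]$ is injective. If it were not, the image of $\pi$ would lie in a proper closed subset, so $\dim\overline{\pi(Y)}<n$. But Theorem~\ref{thm:finite-morphisms-geometric} says finite maps preserve dimension, giving $\dim\pi(Y)=\dim Y=n$, a contradiction. Hence $\pi$ is dominant, and since finite morphisms are closed, $\pi(Y)=\A^n_\C$.

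The only genuinely delicate point is Step 2, namely identifying the leading coefficient and checking that it is a nonzero constant for general $c$. This is routine once one uses the top homogeneous component, so I expect no real obstacle.
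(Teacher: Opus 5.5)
Your proof is correct and takes the same route the paper relies on. The paper cites Milne's argument and, in Remark~\ref{rem:nnl-failure}, records exactly your genericity condition $P_\delta(c_1,\ldots,c_n,1)\neq 0$ on the top homogeneous component of the defining polynomial; finiteness and surjectivity then follow as in your Steps~2 and~3.
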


By Remark~\ref{rem:finite-integral}, the restriction
$\pi:Y\to\A^n_\C$ induces an injective ring homomorphism
\[
\pi^*:\C[u_1,\ldots,u_n]
\longrightarrow
\C[Y]=\C[y_1,\ldots,y_{n+1}]/I_Y,
\qquad
u_i\longmapsto (y_i-c_i y_{n+1})\bmod I_Y.
\]
Moreover, $\C[Y]$ is a finite module over the image of
$\pi^*$. Consequently, each coordinate function
$y_j|_Y$, for $1\leq j\leq n+1$, is integral over this image.

The following gives the failure probability of a random choice of $c\in\A^n_\C$ in Lemma \ref{lem:rewrite-linear-nn}.

\begin{remark}\label{rem:nnl-failure}
Since $Y$ is an irreducible affine hypersurface, it is given by 
the vanishing of a single irreducible polynomial $P$. Let $\delta=\deg P$, and
$P_\delta$ be the highest-degree homogeneous part of $P$.
By the proof of \cite[Lemma~2.43]{Mil24} the 
linear map $\pi: \A^{n+1}_\C\to \A^{n}_\C$ fails to give 
a Noether normalization if the chosen 
$(c_1,\ldots, c_n)\in \A^n_\C$ satisifes 
$P_\delta(c_1,\ldots,c_n,1)=0$. 
If $c_i$ are drawn independently and uniformly 
from a set $B\subset \Q$, then Schwartz-Zippel lemma
gives the failure probability of $\pi$ being a Noether 
normalization being at most $\delta/|B|$.
\end{remark}

\subsection{Complexity theoretic inputs}

\paragraph{Input size.}
For both problems, $L$ denotes the total size of the 
 input circuits, including binary rational
constants. Gates perform addition, subtraction, and 
binary multiplication. Formal degrees are computed by
taking the maximum at addition gates and the sum at 
multiplication gates; they have polynomial-length 
binary representations. Output degrees and coefficient
bit lengths are at most $2^{\poly(L)}$. We never expand
these outputs. After clearing rational denominators gate by gate,
Kronecker substitution reduces coefficient-bit access to
$\mathsf{BitSLP}\in\CH$~\cite[Theorem~4.1]{ABKPM09}.
The implicit coefficient table has size $2^{\poly(L)}$; hence
\cite[Corollaries~5.13 and~6.4, Lemma~6.5]{AGS26} give
resultant-coefficient access in $\CH$ for polynomially many such
forms, without a degree-two assumption.
A sparse polynomial with binary exponents
converts to a circuit in polynomial time by repeated 
squaring. An input rational target is handled by 
subtracting its coordinates at the output gates. Dummy
variables and a zero generator allow us to assume 
$n,m\geq1$ without changing either problem.

\paragraph{The Counting Hierarchy.}
Set $C_0\mathsf P=\mathsf P$ and
$C_{i+1}\mathsf P=\mathsf{PP}^{C_i\mathsf P}$. Then
$\CH=\bigcup_{i\geq0}C_i\mathsf P$ 
\cite{koiran2011interpolation}. A fixed number of
polynomially bounded existential or universal 
quantifiers over a language at a fixed level stays 
in $\CH$. The hierarchy is closed under complement 
and polynomial-time Turing reductions to a language at
a fixed level. A randomized
polynomial-time computation with such an oracle also
stays in $\CH$, since $\mathsf{BPP}^A\subseteq
\mathsf{PP}^A$. A fixed finite collection of these 
oracles can be combined at one fixed level.

\paragraph{Trailing coefficients.}
For a nonzero polynomial $R(t,z)=\sum_{i\geq0}t^iR_i(z)$ over a domain, let
\[
 \TT_t R=R_e(z),\qquad e=\min\{i:R_i\ne0\}.
\]
Thus $\TT$ extracts a coefficient, not the coefficient times $t^e$. Trailing coefficients are multiplicative, and nested operations are read from right to left. The coefficients are polynomials in all the remaining variables. Consequently, evaluation at $s=0$ generally does not commute with a trailing-coefficient operation.

\section{The geometry of APS}\label{sec:geometry-rewrite}
In this section, we discuss the geometric construction behind reducing the problem to multivariate resultants. 
We follow the discussion in Section \ref{sec:aps_overview}.
Let $0_m$, $0_n$, and $0$ be the origins of $\A^m_\C$, $\A^n_\C$, and $\A^1_\C$, respectively, and
let $Y=\overline{F(\A^n_\C)}$. Then
Proposition~\ref{prop:rewrite-normal-form} reduces the problem to
\[
 F=(f_1,\ldots,f_m):\A^n_\C\longrightarrow\A^m_\C,\qquad
 m=n+1,\qquad \dim Y=n,\qquad .
\]
Let $x_1,\ldots, x_n$ be the input coordinates (variables) of $F$ and $y_1,\ldots, y_m$ be the coordinates on $\A^m_\C$, i.e., the output variables of $F$. For notational simplicity we will refer to
$x_1,\ldots, x_m$ and $y_1,\ldots, y_m$ by $x$ and $y$ coordinates, respectively.
Let $d\geq\max(2,\deg f_1,\ldots,\deg f_m)$ be the formal-degree
bound from the input-size discussion; thus $d\leq2^{\poly(L)}$.
Recall the field of Puiseux series $\K$ of 
Section~\ref{sec:prelim-aps-geometry}.
The aim is to construct a system of $m+n$ homogeneous 
polynomials in $m+n+1$ variables over $\K$ such that 
affine solutions of this system over $\K$ exactly 
capture if
$0_m\in Y$.

\subsection{The graph and a finite family along a line}
\label{sec:rewrite-line}

Let $\Gamma_F\subset\A^n_\C\times \A^m_\C$ be the graph
of the polynomial map $F$, and let $p:\Gamma_F\to Y$ 
given by $(x,y)\mapsto y$ be the projection map to the 
second component. Then 
\[
 \Gamma_F=V(y_1-f_1(x),\ldots,y_m-f_m(x)).\]
 Recall that $\Gamma_F\cong\A^n_\C$, and $p$ is a dominant 
 morphism of irreducible affine algebraic spaces by 
 the definition of $Y$.
Theorem~\ref{thm:dominance-function-fields} gives a nonempty open
$U\subseteq Y$ over which the fibers of $p$ are finite 
and nonempty. 
Choose a finite surjective linear 
projection $\pi:Y\to\A^n_\C$ using Lemma~\ref{lem:rewrite-linear-nn} ($\pi$ gives a Noether normalization of $Y$). Let $u_1,\ldots, u_n$
be the coordinates on the image of $\pi$, denoted by $u$.

It will be important to choose $U$ as a 
complement of a closed set of singly exponential degree.
The composition $ \pi\circ F:\A^n_\C\to \A^n_\C$ is dominant, 
so Theorem \ref{thm:dominance-function-fields}
shows that $x_i$ is algebraic over the field
generated by the polynomials $\pi\circ F$, equivalently over $\C(u_1,\ldots, u_n)$.
Indeed, Perron's bound~\cite[Section~1, p.~2]{GSS19} 
gives relations $Q_i(\pi(F),x_i)=0$ with 
$\deg Q_i\leq d^n$. Let $b$ be the product of the 
leading coefficients of $Q_i(u,T)$ in $T$, which are 
nonzero.
Outside the algebraic set $b=0$, the map $p$ is finite 
and surjective by Section~\ref{sec:prelim-integrality}.
So we can choose $U$ as the complement of the closed 
set defined by $b=0$ of degree $nd^n$. 
B\'ezout along with $\deg(Y)=\deg(\Ann(F))\leq d^n$ gives
$\deg(Y\setminus U)\leq n d^{2n}$. We summarize this 
in the following proposition.
\begin{proposition}\label{prop:U-singly-exp}
Let $U\subset Y$ be as above. Then $U$ can be chosen
as a complement of a closed set of degree at most $nd^{2n}$. 
\end{proposition}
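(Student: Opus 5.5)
The plan is to make the bad locus explicit and bound its degree by B\'ezout. Since $\pi\circ F:\A^n_\C\to\A^n_\C$ is dominant, each coordinate $x_i$ is algebraic over $\C(u_1,\ldots,u_n)$, where $u=\pi(F)$. Perron's bound (as used in \cite{GSS19}) gives nonzero relations $Q_i(u,T)\in\C[u,T]$ with $Q_i(\pi(F(x)),x_i)=0$ and $\deg Q_i\leq d^n$. Let $b_i(u)$ be the leading coefficient of $Q_i$ in $T$; then $\deg b_i\leq d^n$. Put $b=\prod_i b_i$, so $\deg b\leq nd^n$ and $b\neq0$.

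The first step is to check that the fibers of $p$ are finite and nonempty over $U:=Y\setminus V(b\circ\pi)$. Work over the open set $\{b\neq0\}\subset\A^n_\C$. There each $x_i$ satisfies a monic equation obtained by dividing $Q_i$ by $b_i$, with coefficients in $\C[u][b^{-1}]$. By Remark~\ref{rem:finite-integral}, $\C[x][b^{-1}]$ is then finite over $\C[u][b^{-1}]$. Hence $\pi\circ p$ is finite over $\{b\neq0\}$. Since $\pi|_Y$ is finite, $p$ is finite over $U=(\pi|_Y)^{-1}(\{b\neq0\})$: finiteness of a composite implies finiteness of the first factor, because a finite $\C[u]_b$-module generated over $\C[Y]_b$ is finite over $\C[Y]_b$. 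By Theorem~\ref{thm:finite-morphisms-geometric}, the fibers are finite. Dominance together with finiteness makes $p$ surjective over $U$, so the fibers are also nonempty. This step is routine.

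The second step is the degree bound. The closed set is $Y\setminus U=Y\cap V(b\circ\pi)$. Here $b\circ\pi$ is a polynomial on $\A^{n+1}$ of degree $\leq nd^n$, since $\pi$ is linear. $Y$ is the hypersurface cut out by the irreducible generator of $\Ann(F)$, whose degree is at most $\prod\deg f_i\leq d^{n+1}$ by Perron. To get the stated $d^n$, I would use the fact that $Y$ is the closure of the image of $\A^n$ under maps of degree $\leq d$, so that $\deg Y\leq d^n$ (intersect with $n$ generic hyperplanes and count preimages by B\'ezout on $\A^n$). B\'ezout's inequality for the intersection $Y\cap V(b\circ\pi)$ then gives $\deg(Y\setminus U)\leq\deg Y\cdot\deg(b\circ\pi)\leq d^n\cdot nd^n=nd^{2n}$.

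The main obstacle is the bound $\deg Y\leq d^n$ (rather than $d^{n+1}$) together with the precise form of the Perron bound $\deg Q_i\leq d^n$. Both rely on Perron's theorem for $n+1$ polynomials in $n$ variables: $\deg Q_i$ is bounded by the product of the degrees of $u_1,\ldots,u_n$, which are each $\leq d$, times $\deg x_i=1$. I would cite that directly rather than reprove it.
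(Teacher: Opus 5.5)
Your proposal is correct and follows essentially the same route as the paper: Perron relations $Q_i(\pi(F),x_i)=0$ of degree at most $d^n$, the product $b$ of their leading coefficients in $T$ (degree at most $nd^n$), $U=Y\setminus V(b\circ\pi)$, and B\'ezout with $\deg Y\le d^n$. You additionally spell out the finiteness and surjectivity of $p$ over $\{b\neq0\}$, which the paper only cites; and note that the weighted form of Perron's bound applied to $f_1,\dots,f_{n+1}$ already gives $\deg Y\le \prod_i\deg f_i/\min_i\deg f_i\le d^n$ directly, as the paper uses, so your geometric argument for this bound is valid but not needed.
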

By Theorem~\ref{thm:finite-morphisms-geometric}, $\pi$ 
is closed, so $\pi(Y\setminus U)$ is closed of dimension
$\dim(Y\setminus U)<n$. Hence let $V$ be the open set 
\[
 V=\A^n_\C\setminus\pi(Y\setminus U).
\]
Thus every point of $Y$ above $V$ belongs to $U$.

Choose a nonzero $\alpha\in\A^n_\C$ such that
$L=\{t\alpha:t\in\C\}$ meets $V$, and identify $L$ with $\A^1_\C$, whose coordinate ring is given by $\C[t]$. Let
\[
 C\coloneqq Y\cap\pi^{-1}(L),\qquad
 D\coloneqq \overline{(\pi\circ p)^{-1}(L\cap V)}\subseteq\Gamma_F.
\]
Let $\beta\coloneqq \pi\circ p$. Then $\beta$ is finite
and surjective over $V$.
Since $V$ is an open subset of $\A^n_\C$, so it is 
integrally closed in its field of fractions which is 
the same as $\C(u_1,\ldots, u_n)$. Hence by 
Theorem~\ref{thm:finite-morphisms-geometric} $\beta$ is 
open. The following proposition shows that the map
$\beta$ is dominant on each component of $D$. 
\begin{proposition}\label{prop:D-is-dominant}
Let $D$, $\beta$, $L$ and $V$ be as above, and let $D'$
be an irreducible component of $D$. 
Then $\beta|_{D'}:D'\to L$ is dominant, and $K(D')/K(L)$
is a finite extension. 
\end{proposition}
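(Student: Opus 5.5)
The approach is to work on the open part $D^\circ\coloneqq\beta^{-1}(L\cap V)\subseteq\Gamma_F$, where $\beta=\pi\circ p$ is finite, surjective and open. Dominance of $\beta|_{D'}$ will be read off from a nonempty open subset of $D'$ contained in $D^\circ$. The finiteness of $K(D')/K(L)$ will then follow from a dimension count via Theorem~\ref{thm:dominance-function-fields}.

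\textbf{Step 1: each component of $D$ is a curve meeting $D^\circ$.} Since $D=\overline{D^\circ}$, every irreducible component $D'$ of $D$ is the closure of an irreducible component $D'^\circ$ of $D^\circ$, and $D'\cap D^\circ$ is a nonempty open subset of $D'$. Over $V$ the map $\beta$ is finite, hence has finite fibers by Theorem~\ref{thm:finite-morphisms-geometric}. Therefore $\dim D'^\circ\le\dim(L\cap V)=1$. To rule out $\dim D'^\circ=0$, I will use openness of $\beta$ over $V$, exactly as in the proof of Lemma~\ref{lem:horizontal}. Let $U'$ be the complement in $\beta^{-1}(V)$ of the other components of $D^\circ$. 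Then $U'$ is open and
\[
\beta(U')\cap L=\beta(U'\cap D^\circ)\subseteq \beta(D'^\circ).
\]
Since $\beta$ is open and $U'\neq\varnothing$, the set $\beta(U')$ is open. Its intersection with the line $L$ is nonempty, because $D'^\circ\cap U'$ is nonempty and maps into $L$. So $\beta(U')\cap L$ is a nonempty open subset of $L$ and hence infinite. A $0$-dimensional, hence finite, component could not map onto an infinite set. Therefore $\dim D'=1$ and $\beta(D')$ contains a nonempty open subset of $L$.

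\textbf{Step 2: dominance and finiteness.} Because $L\cong\A^1_\C$ is irreducible, any nonempty open subset of it is dense, so $\beta|_{D'}:D'\to L$ is dominant. Theorem~\ref{thm:dominance-function-fields}, applied to the dominant morphism of irreducible curves $D'\to L$, gives $\trdeg_{K(L)}K(D')=\dim D'-\dim L=0$. Since $K(D')$ is finitely generated over $\C$, and hence over $K(L)$, it is a finite algebraic extension of $K(L)=\C(t)$, as in the example following that theorem.

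\textbf{Main obstacle.} The delicate point is Step 1: excluding components of $D$ that are isolated points, or that sit inside the exceptional locus where $\beta$ is not finite. Taking the closure of $\beta^{-1}(L\cap V)$, rather than of all of $\beta^{-1}(L)$, already discards the bad fibers. What remains is to justify the openness of $\beta$ over $V$. I will justify it by checking the hypotheses of Theorem~\ref{thm:finite-morphisms-geometric} for $\beta^{-1}(V)\to V$. First, $\Gamma_F\cong\A^n$ is irreducible. Second, $V$ is normal, since it is an open subset of $\A^n$. Third, $\beta$ is dominant and finite over $V$, because $p$ is finite over $U$ and $\pi$ is finite. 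If $V$ is not affine, I will replace it by a basic open affine $V_g\ni t_0\alpha$ for a chosen point $t_0\alpha\in L\cap V$, and run the same argument there.
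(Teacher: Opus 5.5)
Your proposal is correct and follows essentially the paper's own argument: openness of $\beta$ over $V$, the component argument of Lemma~\ref{lem:horizontal} run over $L\cap V$, passage to closures, and Theorem~\ref{thm:dominance-function-fields} to get $\trdeg_{\C(t)}\C(D')=0$. You also supply details the paper leaves implicit, such as the dimension bound from finiteness and the hypotheses needed for openness. Your shrinking step is unnecessary: with the paper's choice of $U$ via $b$, one has $V=\{b\neq0\}$, already a basic open affine. If you do use it, choose $V_g$ around the image of a point of the given component $D'^\circ$ lying off the other components, not around an arbitrary point of $L\cap V$.
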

\begin{proof}
  The proof is similar to Lemma~\ref{lem:horizontal}. 
  Recall that $\beta$ is open. We apply the component
  argument as in 
  the proof of Lemma~\ref{lem:horizontal}, over
$L\cap V$.  Each component of 
$\beta^{-1}(L\cap V)$ maps onto $L\cap V$ and has dimension one. On taking closures, one obtains that 
each irreducible component $D'$ of $D$ is a curve that
dominates $L$, and so 
the function field of $D'$, $\C(D')$ is a finite
extension of fields over $\C(t)\simeq \C(L)$ by Theorem~\ref{thm:dominance-function-fields}.
\end{proof}
Over $\C(t)$ the required equations of $D$ are simply
\begin{equation}\label{eq:rewrite-square}
 \begin{cases}
 g_i=y_i-f_i(x)=0 & (1\leq i\leq m),\\
 g_{m+j}=\pi_j(y)-\alpha_jt=0 & (1\leq j\leq n).
 \end{cases}
\end{equation}
There are $n+m$ equations in $n+m$ variables. Their solution set is
finite and nonempty for $\alpha t\in V$.
When viewed over $\C(t)$, the Proposition \ref{prop:D-is-dominant} says that 
each coordinate $x_i$ or $y_i$ is algebraic 
over $\C(t)$, and so a solution to the system \ref{eq:rewrite-square} over $\C(t)$ has coordinates that  are algebraic over $\C(t)$.
In the next section, we see that the $y$-coordinates are in fact integral, and how one can view them as elements of $\K$. The following example illustrates the above discussion.
\begin{example}\label{ex:line-slice}
For $F(x_1,x_2)=(x_1,x_1x_2-1,x_1^2x_2-1)$, the image closure is
$Y=V(y_3-y_1(y_2+1)+1)$. Take the $y_1,y_2$-plane $H=\{y_3=0\}$,
identified with $\A^2$, and project by $\pi(y)=(y_1,y_2)$. This is an
isomorphism on $Y$, hence a finite Noether normalization. The line
$L\subset H$ given by $y_1=y_2=t$ imposes $x_1=t$ and $x_1x_2=t+1$.
The projection and line restriction therefore give a finite generic fiber
whose coordinates lie in $\C(t)$. We can choose our open set $V$ to be $\A^2_\C\setminus \{y_1=0\}$.
The generators $g_i$ of $I_D$ are given by 
\begin{equation}\label{eq:overview-example-square}
 g_1=y_1-x_1,\ g_2=y_2-x_1x_2+1,\ g_3= y_3-x_1^2x_2+1,\ 
g_4= y_1-t,\ g_5=y_2-t.
\end{equation}
Above $V$, the solution set over $\C(t)$ of the system \eqref{eq:overview-example-square} is finite. 
\end{example}

\subsection{Specializing the finite generic fiber}
\label{sec:rewrite-lifting}

Put $Y_0=Y\cap\pi^{-1}(0_n)$. This set is finite and nonempty by
Theorem~\ref{thm:finite-morphisms-geometric}.
Lemma~\ref{lem:horizontal} already ensures that every $y_0\in Y_0$
lies on a component of $C$ dominating $L$. 
First we recall the following standard fact about 
algebraic embeddings and integral elements. 

\begin{lemma}[{\cite[Tags~09GU,~00IC]{Stacks26}}]
\label{lem:rewrite-branches}
The natural inclusion $\C(t)\hookrightarrow\C((t))$ extends to an embedding
$M\hookrightarrow\K=\overline{\C((t))}$, for each Galois extension $M$ of $\C(t)$. 
Every such embedding maps the integral closure of $\C[t]$ in $M$ into $\mathcal O_\K$, 
and $\mathcal O_\K$ is integrally closed. 
\end{lemma}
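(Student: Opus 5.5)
Two parts need proof: the embedding of $M$ into $\K$ extending $\C(t)\hookrightarrow\C((t))$, and the two integrality claims. I would take them in that order.

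\emph{The embedding.} Since $M/\C(t)$ is a finite (Galois) algebraic extension, $M$ can be written as $\C(t)[T]/(p)$ for a primitive element with irreducible minimal polynomial $p\in\C(t)[T]$. Because $\K$ is algebraically closed (Newton--Puiseux) and contains $\C((t))\supseteq\C(t)$, the image of $p$ in $\K[T]$ has a root $\theta\in\K$. Sending the primitive element to $\theta$ then defines a field homomorphism $M\to\K$ over $\C(t)$. It is injective because $M$ is a field, and it extends the given inclusion. More abstractly, this is the standard extension of embeddings into an algebraically closed field, which is the content of Tag 09GU. The Galois hypothesis is not actually needed here; any finite extension works.

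\emph{Integral elements land in $\mathcal O_\K$.} Let $b\in M$ be integral over $\C[t]$, so that
\[
b^d+a_{d-1}b^{d-1}+\cdots+a_0=0,\qquad a_i\in\C[t].
\]
Apply the embedding and suppose $\val_t(b)=v<0$. Each $a_i$ has $\val_t(a_i)\ge0$, so $\val_t(a_ib^i)\ge iv>dv$ for $i<d$. Hence the term $b^d$, of valuation exactly $dv$, has strictly smaller valuation than every other term. The ultrametric inequality then gives $\val_t(\text{sum})=dv\neq\infty$, which contradicts the sum being $0$. Therefore $v\ge0$, and the image of $b$ lies in $\mathcal O_\K$.

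\emph{$\mathcal O_\K$ is integrally closed in $\K$.} The same argument works verbatim with coefficients in $\mathcal O_\K$ in place of $\C[t]$. If $b\in\K$ satisfies a monic equation with coefficients of nonnegative valuation, then $\val_t(b)<0$ again makes the leading term strictly dominant, a contradiction. The only ingredient is that $\mathcal O_\K$ is the valuation ring of the $\Q$-valued valuation $\val_t$ on $\K$, since valuation rings are integrally closed (Tag 00IC). So $\mathcal O_\K$ is integrally closed.

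\emph{Expected obstacle.} The one substantive point is checking that $\val_t$ really is a valuation on all of $\K$. Multiplicativity and the ultrametric inequality hold on each $\C((t^{1/e}))$ and are compatible as $e$ varies, so they hold on the union. Algebraic closedness of $\K$, the Newton--Puiseux theorem, is quoted rather than reproved.
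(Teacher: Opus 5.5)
Your argument is correct and is essentially the paper's: the paper gives no proof beyond citing Stacks Tags 09GU (an embedding of the base field into an algebraically closed field extends to an algebraic extension) and 00IC (valuation rings are integrally closed). You have unpacked these two facts, via a primitive element and the dominant-leading-term valuation argument that the paper itself reuses in Corollary~\ref{cor:boundedness}. The only caveat is that the primitive-element construction needs $M/\C(t)$ finite, which is how the lemma is applied in Proposition~\ref{prop:rewrite-lift}; for an infinite algebraic $M$, your appeal to the general extension result (a Zorn's lemma argument) covers it.
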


Recall from Proposition \ref{prop:D-is-dominant} that for each irreducible component $D'$ of $D$,
$\C(D')$ is a finite extension of $\C(t)$. In fact, 
the coordinates $(x(t),y(t))$ of a point on $\C(D')$ 
are algebraic over $\C(t)$. By Lemma 
\ref{lem:rewrite-branches} we consider them as elements 
of $\K$. The following proposition gives the 
valuation theoretic properties of the 
$y$-coordinates.
\begin{proposition}
\label{prop:rewrite-lift}
Every solution $(x(t),y(t))\in\K^{n+m}$ of
\eqref{eq:rewrite-square} satisfies $\val_t(y_i(t))\geq0$,
i.e., $y_i(t)\in \mathcal O_\K$, and
$y(0)\in Y_0$. Conversely, every point of $Y_0$ occurs as such a
residue.
\end{proposition}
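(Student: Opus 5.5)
The proof splits into the valuation bound, the residue claim, and the converse.

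\emph{Integrality.} Let $(x(t),y(t))\in\K^{n+m}$ solve \eqref{eq:rewrite-square}. The equations $g_i=0$ give $y(t)=F(x(t))$. Hence $P(y(t))=0$ for every $P\in I_Y=\Ann(F)$, so $y\mapsto y(t)$ defines a $\C$-algebra map $\C[Y]\to\K$. By Lemma~\ref{lem:rewrite-linear-nn} and the discussion after it, each coordinate $y_j|_Y$ satisfies a monic equation
\[
y_j^{e}+a_{e-1}(\pi(y))\,y_j^{e-1}+\cdots+a_0(\pi(y))=0,\qquad a_k\in\C[u_1,\ldots,u_n].
\]
Applying the map and using $\pi(y(t))=t\alpha$, which is the second block of equations, we see that $y_j(t)$ is a root of a monic polynomial with coefficients $a_k(t\alpha)\in\C[t]\subseteq\mathcal O_\K$. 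Since $\mathcal O_\K$ is integrally closed (Lemma~\ref{lem:rewrite-branches}), $y_j(t)\in\mathcal O_\K$. One can also see this directly: if $\val_t y_j<0$, the leading term $y_j^e$ would have strictly smaller valuation than every other term.

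\emph{Residue in $Y_0$.} Let $y(0)$ be the constant term. Since $\C[Y]\to\K$ lands in $\mathcal O_\K$, reduction modulo $\mathfrak m_\K$ gives $P(y(0))=0$ for all $P\in I_Y$, so $y(0)\in Y$. Because $\pi$ is linear, $\pi(y(0))=(t\alpha)|_{t=0}=0_n$. Hence $y(0)\in Y_0$.

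\emph{Converse.} This is the main obstacle. Fix $y_0\in Y_0$. The aim is to produce a point of $D$ over the generic point of $L$ that specializes to $y_0$. Lemma~\ref{lem:horizontal} gives a component $W$ of $C=Y\cap\pi^{-1}(L)$ containing $y_0$ and mapping onto $L$. I then lift $W$ to $\Gamma_F$:
\begin{itemize}
\item Since $L\cap V\ne\emptyset$ and $W$ dominates $L$, the set $W\cap\pi^{-1}(V)\subseteq U$ is dense in $W$.
\item Because $p$ is surjective with finite fibres over $U$, $p^{-1}(W\cap U)$ is a union of curves inside $D$.
\item Take a component $D'$ of $D$ whose image under $p$ is dense in $W$. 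Since $p$ is finite over $U$, $p(D')\supseteq W\cap U$.
\end{itemize}
Next I realize the branch of $W$ through $y_0$ by a Puiseux parametrization. The normalization of $W$ is finite over $L$, so it has a point $w$ above $y_0$ lying over $t=0$. The local ring at $w$ embeds into $\K$ via $t\mapsto t$ (Lemma~\ref{lem:rewrite-branches}), which gives $y(t)\in\mathcal O_\K^m$ with $y(0)=y_0$ and $\pi(y(t))=t\alpha$. This $y(t)$ is the generic point of $W$ and lies over the generic point of $L$, hence in $U$. Then $p^{-1}(y(t))$ is a finite nonempty set over the algebraically closed field $\K$. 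Picking $x(t)$ in it yields a solution of \eqref{eq:rewrite-square} with residue $y_0$.

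The delicate point is the middle step: the lift must be taken at the generic point, not at $t=0$. The point $y_0$ itself may not lie in $U$, and it may have no preimage at all. This is exactly why the converse only claims that $y_0$ occurs as a residue, not as an image point.
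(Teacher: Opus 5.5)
Your proof is correct. The forward direction (monic equations over $\C[u]$ specialized at $u=t\alpha$, then reduction modulo $\mathfrak m_\K$) is exactly the paper's; the converse implements the same idea by a different mechanism.

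\emph{The paper's converse.} It takes a component $D'$ of $D$ dominating $W$ and a Galois closure $M$ of $\C(D')/\C(t)$. Lying-over in the integral closure of $\C[t]$ in $M$, which contains $\C[W]$, gives a maximal ideal over $\mathfrak m_{y_0}$. An arbitrary embedding $M\hookrightarrow\K$ is then Galois-conjugated until it induces that ideal, and this single embedding yields $x(t)$ and $y(t)$ simultaneously.

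\emph{Your converse.} You choose a point $w$ of the normalization $\widetilde W$ over $y_0$ and embed $\mathcal O_{\widetilde W,w}$ into $\K$. Only afterwards do you lift through the fibre of $p$ over the generic point $y(t)\in U(\K)$.

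\emph{What each route buys.} Yours is more local, and it shows that $D$ and $D'$ are not really needed for this proposition, so your bullet list can be dropped. The paper's route needs only Lemma~\ref{lem:rewrite-branches}, lying-over and transitivity of the Galois action on primes over $(t)$, and it never has to discuss fibres over non-closed points.

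\emph{Two steps to justify.} Each of these needs one more sentence.
\begin{enumerate}
\item Lemma~\ref{lem:rewrite-branches} alone does not let you prescribe that the maximal ideal of $\mathcal O_{\widetilde W,w}$ lands in $\mathfrak m_\K$. Use $\widehat{\mathcal O}_{\widetilde W,w}\cong\C[[s]]$ with $s^e=t$, or the paper's conjugation argument.
\item Theorem~\ref{thm:dominance-function-fields} only gives nonempty fibres over $\C$-points of $U$. Nonemptiness of $p^{-1}(y(t))$ over $\K$ therefore needs a base-change remark: $p$ is finite and surjective over the paper's $U$, and surjectivity is stable under base change. More simply, extend $\C(W)\hookrightarrow\K$ to the finite extension $\C(D')$ and read off $x(t)$ there.
\end{enumerate}
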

\begin{proof}
Since $\pi$ is finite, Remark~\ref{rem:finite-integral} shows that
each $y_i$ is integral over $\C[u_1,\ldots,u_n]$ via $\pi^*$.
Substituting $u_j=\pi_j(y)=\alpha_jt$ gives monic equations over $\C[t]$,
so $\val_t(y_i(t))\geq0$. Taking constant terms in the equations of
$Y$ and in $\pi(y)=t\alpha$ gives $y(0)\in Y_0$.

Conversely, fix $y_0\in Y_0$ and choose an irreducible
component $W$ of $C$ through $y_0$ dominating $L$,
by Lemma~\ref{lem:horizontal}.
Since $p(D)$ contains the part of $C$ above $L\cap V$,
some irreducible component $D'$ of $D$ dominates $W$.
Let $E=\C(D')$ and take a finite Galois extension
$M/\C(t)$ containing $E$. Since $W\to L$ is finite,
the coordinate ring $\C[W]$ is contained in the
integral closure $B$ of $\C[t]$ in $M$.

Recall from the classical geometric points and maximal ideals correspondence that there is a maximal ideal $\mathfrak{m}_{y_0}$ of $\C[W]$ whose vanishing set is exactly the point $y_0$. The ideal $\frak m_{y_0}$ is generated by image of $y_i-y_{0,i}$ in $\C[W]$ by Hilbert Nullstellensatz. 
By the lying-over
theorem~\cite[Theorem~5.10]{AM69}, there exists a
maximal ideal $\mathfrak n$ of $B$ lying above
$\mathfrak m_{y_0}$.

By Lemma~\ref{lem:rewrite-branches}, choose an embedding
$\rho:M\hookrightarrow\K$ over $\C(t)$.
Since $B$ is integral over $\C[t]$, we have
$\rho(B)\subseteq\mathcal O_\K$.
The embedding $\rho$ therefore determines a maximal
ideal of $B$ above $(t)$ because $\frak m_{y_0}$ maps to an ideal above $(t)$ in $\rho(\C[W])$.
Since the maximal ideals in a finite
Galois extension are Galois conjugates ~\cite[Tag~09EA]{Stacks26},
we may compose $\rho$ with a Galois automorphism
so that the resulting embedding satisfies
\[
\val_t\bigl(\rho(y_i)-y_{0,i}\bigr)>0
\qquad (1\leq i\leq m).
\]
Applying this embedding to the coordinate functions
of $D'$ gives a solution $(x(t),y(t))\in\K^{n+m}$
of~\eqref{eq:rewrite-square} satisfying
$y(0)=y_0$. 
\end{proof}
In Example \ref{ex:line-slice}, the curve $C$ is itself irreducible, and for $t\in V=\A^1_\C\setminus \{0\}$, the coordinates of 
solutions over $\C(t)$ of the
system \ref{eq:overview-example-square} are
\[(x_1(t),x_2(t),y_1(t),y_2(t),y_3(t))=
 \left(t,\frac{t+1}{t},t,t,t^2+t-1\right)\in D,
\]
of whom the $y$-coordinates $t$, $t$, $t^2+t-1$ are integral over $\C[t]$. Furthermore, $Y_0=\{(0,0,-1)\}$, and one can directly see that $y_i-y_{0,i}$ is $t, t, t^2+t$, for $i=1,\ 2,\ 3$, respectively. So for all $i$ $\val_t(y_i-y_{0,i})=1$.

\subsection{Relative closure and homogenization of all coordinates}
\label{sec:rewrite-homogenization}

Take the closure $\overline D\subseteq\PPj^n_\C\times C$
via the natural embedding
$\A^n_\C\times C\to\PPj^n_\C\times C$.
The closure $\overline D$ is the union of the closures
$\overline{D'}$ of the irreducible components $D'$ of $D$.
By Proposition~4.2, each $D'$ is an irreducible curve.
Since $D'$ is locally closed, it is open and dense in
$\overline{D'}$. Thus $\overline{D'}\setminus D'$ is a
proper closed subset of an irreducible curve and hence
consists of finitely many points
(see \cite[Chapter~I, \S 8, p.~56]{Har77}).
Moreover, the irreducible components of $\overline D$
are precisely the closures of the irreducible components
of $D$.

Since $\PPj^n_\C$ is complete, the projection
$\PPj^n_\C\times C\to C$ is closed
\cite[Definition~7.1 and Theorem~7.22]{Mil24}. Extend the map $p$ to $\overline{D}$, and call it again $p$. Thus $p(\overline D)$
is closed and contains the dense part of $C$ above $L\cap V$, so
$p(\overline D)=C$.
Since $C\to L$ is finite and surjective, the composition 
$\overline\beta=\pi\circ p:\overline{D}\to L$ is surjective with finite fibers.

Moreover, recall by Proposition~\ref{prop:D-is-dominant}, every irreducible
component of $D'$ of $D$ is a curve dominating $L$, and we obtain $\overline D$ by closing each irreducible component of $D$. Hence for each irreducible component $ D'$ of $D$,
the map $\overline\beta|_{\overline{D'}}:\overline {D'}\to L$ is surjective with finite fiber.

For $t\alpha\in L$, define the finite set
\[\overline{S_t}\coloneqq \overline{\beta}^{-1}(t\alpha).\]
Then 
\begin{equation}\label{eq:rewrite-special-fiber}
  p(\overline{S_0})=Y_0,\qquad
0_m\in Y\ \iff\ 0_m\in p(\overline{S_0}).
\end{equation}
By Proposition~\ref{prop:rewrite-lift}, the points of $D$ over
$\K$ are represented by solutions $(x(t),y(t))$ of
\eqref{eq:rewrite-square}, with $y(t)\in\mathcal O_\K^m$.
Clearing denominators in the projective $x$-coordinates gives
\[
[1:x_1(t):\cdots:x_n(t)]
=
[t^a:t^ax_1(t):\cdots:t^ax_n(t)],
\]
where $a\in\Q_{\geq 0}$ is chosen so that all coordinates have
nonnegative valuation and at least one has valuation zero.
Taking constant terms gives a point of $\overline D$ above
$t=0$.

Conversely, let $q\in\overline{S_0}$, and choose an
irreducible component $\overline{D'}$ of $\overline D$ containing
$q$. By Proposition~\ref{prop:D-is-dominant}, $\overline{D'}$
dominates $L$.

By the local Newton--Puiseux parametrization theorem
for algebraic curves
\cite[Tags~0C45,~0C0S]{Stacks26} (see also \cite[Lecture~14, p.~104]{Abhyankar90}),
there is a Puiseux branch $\tilde{q}(s)$ of $\overline{D'}$ through $q$,
parametrized by $s=t^{1/e}$ for some $e\geq1$,
such that the projective
$x$-coordinates of $\tilde{q}(s)$ are
\[
[x_0(s):x_1(s):\cdots:x_n(s)],
\qquad x_i(s)\in\C[[s]],
\]
with $\tilde{q}(0)=q$.
Since the component intersects $D$ in an open set of $D$,
we have $x_0(s)\not\equiv0$.
Consequently, $x_i(s)/x_0(s)\in\K$,
and together with the corresponding $y$-coordinates
these give a solution of~\eqref{eq:rewrite-square}.
Clearing denominators in its projective
$x$-coordinates recovers $\tilde{q}$.
We summarize this discussion as the following proposition.
\begin{proposition}\label{prop:Dbar-has affine-coordinates}
Every point $q\in\overline{S_0}$ is the specialization of a Puiseux family
$\tilde q(t)\in\overline{S_t}$ with homogeneous coordinates
$[x_0(t):\cdots:x_n(t)]$, where for all $i$, $x_i(t)\in\mathcal O_\K$,
and $x_0(t)\not\equiv 0$.
Therefore, $\overline{S_t}\subseteq\{x_0\neq 0\}\simeq\mathbb A^n_\K$,
so $\overline{S_t}$, viewed as a projective algebraic set over $\K$,
is affine.
\end{proposition}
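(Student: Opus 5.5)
The plan has two parts. First I lift each point of the special fibre $\overline{S_0}$ to a Puiseux branch of $\overline D$ lying in the chart $x_0\neq0$. Then I use that the generic fibre is exactly what survives over $L\cap V$.

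\textbf{Step 1: a branch through each special point.} Fix $q\in\overline{S_0}$. Choose an irreducible component $\overline{D'}$ of $\overline D$ containing $q$. By Proposition~\ref{prop:D-is-dominant}, $\overline{D'}$ is a curve and $\overline\beta|_{\overline{D'}}$ is surjective onto $L$ with finite fibres. I then take a local Newton--Puiseux parametrization of $\overline{D'}$ at $q$, citing the Stacks tags used just before the statement. After rescaling the homogeneous coordinates, this yields a branch
\[
\tilde q(s)=\bigl([x_0(s):\cdots:x_n(s)],\,y(s)\bigr),\qquad x_i(s)\in\C[[s]],\qquad \tilde q(0)=q .
\]
The parameter is $s=t^{1/e}$. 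This holds because $\overline\beta$ restricted to the branch is a nonconstant power series in $s$ vanishing at $0$. After a reparametrization I can arrange $\overline\beta(\tilde q(s))=s^e\alpha$, so that $t=s^e$, and then $\tilde q(t)\in\overline{S_t}$ over $\K$. The $y$-coordinates lie in $\mathcal O_\K$ by Proposition~\ref{prop:rewrite-lift}. The $x_i$ lie in $\mathcal O_\K$ by construction of the branch.

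\textbf{Step 2: $x_0\not\equiv0$.} The component $\overline{D'}$ is the closure of an irreducible component $D'$ of $D\subseteq\Gamma_F$. Hence the points of $\overline{D'}$ with $x_0=0$ form a proper closed subset of a curve, which is a finite set. The branch is a nonconstant germ of $\overline{D'}$, so it cannot lie entirely in this finite set. Therefore $x_0(s)\not\equiv0$, and $x_i/x_0\in\K$ gives a genuine solution of \eqref{eq:rewrite-square}.

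\textbf{Step 3: affineness of $\overline{S_t}$ over $\K$.} The set $\overline{S_t}$ is the generic fibre of $\overline\beta$. Every irreducible component of $\overline D$ dominates $L$, and the points at infinity, $\overline D\setminus D$, together with the part of $D$ over $L\setminus V$, form finitely many closed points. Each of these lies over a specific value $t_0\in\C$. Such points therefore do not appear in the fibre over the generic point $t$ of $\K$, which is transcendental over $\C$. Consequently every $\K$-point of $\overline{S_t}$ has $x_0\neq0$, and $\overline{S_t}\subseteq\{x_0\neq0\}\simeq\A^n_\K$.

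\textbf{Main obstacle.} I expect the hardest point to be matching the Puiseux parameter with $t$: one must ensure that the local branch really lives over $\K$ with $t$ as the base parameter, and hence defines a point of $\overline{S_t}$. My first attempt would be to use the finiteness of $\overline\beta$ near $q$. This makes $t\circ\tilde q$ a power series of some order $e\geq1$. Taking an $e$-th root, $s\mapsto s\,u(s)^{1/e}$ with $u$ a unit, gives the required reparametrization. The conjugates under $s\mapsto\zeta s$ account for the different embeddings into $\K$.
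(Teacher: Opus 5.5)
Your proposal is correct and follows essentially the paper's argument. You take a Newton--Puiseux branch of a component $\overline{D'}$ through $q$, and conclude $x_0\not\equiv 0$ because $\overline{D'}$ is the closure of an affine curve and so meets $\{x_0=0\}$ in only finitely many points. You are more explicit than the paper in two places: you reparametrize so that $t=s^e$, and you derive $\overline{S_t}\subseteq\{x_0\neq0\}$ from the fact that the finitely many points of $\overline D\setminus D$ lie over constant values $t_0\in\C$, a step the paper leaves implicit.
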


Returning back to the Example \ref{ex:line-slice},
one sees that the curve 
$D$ is given by $(t,(t+1)/t,t,t,t^2+t-1)$. The $x$ coordinates are defined over $\C(t)$. Viewing it in projective coordinates, we 
obtain that $\overline{D}$ 
is given by \[\left(\left(1:t:\frac{t+1}{t}\right),(t,t,t^2+t-1)\right)=\left((t:t^2:t+1),(t,t,t^2+t-1)\right)\subset \PPj^2_\C\times C.\]
Furthermore, $x_0\ne 0\in \mathcal O_\K.$

This relative closure explains the geometry, but does not provide
a homogeneous system with the same number of homogeneous equations as the number of homogenized variables, which is needed for a resultant. Homogenizing
only in $x$ would still require the equations of $C$, which we do
not know. 

In Example~\ref{ex:line-slice}, the homogenized graph
equations in just the $x$-coordinates allow $[x_0:x_1:x_2]=[0:0:1]$ with arbitrary $y_3$,
whereas $C$ requires $y_3=t^2+t-1$.
In particular, at $t=0$ just
homogenizing in $x$-coordinates will not be able to 
distinguish if $0\in Y$ unless the equations defining $C$
are known, which is something we want to avoid.

We observe, that if we could homogenize the system \ref{eq:rewrite-square} with respect to all $x$ and $y$ variables, then we will get a homogeneous system amenable to the resultant method of \cite{AGS26}.
So we use the map 
\[
 \iota:\PPj^n_\K\times\A^m_\K\longrightarrow\PPj^{n+m}_\K,\qquad
 ([x_0:\cdots:x_n],y)\longmapsto
 [x_0:\cdots:x_n:x_0y_1:\cdots:x_0y_m],
\]
and show that we do not lose the information we were looking for. 
Observe that on $x_0\ne0$, division by $x_0$ recovers both $x$ and $y$.
Thus the Puiseux branches in Proposition \ref{prop:Dbar-has affine-coordinates} retain their output coordinates over $\K$;
information is lost only when $x_0$ is identically zero.

Once again for the case of Example~\ref{ex:line-slice}, the relative to $L$ branch and
its image under $\iota$ are
\[
 \left([t:t^2:t+1],(t,t,t^2+t-1))\right)
 \longmapsto
 v(t)=[t:t^2:t+1:t^2:t^2:t(t^2+t-1)].
\]
Its first coordinate is nonzero over $\K$, so we can 
extract $y$-coordinates by factoring out $x_0(t)=t$, which is not identically 0, from 
the last three coordinates $t^2,t^2,t(t^2+t-1)$.
If we specialize to $t=0$, then the last three $y$-coordinates become $(0,0,0)$, and so the projection to the $y$-coordinates will contain $(0,0,0)$, but as noted earlier it is not a point in  $Y_0$.
This also shows why it is important to first work over $\C[t]$, and then specialize later to $t=0$, instead of specializing first to $t=0$.

We set $N=n+m$, $z=(x,y)$, and homogenize using the formal degree bounds over $\C[t]$. Let
\begin{equation}\label{eq:rewrite-homogeneous}
 G_i(z_0,z,t)=z_0^{d_i}g_i(z/z_0,t),\qquad
 \max(1,\deg_z g_i)\leq d_i\leq d,\quad 1\leq i\leq N.
\end{equation}
These are $N$ homogeneous equations in $N+1$-variables over
$\C[t]$. On the chart $z_0\ne0$ over $\K$, they give exactly the
finite solution set of~\eqref{eq:rewrite-square}.

\begin{theorem}
\label{thm:rewrite-square-aps}
The map $F$ has APS if and only if there are
$z_0(t),\ldots,z_N(t)\in\K$ satisfying
\begin{equation}\label{eq:rewrite-valuative-test}
 \begin{gathered}
 z_0(t)\ne0,\qquad \min_{0\leq j\leq N}\val_t(z_j(t))=0,\\
 G_i(z_0(t),\ldots,z_N(t))=0\quad(1\leq i\leq N),\\
 \val_t\!\left(\frac{z_{n+i}(t)}{z_0(t)}\right)>0
                         \quad(1\leq i\leq m).
 \end{gathered}
\end{equation}
\end{theorem}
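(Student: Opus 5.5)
The proof runs in two directions, both built on Proposition~\ref{prop:rewrite-lift} and the equivalence \eqref{eq:rewrite-special-fiber}: $F$ has APS iff $0_m\in Y$ iff $0_m\in Y_0$. The key observation is that on the chart $z_0\neq0$ over $\K$, the homogeneous system $G_i=0$ is equivalent to the affine system \eqref{eq:rewrite-square} via $x_j=z_j/z_0$ and $y_i=z_{n+i}/z_0$. This holds because $G_i(z_0,z,t)=z_0^{d_i}g_i(z/z_0,t)$ and $z_0$ is invertible in the field $\K$. So solutions $(z_0,\ldots,z_N)$ of \eqref{eq:rewrite-valuative-test} with $z_0\neq0$ correspond exactly to scaled versions of affine solutions $(x(t),y(t))\in\K^{n+m}$ of \eqref{eq:rewrite-square}. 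The normalization $\min_j\val_t(z_j)=0$ is always attainable by multiplying by a suitable power $t^{a}$, $a\in\Q$. This rescaling leaves the ratios $z_{n+i}/z_0$ unchanged, so it does not affect the last condition.

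\emph{($\Rightarrow$)} Suppose $F$ has APS, so $0_m\in Y_0$. By the converse part of Proposition~\ref{prop:rewrite-lift}, there is a solution $(x(t),y(t))\in\K^{n+m}$ of \eqref{eq:rewrite-square} with $y(0)=0_m$. Since $y_i(t)\in\mathcal O_\K$ with zero residue, we get $\val_t(y_i)>0$. Set $z=(1,x(t),y(t))$ and multiply by $t^{a}$, where $a=\max(0,-\min_j\val_t x_j)$, so that the minimal valuation becomes $0$. Then $z_0=t^a\neq0$, all $G_i$ vanish, and $z_{n+i}/z_0=y_i(t)$ has positive valuation. This gives \eqref{eq:rewrite-valuative-test}.

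\emph{($\Leftarrow$)} Given $z(t)$ satisfying \eqref{eq:rewrite-valuative-test}, set $x_j=z_j/z_0$ and $y_i=z_{n+i}/z_0$ in $\K$. These solve \eqref{eq:rewrite-square}, and $\val_t(y_i)>0$ by hypothesis. There are two ways to conclude.

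\textbf{Route 1.} By the forward part of Proposition~\ref{prop:rewrite-lift}, $y(0)\in Y_0$. Since $y(0)=0_m$, we get $0_m\in Y_0\subseteq Y$.

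\textbf{Route 2 (more direct).} The equations $g_i=0$ for $i\le m$ give $f_i(x(t))=y_i(t)$, which has positive valuation. Hence $x(t)\in\AV(F)$, and Theorem~\ref{thm:WAN} gives APS.

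The second route is cleaner and sidesteps the question of which solutions Proposition~\ref{prop:rewrite-lift} covers. One subtlety remains: $t$ also appears in the equations $g_{m+j}$, and $\alpha_jt$ lies in $\K$. That is fine, because $\AV(F)$ only concerns the $f_i$.

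The main obstacle is the forward direction's reliance on Proposition~\ref{prop:rewrite-lift}. That proposition requires the lifted branch to be a genuine $\K$-solution of \eqref{eq:rewrite-square} in the chart $x_0\neq0$, which is exactly what Proposition~\ref{prop:Dbar-has affine-coordinates} supplies. I would cite that proposition explicitly to justify $z_0\not\equiv0$, and then check that the chosen component $D'$ lies over $L\cap V$, so that the generic point of $D'$ really satisfies \eqref{eq:rewrite-square}.
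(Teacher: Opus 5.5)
Your proof is correct and is essentially the paper's argument: the forward direction takes the affine $\K$-solution with $y(0)=0_m$ from the converse part of Proposition~\ref{prop:rewrite-lift} and rescales $(1,x(t),y(t))$ by $t^a$, and the converse dehomogenizes, with your Route~2 simply making explicit, via $x(t)\in\AV(F)$, the paper's one-line conclusion from $\val_t(f_i(x(t)))>0$. Your closing worry is unnecessary: Proposition~\ref{prop:rewrite-lift} already produces a genuine solution in $\K^{n+m}$ from the function field of a component of the affine curve $D$, which lies in $(\pi\circ p)^{-1}(L)$ by definition, so $z_0=t^a\neq0$ holds by construction and the projective-closure proposition is not needed.
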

\begin{proof}
If $0\in Y$, Proposition~\ref{prop:rewrite-lift} supplies an affine
solution with $y(0)=0$, so $\val_t(y_i(t))>0$. Clearing the denominators
in homogeneous coordinates $(1:x_1(t):\ldots:y_m(t))$ 
with a power of $t$ such that the minimum of $\val_t(x_i)$ 
or $\val_t(y_i)$ is zero, one obtains $z_0$ with required
properties.
Conversely, dehomogenizing a tuple in~\eqref{eq:rewrite-valuative-test}
gives $y_i(t)=f_i(x(t))=z_{n+i}(t)/z_0(t)$, so 
$\val_t(y_i(t))>0$ and  
$y(0)=0$ and $0\in Y$.
\end{proof}

\section{An approximative characteristic polynomial}
\label{sec:characteristic-rewrite}

We now construct the polynomial $R_\ell(T)$ from the 
proof idea in
Section~\ref{sec:aps_overview}. Keep the reduced map and the square
system~\eqref{eq:rewrite-square}, i.e., the number of variables are the same as the number of equations, and homogeneous
system~\eqref{eq:rewrite-homogeneous} from
Section~\ref{sec:geometry-rewrite}. 
In Section \ref{sec:rewrite-homogenization}, Proposition \ref{prop:Dbar-has affine-coordinates} and Theorem
\ref{thm:rewrite-square-aps} we saw that the Puiseaux branches in $\overline{S_t}$
on $\overline{D}$ that specialize to the points on $\overline{S_0}$ and project to a point in $Y_0$
under $p$ correspond to the affine ($z_0\ne0\in \K$) 
solutions of the homogeneous system \eqref{eq:rewrite-homogeneous} over $\K$. 
As noted before, this is exactly the kind of problem 
studied in \cite[Section 4]{AGS26}.

So we let $Z\subseteq\K^N$ be
finite nonempty solution set of the system 
\eqref{eq:rewrite-square} viewed over $\K$, and 
let $y(z)$ denote the tuple $(z_{n+1},\ldots, z_{n+m})$ 
in the $n+m$-tuple $z=(z_1,\ldots,z_{n+m})$.
Proposition~\ref{prop:rewrite-lift} says that the
set of $y(z)(0)$,
as $z$ ranges over $Z$, are exactly $Y_0$.
Choose a linear form $\ell$ injective on $Y_0\cup\{0\}$.
Such forms exist because the excluded coefficient tuples lie in
finitely many proper hyperplanes. Then
\begin{equation}\label{eq:rewrite-fiber-polynomial}
  Q_{Y_0,\ell}(T)=\prod_{y\in Y_0}(T-\ell(y)),\qquad
 Q_{Y_0,\ell}(0)=0\ \Longleftrightarrow\ 0\in Y.
\end{equation}
We compute a polynomial with the same set of roots as $Q_{Y_0,\ell}$, allowing positive
multiplicities, using only the explicit homogeneous equations $G_i$.

\subsection{The perturbation of Ierardi and $H_{T,w}$}
\label{sec:rewrite-resultant}

Let $d_i=\deg(G_i)$, and $d$
be a degree bound on $d_i$, and let
\[
 \widehat G_i\coloneqq G_i+\varepsilon z_i^{d_i},\qquad
 B\coloneqq \prod_{i=1}^N d_i\leq d^N.
\]
Here $\varepsilon$ is independent of the line parameter $t$.
Let
\begin{equation}\label{eq:rewrite-marker}
 H_{T,w}\coloneqq Tz_0^2-z_0\ell(y(z))+\sum_{j=1}^N w^jz_j^2,
\end{equation}
and define
\begin{equation}\label{eq:rewrite-resultant}
 \mathscr R(T,t,w,\varepsilon)\coloneqq 
 \Res_{z_0,\ldots,z_N}
       (\widehat G_1,\ldots,\widehat G_N,H_{T,w}).
\end{equation}
The quadratic form $H_{T,w}$ over $\C(T,w)$ is chosen 
to output the factors of the polynomial 
$Q_{Y_0,\ell}(T)$. 
Similar to \cite[Lemma 4.8]{AGS26}, we note that 
if $z=(z_0:\ldots:z_N)\in Z$, then 
$\TT_w(H_{T,w}(z))=T-\ell(y(z))$ when $z_0=1$, and 
$H_{T,w}(z)=\sum\limits_{j=1}^N w^jz_j^2$ when $z_0=0$.
Here recall from Section \ref{sec:prelim-approximation}
that $\TT_v(P)$ is the trailing coefficient of $P$ 
with respect to a variable $v$.
So $H_{T,w}(z)$ is a constant free polynomial in $w$ 
exactly when $z_0=0\in \K$. This is what we
use to single out the affine solutions of the homogeneous
system \eqref{eq:rewrite-homogeneous}.
Let $Z^*_0\subset \PPj^N_\K$ be a finite set of limit points in the sense of \cite[Definition 4.5 and Lemma 4.6]{AGS26} for the system defined by polynomials $\widehat{G}_i$ over $\C(t,\varepsilon)$.
Then $Z^*_0$ contains all the isolated projective points over $\K$ of the system \eqref{eq:rewrite-homogeneous} and is finite.
\begin{remark}
    The set $Z^*_0$ contains may contain a finite set of points from the positive dimensional components of $V(G_1,\ldots, G_N)$ over $\K$ at $z_0=0$. However, the affine points in $Z^*_0$ 
    are exactly the ones in $Z$, i.e., the solutions to the affine system \eqref{eq:rewrite-square} over $\K$, as the system \eqref{eq:rewrite-square} defines the 0-dimensional $Z$ over $\K$.
\end{remark}
We now state the standard factorization lemma of 
\cite[Lemma 2.6]{Ier89} applied to our polynomials 
$\widehat{G_i}$ and $H_{T,w}$.
\begin{lemma}[{\cite[Lemma 2.6]{Ier89} or \cite[Lemma 4.7]{AGS26}}]\label{lem:Ierardi}
  Let $G_i$, $\widehat{G_i}$, $H_{T,w}$, and $Z_0^*$ be as above.
     For each $z\in Z^*_0$, there are positive integers 
  $\mu_z>0$ such that 
  \[  (\TT_\varepsilon\mathscr R)(T,t,w)=\prod\limits_{z\in Z^*_0}H_{T,w}(z)^{\mu_z}.
  \]
\end{lemma}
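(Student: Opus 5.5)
This is the Canny–Ierardi factorization, and I would prove it as in \cite[Lemma~2.6]{Ier89} and \cite[Lemma~4.7]{AGS26}, treating $t$ as part of the base field. Concretely, work over $\K'=\overline{\C(t,T,w)}$, or over $\K$ with $T,w$ adjoined and then algebraically closed, and view $\varepsilon$ as the deformation parameter.

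\emph{Step 1: finiteness of the perturbed system.} The forms $\widehat G_i=G_i+\varepsilon z_i^{d_i}$, $1\le i\le N$, are $N$ forms in $N+1$ variables. Over the Puiseux field in $\varepsilon$ with coefficients in $\overline{\C(t)}$, their zero set in $\PPj^N$ is finite of cardinality $B=\prod d_i$, counted with multiplicity. To see this, add the form $z_0^{d_0}$ and observe that the leading $\varepsilon$-part $(z_0^{d_0},\varepsilon z_1^{d_1},\ldots,\varepsilon z_N^{d_N})$ has only the trivial common zero. So the generalized characteristic polynomial $\Res(\widehat G_1,\ldots,\widehat G_N,z_0^{d_0})$ is a nonzero polynomial in $\varepsilon$, and the system has no solutions on $z_0=0$ for generic $\varepsilon$.

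\emph{Step 2: Poisson formula.} By \cite[Theorem~4.3]{AGS26}, we have
\[
\mathscr R=c(\varepsilon,t)\prod_{\zeta}H_{T,w}(\zeta)^{m_\zeta},
\]
where $\zeta$ runs over the finitely many zeros of $\widehat G_1,\ldots,\widehat G_N$ over $\overline{\C(t)}\{\{\varepsilon\}\}$, normalized suitably, and $c$ is a resultant of the leading forms. Taking $\TT_\varepsilon$ is multiplicative. Normalize each $\zeta$ so its minimal coordinate valuation in $\varepsilon$ is $0$. Then $\TT_\varepsilon H_{T,w}(\zeta)=H_{T,w}(\zeta(0))$, provided this specialization is nonzero; this holds because $H_{T,w}$ has the generic coefficients $T,w^j$ and cannot vanish at any nonzero point. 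The specializations $\zeta(0)$ form the limit set $Z_0^*$ of \cite[Definition~4.5]{AGS26}. Grouping equal limits gives exponents $\mu_z=\sum_{\zeta(0)=z}m_\zeta>0$. The scalar $\TT_\varepsilon c$ is a nonzero element of $\C(t)$, which can be absorbed or normalized as in \cite{AGS26}; I read the displayed identity as holding up to this constant.

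\emph{Step 3: main obstacle.} The hard point is justifying that the trailing $\varepsilon$-coefficient of each factor is exactly the evaluation at the limit point, uniformly in $t$. The base is now $\C(t)$ rather than $\C$, so the Puiseux expansions in $\varepsilon$ have coefficients in $\overline{\C(t)}$. I would handle this by noting that \cite[Lemmas~4.6--4.7]{AGS26} are stated over an arbitrary algebraically closed field of characteristic zero, and applying them with base $\overline{\C(t)}\subset\K$. Their statement that $Z_0^*$ contains every isolated projective point of $V(G_1,\ldots,G_N)$ then transfers verbatim, which is what the preceding remark uses.
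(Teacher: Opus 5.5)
Your proposal is correct and is essentially the argument behind the cited results \cite[Lemma~2.6]{Ier89} and \cite[Lemma~4.7]{AGS26}, which is all the paper invokes for this lemma. The ingredients are the same: the Poisson formula over Puiseux series in $\varepsilon$ with coefficients in $\overline{\C(t)}$ (legitimate because your Step~1 rules out zeros on $z_0=0$), multiplicativity of $\TT_\varepsilon$, and $\TT_\varepsilon H_{T,w}(\zeta)=H_{T,w}(\zeta(0))$ for normalized roots. Your reading of the displayed identity as holding up to a nonzero scalar in $\overline{\C(t)}^\times$, absorbed into the choice of projective representatives of the points of $Z_0^*$, is consistent with how the paper uses it later, where only the roots in $T$ matter. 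Finally, no genuine ``uniformity in $t$'' issue arises in your Step~3, since $t$ lies in the base field.
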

Recall that each $z\in Z$, corresponds to the projective
solution $[1:z]$ to the system 
\eqref{eq:rewrite-homogeneous}, so we choose $z_0=1$. As a consequence of the above lemma we obtain that
\begin{equation}\label{eq:trail-with-w}
  \TT_w\left(\prod\limits_{z\in Z^*_0}H_{T,w}(z)^{\mu_z}\right)=c(t)\prod\limits_{z\in Z}H_{T,w}(z)^{\mu_z}=c(t)\prod\limits_{z\in Z}(T-\ell(y(z)))^{\mu_z},
\end{equation}
for some $c(t)\in \mathcal O_\K$. Here 
$c(t)$ turns out to be an element in $\mathcal O_\K$ 
precisely because it is coming from the points $z\in Z^*$
with $z_0=0$, and for such points we can clear the 
denominators and assume that $z_1,\ldots, z_N\in 
\mathcal O_\K$. Futhermore, at least one of $z_1,\ldots,z_N$ is non-zero, so $c(t)\ne 0$. 

\subsection{$R_\ell$ and $Q_{Y_0,\ell}$ have same roots}\label{sec:same-root}
In this section we finally show that the univariate polynomials 
$R_\ell$ and $Q_{Y_0,\ell}$ have same roots as claimed.
\begin{theorem}
\label{thm:rewrite-achar}
Let $R_\ell$, and $\mathscr R$ be as above. Then there is a $c\in \C^\times$ such that  
\begin{equation}\label{eq:rewrite-achar}
 R_\ell(T):=\TT_t\TT_w\TT_\varepsilon\mathscr R
   =c\prod_{y\in Y_0}(T-\ell(y))^{M_y},
 \quad M_y>0.
\end{equation}
Therefore,
\begin{equation}\label{eq:rewrite-decision}
 F\text{ has APS}\quad\Longleftrightarrow\quad R_\ell(0)=0.
\end{equation}
\end{theorem}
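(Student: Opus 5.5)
The plan is to start from the factorization already obtained in \eqref{eq:trail-with-w}, which combines Lemma~\ref{lem:Ierardi} with the trailing coefficient in $w$:
\[
 \TT_w\TT_\varepsilon\mathscr R = c(t)\prod_{z\in Z}\bigl(T-\ell(y(z))\bigr)^{\mu_z},\qquad 0\ne c(t)\in\mathcal O_\K .
\]
Before invoking this, I will check one point. By multiplicativity of trailing coefficients, $\TT_w$ of the product of the factors $H_{T,w}(z)^{\mu_z}$ is the product of their trailing coefficients in $w$. For $z\in Z$, written with $z_0=1$, that coefficient is $T-\ell(y(z))$. For a point of $Z^*_0$ at infinity, it is the lowest nonvanishing $z_j^2$, a nonzero element of $\K$.

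The left-hand side is a polynomial in $T$ and $t$ with complex coefficients. I will therefore write $c(t)=t^{k}v(t)$ with $k\in\Q_{\ge0}$ and $v\in\mathcal O_\K$ satisfying $v(0)\ne0$.

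Next I will expand each factor $T-\ell(y(z)(t))$. By Proposition~\ref{prop:rewrite-lift}, each $y(z)(t)$ lies in $\mathcal O_\K^m$, so this factor equals $T-\ell(y(z)(0))$ plus terms of positive $t$-valuation. Hence the whole product equals
\[
 t^{k}\Bigl(v(0)\prod_{z\in Z}\bigl(T-\ell(y(z)(0))\bigr)^{\mu_z}+(\text{terms of positive valuation in }t)\Bigr).
\]
The leading part is a nonzero polynomial in $T$, so the lowest power of $t$ appearing is exactly $t^k$. Consequently
\[
 \TT_t\TT_w\TT_\varepsilon\mathscr R=v(0)\prod_{z\in Z}\bigl(T-\ell(y(z)(0))\bigr)^{\mu_z}.
\]
This step needs care on one point: $\TT_t$ is defined for polynomials in $t$, but here it is taken on an expression over $\K$. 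I will justify it by noting that $\TT_w\TT_\varepsilon\mathscr R$ is an honest polynomial in $\C[T,t]$, so $k$ must be an integer. I will then compare the $t$-adic expansions coefficientwise in $T$. The lowest power $t^k$ is attained because the coefficient of the top power $T^{\sum\mu_z}$ is $c(t)$ itself.

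Then I will group the factors. By Proposition~\ref{prop:rewrite-lift}, the residues $y(z)(0)$ for $z\in Z$ lie in $Y_0$, and every point of $Y_0$ occurs as such a residue. Setting $M_y=\sum_{z:\,y(z)(0)=y}\mu_z$ gives $M_y>0$ for every $y\in Y_0$, and taking $c=v(0)\in\C^\times$ yields \eqref{eq:rewrite-achar}.

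Finally, for \eqref{eq:rewrite-decision}: since $\ell$ is injective on $Y_0\cup\{0\}$, we have $\ell(y)=0$ for $y\in Y_0$ only when $y=0$. So $R_\ell(0)=0$ holds iff $0\in Y_0$, which by \eqref{eq:rewrite-fiber-polynomial} holds iff $0\in Y$, which by Theorem~\ref{thm:WAN} holds iff $F$ has APS.

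I expect the main obstacle to be the $\TT_t$ step. The concern is that the trailing coefficient in $t$ might pick up contributions from the higher-order terms of $v(t)$ or of the $y(z)(t)$ instead of only the constant terms. I will settle this through the valuation argument above, using that all $y(z)(t)$ lie in $\mathcal O_\K$, that $c(t)$ is the leading $T$-coefficient, and that the leading term in $t$ is nonzero.
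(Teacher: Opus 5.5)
Your proposal is correct and follows the paper's route. You start from the factorization \eqref{eq:trail-with-w}, use $y(z)(t)\in\mathcal O_\K^m$ from Proposition~\ref{prop:rewrite-lift} so that $\TT_t$ turns each factor $T-\ell(y(z))$ into $T-\ell(y(z)(0))$, and then group factors by residues, which exhaust $Y_0$. Your decomposition $c(t)=t^kv(t)$, with $c(t)\in\C[t]$ identified as the leading $T$-coefficient, is a more careful version of the paper's appeal to multiplicativity of $\TT_t$, and your explicit derivation of \eqref{eq:rewrite-decision} via injectivity of $\ell$ fills in what the paper leaves to \eqref{eq:rewrite-fiber-polynomial}.
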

\begin{proof}
By Proposition~\ref{prop:rewrite-lift}, for $z\in Z$
the coordinates of $y(z)$ are elements of 
$\mathcal O_\K$, and so have
non-negative valuation with respect to $\val_t$. Therefore, 
$\val_t(\ell(y(z)))\geq 0$, for $z\in Z$. 
By Lemma \ref{lem:Ierardi} and equation \eqref{eq:trail-with-w}, we obtain 
\[R_\ell(T)=\TT_t\left(c(t)\prod\limits_{z\in Z}(T-\ell(y(z)))^{\mu_z}\right)=\TT_t(c(t))\prod\limits_{z\in Z}(T-\ell(y(z)(0)))^{\mu_z}.\]
The last equality above follows because at the factors
$\TT_t(T-\ell(y(z)))=T-\ell(y(z)(0))$, and the 
trailing coefficient operation distributes over a
product. Once again by Proposition 
\ref{prop:rewrite-lift}, the set 
$\{y(z)(0)\ |\ z\in Z\}$ is the same as $Y_0$, so we obtain 
\[R_\ell(T)=c\prod\limits_{y\in Y_0}(T-\ell(y))^{M_y},\]
where $c\in \C^\times$ and $M_y>0$. Note that for 
$y(z)(0)=y$, $M_y$ may not be equal to $\mu_z$, since we only have a equality of sets 
$\{y(z)(0)\ |\ z\in Z\}=Y_0$. This finishes the proof.
\end{proof}
In Example~\ref{ex:line-slice},
the polynomials $G_i=x_0^{\deg(g_i)}
g_i(x_1/x_0,\ldots,y_3/x_0)$, where 
$g_i$ polynomials over $\C[t]$ given in
\eqref{eq:overview-example-square}. 
The solution set over $K$ consists of
$p(t)=(t:t^2:t+1:t^2:t^2:t(t^2+t-1))$ and 
$[0:0:a:0:0:b]\simeq \PPj^1_\K$. The operation
$\TT_\varepsilon$ will get rid of all but finitely many points of the component 
$\PPj^1_\K$ at $x_0=0$, 
and the only isolated affine point is $p(t)\in \PPj^{5}_\K$. Choose $\ell(y)=y_3$. Over $\C(t)$, the affine system has the unique solution
\[
(x_1,x_2,y_1,y_2,y_3)
=
\left(t,\frac{t+1}{t},t,t,t^2+t-1\right).
\]
This solution is simple: the equations determine $y_1,y_2,x_1,x_2,y_3$ successively, with the coefficient of $x_2$ equal to $-t\ne0$.

For the projective representative $p(t)$ displayed above,
\[
\left.H_{T,w}(p(t))\right|_{w=0}
=
t^2\bigl(T-(t^2+t-1)\bigr).
\]
Every point of $Z_0^*$ at infinity contributes a factor whose trailing coefficient in $w$ is nonzero and independent of $T$. Since the unique affine solution is simple, the resultant factorization therefore gives
\[
\TT_w\TT_\varepsilon\mathscr R
=
c(t)\bigl(T-(t^2+t-1)\bigr)
\]
for some nonzero $c(t)\in\C[t]$. Here $c(t)$ is a polynomial because it is the coefficient of $T$ on the left-hand side.

Taking the trailing coefficient in $t$, we obtain
\[
R_\ell(T)=c(T+1),
\qquad
c=\TT_t(c(t))\ne0.
\]
Thus $R_\ell$ has the same roots as $Q_{Y_0,\ell}(T)=T+1$. In particular, $R_\ell(0)\ne0$, so Theorem~\ref{thm:rewrite-achar} confirms that the original system has no APS.

\subsection{The test in the Counting Hierarchy}
\label{sec:rewrite-ch}

Theorem~\ref{thm:rewrite-achar} reduces APS, for generic choices
of the auxiliary parameters, to the test
\[
  \left(\TT_t\TT_w\TT_\varepsilon\mathscr R\right)(0)=0.
\]
To obtain a Counting Hierarchy algorithm, we must implement this
test from the given arithmetic circuits. This requires choosing
the auxiliary parameters with polynomial bit length and bounded
failure probability, and accessing the required coefficients
without expanding the resultant.

Section~\ref{sec:effective} supplies these details. In particular,
the resultant degrees and coefficient bit lengths are at most
singly exponential in the input length, so monomial exponents
and bit positions have polynomial-length encodings. Access to
individual coefficient bits, together with a fixed number of
quantifier blocks over these indices, suffices to perform the
three trailing-coefficient operations and the final vanishing
test.

\section{Effective construction in the Counting Hierarchy}
\label{sec:effective}

We now implement the criterion in
\cref{eq:rewrite-decision} from the given  arithmetic
circuits. Throughout this section, $L$ denotes the original input
length, including the binary encodings of rational constants.
All circuits constructed below have size polynomial in $L$.
Their formal degrees, and the bit lengths of their expanded
coefficients, may be singly exponential.

\subsection{Dimension and reduction of the input}

We first determine the image dimension needed for
Proposition~\ref{prop:rewrite-normal-form}.

\begin{lemma}[Image dimension from exact feasibility]
\label{lem:dimension}
For a polynomial map given by  rational circuits,
the dimension of its image closure is computable in deterministic
polynomial time with an oracle at a fixed level of $\CH$.
\end{lemma}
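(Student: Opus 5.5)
We compute $r=\dim Y=\trdeg_\C(f_1,\ldots,f_m)$ by reducing the question ``is $\dim Y\ge k$?'' to exact feasibility (Hilbert's Nullstellensatz), which is in $\CH$ by \cite{AGS26}. Then a binary search, or a linear scan over $k=0,\ldots,\min(n,m)$, with polynomially many oracle calls determines $r$.

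The criterion we use is the one behind Lemma~\ref{lem:rewrite-random-sections}. By Theorem~\ref{thm:dominance-function-fields}, a general fiber of $F$ has dimension $n-r$. Equivalently, $\dim Y\ge k$ if and only if, for a general affine linear map $\phi:\A^k_\C\to\A^n_\C$ and a general linear projection $\psi:\A^m_\C\to\A^k_\C$, the composite $\psi\circ F\circ\phi:\A^k_\C\to\A^k_\C$ is dominant. To test dominance of a square map $G:\A^k_\C\to\A^k_\C$ through exact feasibility, we pick a random point $b$ from a grid $B^k$ of polynomial bit length. We then ask whether the system $G(x)-b=0$ has a solution. In addition, we ask whether the Jacobian is nonsingular at some solution, by adding the Rabinowitsch equation $s\cdot\det(\partial G/\partial x)-1=0$. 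The determinant is computed by a polynomial-size circuit via Berkowitz's algorithm \cite{Ber84}.

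If $G$ is dominant, then a generic $b$ has a preimage, and by generic smoothness (characteristic $0$) the Jacobian is invertible there. If $G$ is not dominant, then $\overline{G(\A^k)}$ is a proper closed set, and a random $b$ avoids it with high probability. A simpler alternative avoids randomness in $b$: $\trdeg(g_1,\ldots,g_k)=k$ if and only if the Jacobian determinant is a nonzero polynomial. This can be tested by a random evaluation, which is already in $\mathsf{BPP}$ via $\mathsf{BitSLP}$-style arithmetic, or by the Jacobian criterion in characteristic zero directly on $F$.

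In fact, the cleanest route is the Jacobian criterion: over $\C$, $r$ equals the rank of the $m\times n$ Jacobian matrix $J_F$ over $\C(x)$. This rank equals its rank at a random point $a\in B^n$ with probability at least $1-\deg/|B|$ by Schwartz--Zippel, where the relevant degree is at most $\min(m,n)\cdot 2^{\poly(L)}$. So we take $B$ of polynomial bit length. The entries $\partial f_i/\partial x_j(a)$ are obtained by Baur--Strassen as polynomial-size circuits evaluated at rational points. These values can have exponential bit length, so we do not compute them explicitly. Instead, the rank of the resulting integer matrix, with entries given as division-free circuits after clearing denominators, is decided using $\mathsf{BitSLP}$-type access in $\CH$ \cite{ABKPM09}. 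Concretely, for each $k$ we ask whether some $k\times k$ minor is nonzero, choosing a random linear combination of minors via random projections on both sides. Each minor is again a polynomial-size circuit evaluated at a rational point, so its zero test reduces to $\mathsf{PosSLP}$/$\mathsf{BitSLP}$-style questions, which lie in $\CH$.

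The main obstacle is making this deterministic-with-oracle and not merely randomized. We handle it with the closure facts stated in the preliminaries: $\mathsf{BPP}^A\subseteq\mathsf{PP}^A$, so the randomized choice of the evaluation point and projections stays at a fixed level of $\CH$. A polynomial-time machine querying that level for each $k$ then outputs $r$.
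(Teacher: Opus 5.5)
Your final argument (Jacobian rank at a random point) is correct, but it handles the decisive step differently from the paper. Both proofs use the Jacobian criterion and collapse the exponentially many $k\times k$ minors into one determinant $\det(BJ_F(x)C)$, built by differentiation and Berkowitz's algorithm.

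The paper keeps $B,C,x$ symbolic and notes that $\rank J_F\geq k$ iff $v\det(BJ_F(x)C)-1=0$ is feasible. With one variable per gate this becomes a quadratic $\HN$ instance, which it decides at a fixed level of $\CH$ by \cite[Theorem~6.9]{AGS26}. Each threshold query is thus answered outright by the oracle, with no probabilistic analysis.

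That $\HN$ instance is really just a non-identity test for $\det(BJ_F(x)C)$, and you perform this identity test directly. You substitute random grid values for $x,B,C$, zero-test the resulting straight-line-program integer, and absorb the randomness via $\mathsf{BPP}^A\subseteq\mathsf{PP}^A$. This needs only that such zero tests are easy (in $\mathsf{coRP}$ by evaluation modulo a random prime, and in $\CH$ via $\mathsf{BitSLP}$), not the resultant machinery behind $\HN\in\CH$. It also gives the sharper fact that each predicate $\rank J_F\geq k$ lies in $\mathsf{RP}$.

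Two minor points. First, clear denominators in a rank-preserving way: scale whole rows, or carry numerator--denominator pairs through the Berkowitz computation and test the numerator. Clearing entry by entry can change the rank. Second, your first sketch is superfluous. A non-dominant $G$ already has $\det(\partial G/\partial x)\equiv0$, so the equations $G(x)=b$ and the random $b$ (whose exceptional set you never bound) contribute nothing. Without them, that sketch is essentially the paper's argument with numerical instead of symbolic projections.
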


\begin{proof}
Let $F=(f_1,\ldots,f_m)$ be a map in $n$ variables, and let
$J_F$ be its Jacobian matrix. In characteristic zero, the
transcendence degree of $\Q(f_1,\ldots,f_m)$ equals the rank of
$J_F$ over $\Q(x_1,\ldots,x_n)$.

For $1\leq k\leq\min\{m,n\}$, introduce matrices of variables
$B\in\A^{k\times m}$ and $C\in\A^{n\times k}$, and a variable $v$.
Then
\begin{equation}\label{eq:rank-hn}
  \rank_{\Q(x)}J_F\geq k
  \quad\Longleftrightarrow\quad
  v\det(BJ_F(x)C)-1=0
  \text{ has a complex solution}.
\end{equation}
Indeed, a nonzero $k$-minor can be made nonzero at some point
$x$, and $B,C$ can select that minor. Conversely, an invertible
$BJ_F(x)C$ certifies rank at least $k$.

Differentiation and the determinant construction
of~\cite{Ber84} give a polynomial-size  circuit
for the polynomial in~\eqref{eq:rank-hn}. Introduce one variable
for each gate value, impose its defining equation, and set the
output to zero. This gives a polynomial-size quadratic HN
instance with the same feasibility status. By
\cite[Theorem~6.9]{AGS26}, its feasibility is decidable at a fixed
level of $\CH$. Testing the polynomially many values of $k$
determines the image dimension.
\end{proof}

Write $r$ for this dimension. If $r=0$, the image consists of
one point, so APS is equivalent to exact feasibility of the
original system. We decide this using HN. If $r=m$, the image
closure is $\A^m$, so APS holds.

For $1\leq r<m$, apply
Proposition~\ref{prop:rewrite-normal-form}, choosing its sampling grid so that the failure probability is at most $1/12$. The logarithms of the degree bounds in that proposition are polynomial in $L$, so the sampled integers have polynomial bit length. The affine subspace can be parametrized by rational linear algebra, and substitution preserves polynomial circuit size. If the sampled equations do not define an $r$-dimensional affine subspace, we return an arbitrary answer; this is included among the unsuccessful reduction choices.

Without loss of generality, we have the polynomial map $F=(f_1,\ldots,f_m)$ such that $m=n+1$, and $\dim\overline{F(\A^n)}=n\geq1$ for which we need to check the existence of APS. Let $d\geq2$ bound its formal degrees. Then $d\leq2^{\poly(L)}$. Use the equations and homogenizations from \cref{eq:rewrite-square,eq:rewrite-homogeneous}, with
\[
  N=n+m,\qquad
  1\leq d_i\leq d,\qquad
  B=\prod_{i=1}^N d_i\leq d^N.
\]
Here the $d_i$ are fixed homogenizing degrees, chosen before
sampling the remaining parameters. Homogenization can be
performed gate by gate using formal degrees; powers of the
homogenizing variable are computed by repeated squaring.
Consequently, the homogeneous equations and their perturbations
still have polynomial-size circuit descriptions.

\subsection{Choosing the auxiliary parameters}

Let $\Theta$ collect the coefficients of the projection $\pi$,
the direction $\alpha$, and the linear form $\ell$. There are
polynomially many such parameters. For the fixed reduced input,
regard them first as independent indeterminates, and form the
resultant from~\eqref{eq:rewrite-resultant}:
\[
  \mathscr R_\Theta
  \in\Q[\Theta,\varepsilon,t,w,T].
\]
The geometric argument of
Theorem~\ref{thm:rewrite-achar} applies over an algebraic closure
of $\Q(\Theta)$. Hence, defining
\begin{equation}\label{eq:universal}
  Q_\Theta(T)
    :=\TT_t\TT_w\TT_\varepsilon\mathscr R_\Theta,
  \qquad
  F\text{ has APS}
    \quad\Longleftrightarrow\quad
  Q_\Theta(0)=0
  \text{ in }\Q[\Theta],
\end{equation}
we obtain a polynomial identity on yes-instances and a nonzero
polynomial on no-instances. We use this symbolic construction
only to analyze the random choices; the algorithm does not
construct $Q_\Theta$.

We next bound the degree of $\mathscr R_\Theta$. The resultant
is homogeneous of degree $2B/d_i$ in the coefficients of the
$i$th perturbed equation and degree $B$ in the coefficients of
$H_{T,w}$. Its total degree in these coefficient groups is
therefore at most $(2N+1)B$.
Each substituted coefficient has total degree at most $N+1$
in $\Theta,\varepsilon,t,w,T$. Thus
\begin{equation}\label{eq:Dstar}
  \deg\mathscr R_\Theta
  \leq
  D_*:=(N+1)(2N+1)d^N
  =2^{\poly(L)}.
\end{equation}
In particular, $D_*$ has a polynomial-length binary encoding.

The following lemma controls the effect of substituting
numerical values for $\Theta$.

\begin{lemma}[Stability of nested trailing coefficients]
\label{lem:stability}
Let
\[
  R(\Theta,t_1,\ldots,t_k,T)
  \in\Q[\Theta,t_1,\ldots,t_k,T]
\]
be nonzero and have total degree at most $D$.
There are nonzero polynomials
$A_1,\ldots,A_k\in\Q[\Theta]$, each of degree at most $D$,
such that, whenever $A_i(\theta)\neq0$ for all $i$,
\[
  \left(\TT_{t_k}\cdots\TT_{t_1}R\right)(\theta,T)
  =
  \TT_{t_k}\cdots\TT_{t_1}
  \bigl(R(\theta,t_1,\ldots,t_k,T)\bigr).
\]
\end{lemma}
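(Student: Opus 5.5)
We prove the lemma by induction on $k$, peeling off one trailing-coefficient operation at a time. For $k=1$, write $R=\sum_{i\geq0}t_1^iR_i(\Theta,t_2,\ldots,t_k,T)$ and let $e$ be the least index with $R_e\neq0$, so $\TT_{t_1}R=R_e$. Since $R_e$ is a nonzero polynomial in $\Theta$ and the remaining variables, we may choose a monomial $m$ in $(t_2,\ldots,t_k,T)$ whose coefficient $A_1(\Theta)\in\Q[\Theta]$ in $R_e$ is nonzero. Then $\deg A_1\leq\deg R_e\leq D-e\leq D$. Whenever $A_1(\theta)\neq0$, the specialization $R_e(\theta,\cdot)$ is nonzero. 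The specializations $R_i(\theta,\cdot)$ with $i<e$ vanish identically, because the $R_i$ themselves are zero. Hence the least power of $t_1$ in $R(\theta,t_1,\ldots,T)$ is still $e$, and its coefficient is $R_e(\theta,\cdot)$. This gives $(\TT_{t_1}R)(\theta,\cdot)=\TT_{t_1}(R(\theta,\cdot))$.

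For the inductive step, set $R':=\TT_{t_1}R\in\Q[\Theta,t_2,\ldots,t_k,T]$. This polynomial is nonzero, and its total degree is at most $D$, since it is a coefficient of $R$ and its monomials appear in $R$ with an extra factor $t_1^e$. The induction hypothesis, applied to $R'$ with the $k-1$ variables $t_2,\ldots,t_k$, yields polynomials $A_2,\ldots,A_k$ of degree at most $D$. Now suppose all $A_i(\theta)\neq0$. Then
\[
\bigl(\TT_{t_k}\cdots\TT_{t_2}R'\bigr)(\theta,T)=\TT_{t_k}\cdots\TT_{t_2}\bigl(R'(\theta,\cdot)\bigr)=\TT_{t_k}\cdots\TT_{t_2}\TT_{t_1}\bigl(R(\theta,\cdot)\bigr),
\]
where the last equality is the base step applied with $A_1$. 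This is the claim. The trailing coefficients are always taken in polynomial rings over a domain, so no zero divisors arise, and the $A_i$ are nonzero by construction.

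The only delicate point is the choice of $A_1$ in the base step. It must guarantee that the whole coefficient $R_e$ survives specialization, not merely that some coefficient does. Picking the $\Q[\Theta]$-coefficient of a single fixed monomial handles this, and it also respects the bound $D$. Combined with Schwartz--Zippel, the product $\prod A_i$ has degree at most $kD$, which is what the later sampling argument will need.
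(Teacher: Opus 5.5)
Your proof is correct and follows the paper's argument. The paper likewise takes the first nonzero $t_1$-coefficient of $R$, chooses $A_1$ as the $\Q[\Theta]$-coefficient of one monomial in it so that this coefficient stays nonzero after specialization, and then repeats the step on that coefficient for $t_2,\ldots,t_k$, using that coefficient extraction does not increase total degree; your induction on $k$ is a formal version of this repetition.
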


\begin{proof}
Take the first nonzero coefficient of $R$ in $t_1$.
Choose a nonzero coefficient $A_1(\Theta)$ of one of its
monomials in the remaining variables. If $A_1(\theta)\neq0$,
this coefficient polynomial remains nonzero after substitution.
All lower coefficients were identically zero, so the first
trailing exponent is unchanged.

Repeat this argument on the selected coefficient polynomial,
successively for $t_2,\ldots,t_k$. Extracting coefficients never
increases total degree, so each $A_i$ has degree at most $D$.
\end{proof}

Let $M=2^k$ for some $k$ such that $M\geq48D_*$, and sample each coordinate of $\theta$ independently and uniformly from $\{1,\ldots,M\}$.
Apply Lemma~\ref{lem:stability} with
\[
  (t_1,t_2,t_3)=(\varepsilon,w,t).
\]
If none of its three exceptional polynomials vanishes, then
\[
  \TT_t\TT_w\TT_\varepsilon\mathscr R_\theta
  =Q_\Theta(\theta,T).
\]
On a yes-instance, the constant coefficient on the right is
zero by~\eqref{eq:universal}. On a no-instance,
$Q_\Theta(0)$ is a nonzero polynomial in $\Theta$ of degree at
most $D_*$, and its specialization can vanish only on one
additional exceptional set. Schwartz--Zippel therefore gives
\begin{equation}\label{eq:failure}
  \Pr_\theta\!\left[
    \left(\TT_t\TT_w\TT_\varepsilon\mathscr R_\theta\right)(0)=0
    \text{ gives the wrong answer}
  \right]
  \leq \frac{4D_*}{M}
  \leq \frac1{12}.
\end{equation}
Every sampled integer has polynomial bit length. Neither the
exceptional polynomials nor $Q_\Theta$ need to be computed.

\begin{remark}[Nonzero resultant for every sample]
\label{rem:nonzero}
The polynomial $\mathscr R_\theta$ is nonzero for every
specialization $\theta$, including unsuccessful choices.
Indeed, its coefficient of the highest possible power of
$\varepsilon$ is
\[
  \Res_{z_0,\ldots,z_N}
  \bigl(z_1^{d_1},\ldots,z_N^{d_N},H_{T,w}\bigr)
  =\pm T^B.
\]
Thus all three trailing-coefficient operations are defined for
every sample.
\end{remark}

\subsection{Coefficient access and the final test}

We now fix a sampled tuple $\theta$ of polynomial bit length.
The remaining auxiliary variables are
$\varepsilon,t,w,T$.

Since the input coefficients are rational, we allow a
coefficient to be represented as $a/b$, where
$a\in\Z$ and $b\in\Z_{>0}$; this representation need not be
reduced. Access to its bits means access to the sign and binary
magnitude bits of $a$ and to the bits of $b$.
A coefficient vanishes exactly when its numerator vanishes.

\begin{theorem}[Resultant coefficient access]
\label{thm:ags}
For the resultant $\mathscr R_\theta$, a specified bit of the
numerator or denominator representing a specified coefficient
is computable at a fixed level of $\CH$.
Its degrees and the bit lengths of these representations are
bounded by $2^{\poly(L)}$.
These statements hold uniformly over all sampled tuples
$\theta$ of polynomial bit length.
\end{theorem}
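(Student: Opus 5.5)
The plan is to reduce everything to the coefficient-access results of \cite{AGS26} for resultants of polynomially many homogeneous forms given by arithmetic circuits. The first step is to check that the input forms $\widehat G_1,\ldots,\widehat G_N,H_{T,w}$, with $\theta$ substituted, are given by division-free circuits of size polynomial in $L$ over $\Z$. The $g_i$ of \eqref{eq:rewrite-square} are computed by the reduced input circuits after subtracting $y_i$ and adding the linear terms $\pi_j(y)-\alpha_jt$. Their homogenizations $G_i$ are obtained gate by gate using formal degrees. The powers $z_0^{e}$ and $z_i^{d_i}$ are computed by repeated squaring, since all exponents are at most $d\le 2^{\poly(L)}$ and so have polynomial bit length. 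The marker $H_{T,w}$ involves $w^j$ with $j\le N$ and the linear form $\ell$ with sampled integer coefficients. We would clear rational denominators of constants gate by gate, multiplying each form by a common integer whose bits are themselves computable in $\CH$. This changes the resultant only by an explicit product of powers of these integers, since the resultant is homogeneous of degree $B/d_i$ in the coefficients of the $i$th form. That gives the numerator/denominator representation.

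The second step treats $\varepsilon,t,w,T$ as extra coefficient variables. We would use Kronecker substitution to encode monomials in $\varepsilon,t,w,T$ as exponents of a single variable. The base is chosen larger than the degree bound $D_*$ of \eqref{eq:Dstar}, and it has polynomial bit length. Each coefficient of a form in $z$ is then a polynomial given by a circuit, and each of its bits is accessible via $\mathsf{BitSLP}\in\CH$ \cite[Theorem~4.1]{ABKPM09}. This gives an implicit coefficient table of size $2^{\poly(L)}$. Invoking \cite[Corollaries~5.13 and~6.4, Lemma~6.5]{AGS26}, the resultant is computed by uniform constant-depth threshold-style circuits over this table, so any specified bit of any coefficient of $\mathscr R_\theta$ lies at a fixed level of $\CH$. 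The degree bound is $D_*$. For the coefficient bit lengths, we would bound them via the degree of the resultant in the coefficients, which is at most $(N+1)B$, together with the Hadamard-type bound from expressing the resultant as a ratio of Macaulay determinants. This yields $2^{\poly(L)}$.

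The last step is uniformity in $\theta$. The sampled entries have polynomial bit length and enter only as input constants, so the same oracle machine works for every $\theta$. Remark~\ref{rem:nonzero} guarantees $\mathscr R_\theta\ne0$, so the bounds are meaningful.

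The main obstacle is verifying that the hypotheses of \cite{AGS26} apply in our setting: their forms have exponential degree, they carry the extra parameters $\varepsilon,t,w,T$, and they are not quadratic. We would handle this by making sure the Kronecker-encoded coefficient table satisfies their access interface, which requires polynomial-length indices and bits computable at a fixed $\CH$ level. Their results are stated for such tables of size $2^{\poly(L)}$ without a degree-two assumption, so this should suffice.
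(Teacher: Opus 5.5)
Your proposal is correct and follows essentially the paper's route: you build a $\CH$-accessible implicit coefficient table for the input forms (gate-by-gate denominator clearing, Kronecker substitution, $\mathsf{BitSLP}$), feed it into the uniform constant-depth resultant circuits of \cite{AGS26}, and use constant depth to keep each output bit at a fixed level of $\CH$. The one divergence is the bit-length bound, which the paper reads off the threshold simulation of \cite[Lemma~6.1]{AGS26} rather than deriving from Macaulay determinants. If you keep your route, two fixes are needed: the resultant's degree in the coefficients of $\widehat G_i$ is $2B/d_i$ (total $(2N+1)B$, not $(N+1)B$), and a Hadamard bound on $\det M$ transfers to the quotient $\det M/\det M'$ only through an additional factor-height bound such as Gelfond's lemma.
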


\begin{proof}
The coefficient-access argument in
Section~\ref{sec:prelim} provides a $\CH$ oracle for the
coefficients of the input forms. Their implicit dense
descriptions have singly exponential length.

By \cite[Corollary~5.13]{AGS26}, the resultant coefficients are
computed by uniform constant-depth arithmetic circuits of size
\[
  \bigl(P(N+1)\bigr)^{\poly(N)}
  =2^{\poly(L)},
  \qquad
  P=B+\sum_{i=1}^N\frac{2B}{d_i}.
\]
Here there are four auxiliary variables, and their degrees in
the input coefficients are at most $N+1$.
The arithmetic-to-threshold simulation of
\cite[Lemma~6.1]{AGS26} preserves constant depth and gives
singly exponential size and output bit-length bounds.

The gate-evaluation argument of \cite[Lemma~6.3]{AGS26} applies
with the input bits supplied by the coefficient oracle:
gate indices have polynomial length, input gates are evaluated
in $\CH$, and each further threshold layer adds one majority
quantifier. Since the depth is constant, output bits remain at
a fixed level of $\CH$. The degree bound follows from
\eqref{eq:Dstar}.
\end{proof}

In particular, exponentially long coefficients cause no
difficulty: a position within such a coefficient still has a
polynomial-length binary encoding. The next lemma makes the
resulting vanishing test explicit.

\begin{lemma}[Trailing-coefficient test]
\label{lem:trailing-ch}
Let $R(t_1,\ldots,t_k,T)$ be a nonzero rational polynomial,
where $k$ is fixed.
Suppose its degrees and the numerator and denominator bit
lengths in its coefficient representations are bounded by
$2^{\poly(L)}$, and that individual coefficient bits are
accessible at a fixed level of $\CH$.
Then the predicate
\[
  \left(\TT_{t_k}\cdots\TT_{t_1}R\right)(0)=0
\]
also belongs to a fixed level of $\CH$.
\end{lemma}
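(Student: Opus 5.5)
We will express the predicate with a constant number of polynomially bounded quantifier blocks over the given coefficient-bit oracle. Exponents $e_1,\ldots,e_k\le 2^{\poly(L)}$ and bit positions within coefficients have polynomial-length encodings, so quantifying over them stays in $\CH$. The basic building block is the predicate $\mathrm{NZ}(e_1,\ldots,e_k,j)$: the coefficient of $t_1^{e_1}\cdots t_k^{e_k}T^{j}$ in $R$ is nonzero. This holds iff some bit of its numerator is $1$, i.e.\ one existential quantifier over a polynomial-length bit index applied to the oracle. It therefore lies at a fixed level of $\CH$.

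Next we characterize the successive trailing exponents. Write $R=\sum_{e_1}t_1^{e_1}R_{e_1}(t_2,\ldots,t_k,T)$. The first trailing exponent $e_1^*$ is the unique $e_1$ such that
\[
\exists (e_2,\ldots,e_k,j)\;\mathrm{NZ}(e_1,e_2,\ldots,e_k,j)\quad\text{and}\quad\forall e_1'<e_1\;\forall(e_2,\ldots,e_k,j)\;\neg\mathrm{NZ}(e_1',e_2,\ldots,e_k,j).
\]
Having fixed $e_1^*$, the polynomial $\TT_{t_1}R=R_{e_1^*}$ has coefficient of $t_2^{e_2}\cdots t_k^{e_k}T^j$ equal to that of $R$ at $(e_1^*,e_2,\ldots,e_k,j)$. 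Hence $e_2^*$ is defined by the same pattern with $e_1$ frozen at $e_1^*$, and similarly for $e_3^*,\ldots,e_k^*$. Let $\mathrm{Tr}(e_1,\ldots,e_k)$ be the conjunction of these $k$ defining conditions. It is a Boolean combination of $O(k)$ alternating polynomially bounded quantifier blocks over $\mathrm{NZ}$, so for fixed $k$ it lies at a fixed level of $\CH$.

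The final predicate is
\[
\exists e_1,\ldots,e_k\;\bigl[\mathrm{Tr}(e_1,\ldots,e_k)\wedge\neg\mathrm{NZ}(e_1,\ldots,e_k,0)\bigr].
\]
Evaluating at $T=0$ selects the $T^0$ coefficient of $\TT_{t_k}\cdots\TT_{t_1}R$, which is exactly the coefficient of $t_1^{e_1^*}\cdots t_k^{e_k^*}T^0$ in $R$. Alternatively, one can use a universal quantifier, since the $e_i^*$ are unique. Nonzeroness of $R$ ensures that all trailing coefficients are defined.

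We then invoke closure of $\CH$ under a fixed number of polynomially bounded existential and universal quantifiers, and under complement, relative to a fixed-level oracle, as recorded in the preliminaries. The only delicate point is bookkeeping. We must check that the nesting depth depends only on $k$ and not on $L$, and that all quantified indices are genuinely of polynomial length, which is where the $2^{\poly(L)}$ degree and bit-length bounds are used. We do not expect any real obstacle beyond this.
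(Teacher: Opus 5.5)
Your proposal is correct and follows essentially the same route as the paper: a bit-existential nonvanishing test on numerators, a quantified characterization of the trailing exponent tuple, and a final check that the $T^0$ coefficient at that tuple vanishes. The only difference is cosmetic. The paper packages your $k$ successive minimality conditions into a single equivalent condition: $e$ is the lexicographically least exponent tuple, with $e_1$ most significant, that carries a nonzero coefficient.
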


\begin{proof}
Write
\[
  R=\sum_{e,j}c_{e,j}t_1^{e_1}\cdots t_k^{e_k}T^j.
\]
Choose a bound $U=2^{p(L)}$ large enough to cover all exponents
and coefficient bit lengths. Each exponent and bit position
then has a polynomial-length encoding.

If $c_{e,j}=a_{e,j}/b_{e,j}$ with $b_{e,j}>0$, then
\[
  c_{e,j}\neq0
  \quad\Longleftrightarrow\quad
  \exists\,q<U:
  \text{the $q$th magnitude bit of $a_{e,j}$ is $1$}.
\]
Thus coefficient vanishing and nonvanishing are decidable in
$\CH$.

The successive trailing operations select the lexicographically
least tuple $e$ for which some $c_{e,j}$ is nonzero, with
$e_1$ compared first. Guess this tuple and require
\[
  \exists j\;c_{e,j}\neq0,
  \qquad
  \forall e'<_{\mathrm{lex}}e\;\forall j\;c_{e',j}=0.
\]
Finally, require $c_{e,0}=0$.
All indices are bounded by $U$, and all quantified strings have
polynomial length. Expanding the coefficient tests introduces
only a fixed number of additional quantifier blocks over the
coefficient-bit oracle. Hence the whole predicate lies at a
fixed level of $\CH$.
\end{proof}

\begin{proof}[Proof of Theorem~\ref{thm:aps}]
Compute the image dimension using Lemma~\ref{lem:dimension}.
Handle the cases $r=0$ and $r=m$ as described above.
Otherwise, perform the reduction of
Proposition~\ref{prop:rewrite-normal-form}, with failure
probability at most $1/12$.

For the reduced map, choose the auxiliary parameters from the
grid specified above and form the circuit descriptions of the
homogeneous equations and their perturbations.
By Theorem~\ref{thm:ags} and Lemma~\ref{lem:trailing-ch}, we can
test
\[
  \left(\TT_t\TT_w\TT_\varepsilon
  \mathscr R_\theta\right)(0)=0
\]
using an oracle at a fixed level of $\CH$. This test is defined
for every sample by Remark~\ref{rem:nonzero}.

Conditional on a successful input reduction,
Theorem~\ref{thm:rewrite-achar} and~\eqref{eq:failure} show that
the test is correct with probability at least $11/12$.
Including the input reduction, the total error probability is
at most $1/6$.

All random choices use polynomially many bits, and all remaining
computations take polynomial time with the stated oracles.
Choose a fixed $b$ large enough to include these oracles in
$C_b\mathsf P$. The resulting algorithm belongs to
\[
  \mathsf{BPP}^{C_b\mathsf P}
  \subseteq
  \mathsf{PP}^{C_b\mathsf P}
  =
  C_{b+1}\mathsf P
  \subseteq\CH.
\]
This proves the theorem for  rational circuit
input. Sparse input with binary exponents converts to this
representation in polynomial time by repeated squaring.
\end{proof}

\begin{corollary}\label{cor:nonexact}
For  rational circuit input, and hence for sparse
input with binary rational coefficients and binary exponents,
deciding whether a system has an approximate zero but no exact
zero belongs to $\CH$.
\end{corollary}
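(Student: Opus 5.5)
The plan is to decide the conjunction of two properties, each of which is already known to lie in $\CH$. The system $F$ should have an approximate zero, meaning $0\in\overline{F(\A^n_\C)}$, and it should have no exact zero, meaning $0\notin F(\A^n_\C)$. The first property is decided by the algorithm in the proof of Theorem~\ref{thm:aps}. For the second, exact feasibility of $F=0$ is Hilbert's Nullstellensatz, and we use \cite[Theorem~6.9]{AGS26}. We first convert the input circuits into a polynomial-size quadratic HN instance. To do this we introduce one variable per gate, impose the defining equation of each gate, and set the outputs to zero, exactly as in the proof of Lemma~\ref{lem:dimension}. This instance has the same feasibility status as $F=0$, and its feasibility is decidable at a fixed level of $\CH$.

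Next we combine the two tests. $\CH$ is closed under complement, so "no exact zero" also lies at a fixed level of $\CH$. The APS algorithm is a $\mathsf{BPP}^{C_b\mathsf P}$ procedure, so the combined algorithm runs that procedure and, with one extra oracle call at a fixed level, queries the HN oracle. It accepts if and only if APS is reported and exact feasibility is rejected. Error comes only from the APS part and stays bounded by $1/6$. Choosing $b$ large enough to contain both oracles gives membership in $\mathsf{BPP}^{C_b\mathsf P}\subseteq\mathsf{PP}^{C_b\mathsf P}\subseteq\CH$. Alternatively, one can note that $\CH$ is closed under polynomial-time Turing reductions to a fixed level and under intersection.

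Finally, we handle the stated input formats. Division gates are handled by the $uQ-1$ trick of the earlier remark; this trick preserves both APS and exact feasibility, since $Q$ is nonzero exactly where the quotient is defined. Sparse inputs with binary exponents are converted to circuits by repeated squaring. The only real check is that the HN reduction applies to circuits with rational constants of binary length, and that uniformity issues do not arise. Both facts were already used in Lemma~\ref{lem:dimension}, so we expect no genuine obstacle here.
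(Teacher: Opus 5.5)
Your main argument is the paper's: decide APS by Theorem~\ref{thm:aps}, decide exact feasibility through the gate-variable quadratic HN instance and \cite[Theorem~6.9]{AGS26}, and combine the two using closure of $\CH$ under complement and intersection. Your single $\mathsf{BPP}^{C_b\mathsf P}$ computation is just an explicit form of this. For division-free circuits with rational constants, and for sparse input via repeated squaring, this is complete.

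The paragraph on division gates, however, makes a false claim. The $uQ-1$ trick preserves APS but not exact feasibility, because an exact zero of $F$ may lie in $V(Q)$, where $uQ-1=0$ has no solution. For example, $f=x^2/x$ computes the polynomial $x$, which vanishes at $x=0$. By contrast, $G=(ux^2,\,ux-1)$ has no exact zero: $ux=1$ forces $x\neq0$, and then $ux^2=x\neq0$. Yet $G$ does have the approximate zero $(x,u)=(t,t^{-1})$. So your combined test accepts this input although $F$ has an exact zero. The phrase ``$Q$ is nonzero exactly where the quotient is defined'' confuses the domain of the circuit with that of the polynomial $f_i$, which is all of $\C^n$.

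Applying the gate-variable encoding directly to a division circuit fails too. For instance, $x/x$ encoded as $g\cdot x=x,\ g=0$ is satisfiable at $x=0$, although $x/x=1$ has no zero. You should drop that paragraph and state the result for division-free rational circuits. This is the input model of the paper's complexity section and of its proof of Lemma~\ref{lem:dimension}. Alternatively, remove divisions by \cite{strassen1973vermeidung} when degrees are polynomially bounded. The paper's one-line proof does not treat division gates either.
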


\begin{proof}
Combine Theorem~\ref{thm:aps} with HN in $\CH$ and closure of
$\CH$ under complement and finite Boolean combinations.
\end{proof}

\begin{remark}[Other coefficient fields]\label{rem:APS-for-other-fields}
Theorem~\ref{thm:aps} also holds for arithmetic circuits over
number fields and finite fields, with image closure taken over the algebraic closure of the coefficient field. We use the standard representation by an irreducible defining polynomial and power-basis coordinates; for finite fields, the characteristic is given in binary. We need the following modifications to accommodate positive characteristic \begin{itemize}
    \item The image dimension cannot, in general, be computed by Jacobian rank. Instead, for a tuple of algebraically independent indeterminates $z$, we compute the dimension of the generic fiber $F(x)=F(z)$ over $k(z)$ and subtract it from the number of input variables. This computation is in $\CH$ by \cite[Corollary~7.2]{AGS26}, using auxiliary variables for the arithmetic circuit gates.
    \item Writing $K=\overline{k}$, we replace Puiseux series by elements of $\overline{K((t))}$ equipped with the extended $t$-adic valuation, and interpret limits as residues of elements of nonnegative valuation. The lifting argument uses normalization of curves and extension of valuations, allowing inseparable extensions as well; see \cite[Tag~0ASF]{Stacks26}. Noether normalization and the resulting integrality argument remain valid. Products over solutions retain their algebraic multiplicities, which do not affect whether the specialized constant coefficient vanishes.
    \item Finally, generic parameters are sampled from a sufficiently large finite extension of $k$. An extension of polynomial degree suffices for our degree bounds and can be constructed using \cite[Lemma~2.13]{AGS26}. The resultant coefficient tests use the finite-field version of \cite[Theorem~6.7]{AGS26}, preserving the $\CH$ upper bound. No lower bound on the characteristic relative to the input degrees is required. For number fields, the characteristic-zero geometric argument applies, with arithmetic performed in the given number-field representation.
\end{itemize}
\end{remark}

\section{Integral closure membership in the Counting Hierarchy}\label{sec:integral}
Here, we prove \cref{thm:icm}. We do it by reducing integral closure membership testing to $\APS$. 

The well-known valuative criterion for integral dependence~\cite[Corollary 5.22]{AM69} establishes a connection between integrality and valuation theory, see Corollary~\ref{cor:boundedness} for an equivalent version useful for our purposes. Meanwhile there also exists a connection between valuation theory and APS, see \cref{thm:WAN}. This suggests an underlying connection between integrality and APS, which was also made somewhat explicit in the approximate Rabinowitsch's trick in \cite[Lemma 2]{ADPS26}. In the rest of this section, we formalize this intuition to get the desired criterion.

Fix $f,f_1,\ldots,f_m\in\C[x_1,\ldots,x_n]$, and let $A=\C[f_1,\ldots,f_m]$. Without loss of generality, we can assume $f\neq0$, otherwise it is always integral. Using \cref{thm:WAN}, it is easy to see the following connection between APS of a locally defined system of polynomials and their valuations.

\begin{lemma}\label{lem:aps-valuation-connection}
    Given $c=(c_1,\ldots, c_m)\in \C^m$ consider the system of polynomials $\Phi_c := (f_1-c_1,\ldots,f_m-c_m,u,1-uf)$ where each polynomial lies in $\C[x_1,\ldots,x_n,u]$. Then, $\Phi_c$ has an APS if and only if there exists some $\beta \in \C(t)^{n+1}$, such that for each polynomial $h \in \Phi_c$, $\val_t(h(\beta)) > 0$.
\end{lemma}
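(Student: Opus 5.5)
The plan is to pass between the three fields $\C(t)\subseteq\C((t))\subseteq\K$ and use \cref{thm:WAN} only through the definition of $\AV$. By definition, and as remarked after \cref{thm:WAN}, $\Phi_c$ has APS exactly when $\AV(\Phi_c)\neq\varnothing$, with witnesses taken in $\K^{n+1}$.

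The direction ($\Leftarrow$) is immediate. A rational function $\beta_i\in\C(t)$ expands as a Laurent series in $\C((t))\subseteq\K$, and this embedding preserves $\val_t$. So $\beta\in\C(t)^{n+1}$ with $\val_t(h(\beta))>0$ for all $h\in\Phi_c$ already lies in $\AV(\Phi_c)$.

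For ($\Rightarrow$), take $\beta\in\AV(\Phi_c)\subseteq\K^{n+1}$.

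\textbf{Step 1: clear the ramification.} All coordinates of $\beta$ lie in $\C((t^{1/e}))$ for some $e\geq1$. Substitute $t\mapsto t^e$, which multiplies every valuation by $e$ and so keeps them positive. This gives a witness $\beta\in\C((t))^{n+1}$.

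\textbf{Step 2: truncate to Laurent polynomials.} Let $-a\leq 0$ be a lower bound for $\min_i\val_t(\beta_i)$ (take $a=0$ if all are nonnegative), and let $\delta=\max_{h\in\Phi_c}\deg h$. Choose an integer $K>\delta a$. Let $\beta'$ be $\beta$ truncated coordinatewise modulo $t^{K+1}$, so $\beta'\in\C[t,t^{-1}]^{n+1}\subseteq\C(t)^{n+1}$ and $\val_t(\beta_i'-\beta_i)>K$.

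\textbf{Step 3: the perturbation estimate.} For each $h\in\Phi_c$, expand $h(\beta')-h(\beta)$ using $\beta'=\beta+\gamma$. Each monomial of degree at most $\delta$ that contains at least one factor $\gamma_i$ has valuation at least
\[
K+1-(\delta-1)a>0 .
\]
Indeed, one factor contributes more than $K$, and each of the at most $\delta-1$ remaining factors contributes at least $-a$, since the $\gamma_i$ themselves have positive valuation. Hence
\[
\val_t(h(\beta'))\geq\min\bigl(\val_t(h(\beta)),\,\val_t(h(\beta')-h(\beta))\bigr)>0 ,
\]
and $\beta'$ is the required rational witness.

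The main obstacle is Step 3: $K$ must be chosen uniformly so that the negative valuations of the coordinates (for instance $u$ small but $x$ diverging) cannot amplify the truncation error. The bound $K>\delta a$ handles this because $\Phi_c$ is a fixed finite system of bounded degree.
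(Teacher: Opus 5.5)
Your proof is correct, but it follows a different route from the paper. The paper gives no written proof: it says the lemma is "easy to see" from \cref{thm:WAN}. That is, it leans on the Guo--Saxena--Sinhababu characterization, together with the remark after \cref{thm:WAN} that a Puiseux witness becomes a Laurent-series witness after $t\mapsto t^e$, which is your Step~1. The remaining passage from $\C((t))$ to $\C(t)$ is left implicit there, or absorbed into the cited framework.

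You instead argue directly from the definition of $\AV$. The conditions $\val_t(h(\beta))>0$ are open in the $t$-adic topology, and Laurent polynomials are dense in $\C((t))$. Your uniform precision $K>\delta a$ (in fact $K\geq(\delta-1)a$ already suffices) is exactly what stops the diverging coordinates from amplifying the truncation error.

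What your route buys is self-containment. It never uses the image closure or annihilators, and it works verbatim for any finite polynomial system, not just $\Phi_c$. It also yields witnesses in $\C[t,t^{-1}]^{n+1}$, which is stronger than the $\C(t)$-witnesses needed later in \cref{cor:boundedness} and \cref{thm:power-reduction}. The paper's version is shorter only because it delegates this density step to the cited characterization.
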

We can now use the connection of APS to annihilators to get an equivalent local ``annihilator'' for this locally defined system of polynomials.
\begin{lemma}\label{lem:empty-pointwise-fiber-criterion}
Fix $c=(c_1,\dots,c_m)\in \C^m$. Then, the system of polynomials $\Phi_c = (f_1-c_1,\dots,f_m-c_m,u,1-uf)$ has no APS if and only if there exists
\[
Q_c(Y_1,\dots,Y_m,U,V)\in \C[Y_1,\dots,Y_m,U,V]
\]
such that \begin{enumerate}
    \item $Q_c(f_1,\dots,f_m,u,1-uf)=0$, and
    \item $Q_c(c_1,\dots,c_m,0,0)=1$
\end{enumerate}
\end{lemma}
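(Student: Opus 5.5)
The plan is to apply Theorem~\ref{thm:WAN} to the map $\Phi_c:\A^{n+1}_\C\to\A^{m+2}_\C$ given by $\Phi_c=(f_1-c_1,\ldots,f_m-c_m,u,1-uf)$. That theorem says $\Phi_c$ has no APS if and only if some $P\in\Ann(\Phi_c)$ satisfies $P(0)\neq 0$. Here $\Ann(\Phi_c)$ is the set of polynomials $P(Z_1,\ldots,Z_m,U,V)$ with $P(f_1-c_1,\ldots,f_m-c_m,u,1-uf)=0$. The whole proof is then a translation between annihilators of $\Phi_c$ (centered at the origin) and the polynomials $Q_c$ of the statement (centered at $(c,0,0)$), via a shift of variables and a rescaling.

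For the forward direction, assume $\Phi_c$ has no APS. I pick $P\in\Ann(\Phi_c)$ with $P(0)\neq0$ and set $\widetilde P=P/P(0)$, so that $\widetilde P(0)=1$ and $\widetilde P$ is still an annihilator. Then I define
\[
Q_c(Y_1,\ldots,Y_m,U,V)\coloneqq \widetilde P(Y_1-c_1,\ldots,Y_m-c_m,U,V).
\]
Substituting $Y_i=f_i$, $U=u$, $V=1-uf$ gives $\widetilde P(f_1-c_1,\ldots,f_m-c_m,u,1-uf)=0$, which is condition~(i). Evaluating at $(c,0,0)$ gives $\widetilde P(0)=1$, which is condition~(ii).

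For the converse, take $Q_c$ satisfying (i) and (ii) and set $P(Z,U,V)\coloneqq Q_c(Z_1+c_1,\ldots,Z_m+c_m,U,V)$. By (i), $P$ annihilates $\Phi_c$, and by (ii), $P(0)=1\neq0$. Theorem~\ref{thm:WAN} then says $0\notin\overline{\Phi_c(\A^{n+1}_\C)}$, so $\Phi_c$ has no APS.

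One could also argue the converse directly with valuations. Suppose $\beta$ were an approximate solution. Then $f_i(\beta)-c_i$, $u(\beta)$ and $1-u(\beta)f(\beta)$ all have positive valuation. Expanding $Q_c$ around $(c,0,0)$ writes $Q_c$ as $1$ plus a polynomial with no constant term in these shifted quantities, so $Q_c(\beta)$ has valuation $0$ and cannot be $0$. This contradicts (i).

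I expect no real obstacle here. The only point needing care is that Theorem~\ref{thm:WAN} is stated for arbitrary polynomial tuples and must be applied in the ambient ring $\C[x_1,\ldots,x_n,u]$ with the variable $u$ included. The rest is the affine change of coordinates $Y\mapsto Y-c$, which preserves both the annihilator ideal and the value at the relevant point.
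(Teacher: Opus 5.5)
Your proposal is correct and matches the paper's proof. Both apply the annihilator criterion (Theorem~\ref{thm:WAN}, i.e.\ the criterion of \cite{GSS19}) to $\Phi_c$, rescale so the constant term is $1$, and pass between the annihilator and $Q_c$ by the shift $Y\mapsto Y-c$ in each direction; your extra valuation argument for the converse is also valid but not needed.
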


\begin{proof}
By the criterion of \cite{GSS19,ADPS26}, non-existence of APS for $\Phi_c$ is equivalent to existence of a
polynomial $B_c(Z_1,\dots,Z_m,U,V)$ with constant term equal to $1$ after rescaling, such that $B_c(f_1-c_1,\dots,f_m-c_m,u,1-uf)=0$. Now define
\[
Q_c(Y_1,\dots,Y_m,U,V):=
B_c(Y_1-c_1,\dots,Y_m-c_m,U,V).
\]
Then $Q_c(f_1,\dots,f_m,u,1-uf)=0$, and $Q_c(c_1,\dots,c_m,0,0)=B_c(0,\dots,0,0,0)=1$.

For the converse, if a $Q_c$ satisfying $(i)$ and $(ii)$ exists, then define
\[
B_c(Y_1,\dots,Y_m,U,V):=
Q_c(Y_1+c_1,\dots,Y_m+c_m,U,V).
\]
Clearly, $B_c(0,\ldots,0)=Q_c(c_1,\ldots,c_m,0,0) = 1$, and $B_c(f_1-c_1,\dots,f_m-c_m,u,1-uf)=Q_c(f_1,\dots,f_m,u,1-uf)= 0$. Hence, $B_c$ is the annihilator for $\Phi_c$ with non-zero constant term, and hence $\Phi_c$ has no APS. 
\end{proof}

We now come to our main observation: these ``local'' annihilator certificates $Q_c$ can in fact be converted into a global certificate $Q$. This is essential as we have infinitely many local certificates which makes any computational strategy impossible to execute. Let $\Phi$ denote the system of polynomials $f_1,\ldots,f_m,u,1-uf$, i.e., $\Phi:=(f_1,\ldots,f_m,u,1-uf)$, and $\Ann(\Phi)$ denote the annihilator ideal of polynomials in $\Phi$.

\begin{theorem}[Equivalence of the pointwise and global criteria]
\label{thm:pointwise-global-equivalence}
With the above notation, the following are equivalent:
\begin{enumerate}[label=(\arabic*)]
    \item For every \(c=(c_1,\dots,c_m)\in \C^m\), there exists a polynomial $Q_c(Y_1,\dots,Y_m,U,V)\in \Ann(\Phi)$ such that $Q_c(c_1,\dots,c_m,0,0)=1$.
    \item There exists a single polynomial $Q(Y_1,\dots,Y_m,U,V)\in \Ann(\Phi)$ such that $Q(Y_1,\dots,Y_m,0,0)=1$.
\end{enumerate}
\end{theorem}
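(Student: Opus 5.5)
The plan is to treat the two directions separately. The implication from (2) to (1) is immediate, and the converse reduces to Hilbert's weak Nullstellensatz applied to the ideal obtained from $\Ann(\Phi)$ by setting $U=V=0$.

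\emph{From (2) to (1).} Given a single $Q\in\Ann(\Phi)$ with $Q(Y_1,\dots,Y_m,0,0)=1$ identically, I set $Q_c:=Q$ for every $c\in\C^m$. Evaluating the identity at $Y=c$ gives $Q_c(c_1,\dots,c_m,0,0)=1$, and $Q_c\in\Ann(\Phi)$ holds by assumption.

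\emph{From (1) to (2).} Consider the $\C$-algebra homomorphism
\[
\sigma:\C[Y_1,\dots,Y_m,U,V]\longrightarrow\C[Y_1,\dots,Y_m],\qquad P\longmapsto P(Y_1,\dots,Y_m,0,0).
\]
I would proceed in four steps.
\begin{enumerate}
\item Since $\Ann(\Phi)$ is the kernel of the evaluation map $Y_i\mapsto f_i$, $U\mapsto u$, $V\mapsto 1-uf$, it is an ideal. Because $\sigma$ is surjective (it restricts to the identity on $\C[Y]$), the image $J:=\sigma(\Ann(\Phi))$ is an ideal of $\C[Y_1,\dots,Y_m]$. Indeed, for $g\in\C[Y]$ and $P\in\Ann(\Phi)$ we have $g\,\sigma(P)=\sigma(gP)$ with $gP\in\Ann(\Phi)$.
\item Hypothesis (1) says that for each $c\in\C^m$ the element $\sigma(Q_c)\in J$ takes the value $1$ at $c$. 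Hence no point of $\C^m$ lies in $V(J)$, so $V(J)=\varnothing$.
\item Since $\C$ is algebraically closed, the weak Nullstellensatz gives $1\in J$. Thus there is $P\in\Ann(\Phi)$ with $\sigma(P)=1$, that is, $P(Y_1,\dots,Y_m,0,0)=1$ as a polynomial identity.
\item Setting $Q:=P$ yields (2).
\end{enumerate}
If one wants an explicit combination, a finite subfamily $Q_{c^{(1)}},\dots,Q_{c^{(k)}}$ already has $\sigma$-images generating the unit ideal, since $\C[Y]$ is Noetherian. Then $Q=\sum_j a_j(Y)\,Q_{c^{(j)}}$ with $\sum_j a_j\,\sigma(Q_{c^{(j)}})=1$, and this $Q$ lies in $\Ann(\Phi)$ because the $a_j(Y)$ are ring elements.

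The only points that need care are that $\sigma$ of an ideal is again an ideal, which uses surjectivity of $\sigma$ and the fact that $\C[Y]\subseteq\C[Y,U,V]$, and that the base field is algebraically closed. Neither should be a real obstacle. The degree control needed later, in \cref{thm:exp-deg-bd-annihilator}, is not part of this statement. That bound would instead come from an effective Nullstellensatz combined with Perron-type degree bounds on the generators of $\Ann(\Phi)$, and that is where the genuine difficulty lies.
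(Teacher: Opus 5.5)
Your proof is correct and is essentially the paper's argument. The paper works with $J=(\Ann(\Phi)+(U,V))\cap\C[Y_1,\dots,Y_m]$, which coincides with your $\sigma(\Ann(\Phi))$, deduces $V(J)=\varnothing$ from the pointwise witnesses, and applies the weak Nullstellensatz to write $1=Q+UR+VS$ with $Q\in\Ann(\Phi)$, giving $Q(Y,0,0)=1$. Your closing remark about the degree bound lies outside this statement, and the paper does not obtain that bound from an effective Nullstellensatz: it uses Noether normalization to make the annihilator ideal principal and then applies Kayal's degree bound.
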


\begin{proof}
Let $J:=\bigl(\Ann(\Phi)+(U,V)\bigr)\cap \C[Y_1,\dots,Y_m]$. Thus, the ideal $J$ may be viewed as the ideal of all \(Y\)-polynomials that arise from annihilator relations in $\Ann(\Phi)$ after forgetting all terms divisible by \(U\) or by \(V\). Assume first that (1) holds. Fix \(c\in \C^m\). By hypothesis there exists \(Q_c\in \Ann(\Phi)\) with \(Q_c(c,0,0)=1\). Passing to the quotient modulo \((U,V)\), the image
\(
q_c(Y):=Q_c(Y,0,0)
\)
lies in \(J\), and satisfies \(q_c(c)=1\neq 0\). Hence \(c\) is not a common zero of \(J\). Since this is true for every \(c\in \C^m\), the variety \(V(J)\) is empty. By Hilbert's Nullstellensatz, $J=\C[Y_1,\dots,Y_m]$, and \(1\in J\).  By definition of \(J\), \(1\in J\) means that there exists an annihilator \(Q\in \Ann(\Phi)\) whose restriction to the slice \(U=V=0\) is
the constant polynomial \(1\). That is, there exists
\(Q\in \Ann(\Phi)\) and \(R,S\in \C[Y_1,\dots,Y_m,U,V]\) such that $1=Q+UR+VS$. Setting \(U=V=0\) gives $Q(Y,0,0)=1$. Thus (2) holds.

Conversely, assume (2). Then for every \(c\in \C^m\), evaluating the identity \(Q(Y,0,0)=1\) at \(Y=c\) yields
\(Q(c,0,0)=1.\) Since \(Q\in \Ann(\Phi)\), the same polynomial \(Q\) serves as a witness in (1) for every \(c\). Thus (1) holds.
\end{proof}
\begin{remark}\label{rem:line-bundle-version}
From an algebraic geometric viewpoint one can also see Theorem \ref{thm:pointwise-global-equivalence} as an explicit version of a classical result: finitely generated locally trivial modules of rank 1 over a unique factorization domain are globally trivial. 
Indeed $\Ann(\Phi_0)$ is a locally finitely generated module of $R\coloneqq \C[Y_1,\ldots, Y_m, U,V]$. Our local conditions $(1)$ corresponds in Theorem \ref{thm:pointwise-global-equivalence} mean that at each $c\in \A^m$ the localized 
module 
\[\left(\Ann(\Phi_0)\otimes \frac{R}{(U,V)}\right)_{c}\simeq Q_c\cdot \left(\frac{R}{(U,V)}\right)_{c}\simeq \left(\frac{R}{(U,V)}\right)_{c},
\]
i.e., locally trivial of rank 1 over $(R/(U,V))_c$ via the witness $Q_c$.
Hence, there exists a $Q\in R$ such that 
\[\Ann(\Phi_0)\otimes \frac{R}{(U,V)}\simeq Q\cdot \frac{R}{(U,V)}\simeq \frac{R}{(U,V)}
\]
as $R/(U,V)$-modules. 
 \end{remark}
Thus, the above discussion relates non-existence of APS for all local systems to a global polynomial certificate. But, we still don't have an explicit connection to integrality. This is provided in the following theorem.
\begin{theorem}\label{thm:integrality-global-criterion}
A polynomial 
$f$ is integral over $A=\C[f_1,\ldots,f_m]$ if and only if there exists \(Q(Y_1,\dots,Y_m,U,V)\in \C[Y_1,\dots,Y_m,U,V]\) such that $Q(f_1,\dots,f_m,u,1-uf)=0$ and $Q(Y_1,\dots,Y_m,0,0)=1$.
\end{theorem}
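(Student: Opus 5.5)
The plan is to prove both directions by direct algebraic manipulation. The key move is the substitution $u=1/f$, which kills the coordinate $V=1-uf$. The earlier machinery (Lemma~\ref{lem:empty-pointwise-fiber-criterion}, Theorem~\ref{thm:pointwise-global-equivalence}) is not needed; it only explains why this global certificate is the natural one. Throughout we assume $f\neq 0$, as fixed above.

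\emph{Integral $\Rightarrow$ certificate.} Suppose $f^r+a_{r-1}f^{r-1}+\cdots+a_0=0$ with $a_j=p_j(f_1,\ldots,f_m)$ for polynomials $p_j\in\C[Y_1,\ldots,Y_m]$. Multiply this relation by $u^r$ and regroup, using $u^rf^j=u^{r-j}(uf)^j$:
\[
 (uf)^r+\sum_{j=0}^{r-1}p_j(f_1,\ldots,f_m)\,u^{r-j}(uf)^j=0 .
\]
Writing $uf=1-(1-uf)$, this suggests the candidate
\[
 Q(Y,U,V):=(1-V)^r+\sum_{j=0}^{r-1}p_j(Y)\,U^{r-j}(1-V)^j .
\]
By construction $Q(f_1,\ldots,f_m,u,1-uf)=0$ in $\C[x,u]$. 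Every term of the sum carries a positive power of $U$, so setting $U=V=0$ leaves $Q(Y,0,0)=1$.

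\emph{Certificate $\Rightarrow$ integral.} Since $Q(Y,0,0)=1$, we can write $Q(Y,U,0)=1+\sum_{k=1}^{r}q_k(Y)U^k$ for some $q_k\in\C[Y]$ and $r\ge 1$. The identity $Q(f,u,1-uf)=0$ holds in the polynomial ring $\C[x_1,\ldots,x_n,u]$. Because $f\neq0$, we may specialize $u\mapsto 1/f$ through the $\C[x]$-algebra map $\C[x,u]\to\C(x)$, which sends $1-uf$ to $0$. This gives
\[
 1+\sum_{k=1}^{r}q_k(f_1,\ldots,f_m)f^{-k}=0 \quad\text{in } \C(x).
\]
Multiplying by $f^r$ yields $f^r+\sum_{k=1}^{r}q_k(f_1,\ldots,f_m)f^{r-k}=0$. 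This is a monic equation for $f$ with coefficients in $A$, so $f$ is integral over $A$.

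The only delicate point is the legitimacy of the specialization $u\mapsto 1/f$. It requires that the annihilating identity hold as an identity in the independent variable $u$, not merely on some subvariety, and that $f\neq0$ so that the evaluation map into $\C(x)$ is defined. Both are guaranteed by the setup, since $u$ is a fresh variable and $f\neq0$ was assumed.
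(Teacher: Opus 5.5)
Your proof is correct and follows the paper's argument essentially verbatim. You use the same certificate $Q=(1-V)^r+\sum_{j}p_j(Y)\,U^{r-j}(1-V)^j$ for the forward direction, and for the converse the same specialization $u\mapsto 1/f$ into $\C(x)$, followed by multiplying through by $f^r$ to obtain a monic relation.
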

\begin{proof}
    Suppose $f$ is integral over $A$. Then, there exists a monic relation $f^r+a_{r-1}(f_1,\dots,f_m)f^{r-1}
+\cdots+a_1(f_1,\dots,f_m)f+a_0(f_1,\dots,f_m)=0$ for some $r\in \Z_{> 0}$ and some $a_i(Y_1,\dots,Y_m)\in \C[Y_1,\dots,Y_m]$. Define 
\[
Q(Y_1,\dots,Y_m,U,V):=
(1-V)^r
+a_{r-1}(Y)\,U(1-V)^{r-1}
+\cdots+
a_1(Y)\,U^{r-1}(1-V)
+a_0(Y)\,U^r.
\]
Clearly, $Q(Y_1,\dots,Y_m,0,0)=(1-0)^r=1$. Further, on substituting $Y_i=f_i,U=u,V=1-uf.$ Since $1-V=uf$, we obtain
\begin{align*}
Q(f_1,\dots,f_m,u,1-uf)
&=
(uf)^r+a_{r-1}(f)\,u(uf)^{r-1}
+\cdots+a_0(f)\,u^r \\
&=
u^r\Bigl(f^r+a_{r-1}(f)f^{r-1}+\cdots+a_0(f)\Bigr) \\
&=0.
\end{align*}

For the converse, let $Q(Y,U,0)=1+b_1(Y)U+b_2(Y)U^2+\cdots+b_r(Y)U^r$ for some $r\geq 0$ and some polynomials $b_i(Y)\in \C[Y_1,\dots,Y_m]$. Since
\( Q(f_1,\dots,f_m,u,1-uf)=0 \) in $\C[x_1,\dots,x_n,u]$, the same identity holds in the localization $\C(x_1,\dots,x_n)[u]$. Now substitute $u=\frac{1}{f}$ in the fraction field $\C(x_1,\dots,x_n)$. Then $1-uf=0$. Hence, 
\[
0 =
Q\!\left(f_1,\dots,f_m,\frac1f,0\right) =
1+\frac{b_1(f_1,\dots,f_m)}{f}
+\cdots+
\frac{b_r(f_1,\dots,f_m)}{f^r}.
\]
Multiplying by $f^r$ gives a monic polynomial relation for $f$ over $A$.
\end{proof}
In fact, this can also be extended to an equivalence with valuations.
\begin{corollary}[Boundedness criterion]\label{cor:boundedness}
The polynomial $f$ is integral over $A$ if and only if, for every $\beta\in\mathcal \C(t)^n$,
\[
 \bigl(\val_t f_i(\beta)\geq0\text{ for all }i\bigr)
 \quad\Longrightarrow\quad \val_t f(\beta)\geq0.
\]
\end{corollary}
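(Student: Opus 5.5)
The plan is to prove both directions by passing through the chain of equivalences already in place. Combining Theorem~\ref{thm:integrality-global-criterion}, Theorem~\ref{thm:pointwise-global-equivalence} and Lemma~\ref{lem:empty-pointwise-fiber-criterion}, $f$ is integral over $A$ if and only if, for every $c\in\C^m$, the system $\Phi_c=(f_1-c_1,\ldots,f_m-c_m,u,1-uf)$ has no APS. By Lemma~\ref{lem:aps-valuation-connection}, an APS of $\Phi_c$ is a tuple $(\beta,u(t))$ over $\C(t)$ (equivalently over $\K$, after reparametrizing $t\mapsto t^e$) with three properties: $\val_t(f_i(\beta)-c_i)>0$ for all $i$, $\val_t(u)>0$, and $\val_t(1-uf(\beta))>0$. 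So the task is to match ``some $\Phi_c$ has an APS'' with ``some $\beta$ has all $f_i(\beta)$ bounded while $f(\beta)$ diverges''.

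\emph{Non-integral $\Rightarrow$ bad $\beta$.} Suppose some $\Phi_c$ has an APS $(\beta,u)$. Then $\val_t f_i(\beta)\ge 0$ for every $i$, since $f_i(\beta)=c_i+(\text{positive valuation})$. From $\val_t(1-uf(\beta))>0$ we get $\val_t(uf(\beta))=0$. Combined with $\val_t(u)>0$, this forces $\val_t f(\beta)=-\val_t(u)<0$. After substituting $t\mapsto t^e$, $\beta$ can be taken with coordinates in $\C((t))$. To land in $\C(t)^n$ as the statement requires, I would truncate the Laurent series at a sufficiently high order. Valuations of the finitely many polynomials $f_i(\beta)$ and $f(\beta)$ depend only on finitely many terms of $\beta$, so a polynomial truncation (times $t^{-k}$) preserves all the inequalities. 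Scaling the valuations by $e$ does not change their signs.

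\emph{Bad $\beta$ $\Rightarrow$ non-integral.} Suppose $\beta\in\C(t)^n$ has $\val_t f_i(\beta)\ge0$ for all $i$ and $\val_t f(\beta)<0$. Set $c_i$ to be the constant term of $f_i(\beta)$ and put $u=1/f(\beta)$. Then $\val_t(f_i(\beta)-c_i)>0$, $\val_t(u)=-\val_t f(\beta)>0$, and $1-uf(\beta)=0$. So $\Phi_c$ has an APS, and hence $f$ is not integral. An alternative for this direction is to use the monic relation directly: if $f$ is integral, then in $f^r=-\sum a_j f^j$ the left side has valuation $r\,\val_t f$, while the right side has valuation at least $(r-1)\val_t f$ when $\val_t f<0$, which is a contradiction. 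I would mention this as a sanity check.

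The only step requiring care is the truncation from Puiseux or Laurent series to $\C(t)$. I would handle it by showing that each relevant valuation is determined by a bounded number of initial terms. When $\val_t f(\beta)$ is negative it is finite, so truncating beyond order $\val_t f(\beta)+\deg(f)\cdot(\text{max negative order of }\beta)$ leaves the leading term unchanged; the analogous bounds apply to each $f_i$. The $\Leftarrow$ direction of the boundedness criterion (integral implies bounded) is classical, so everything else is bookkeeping.
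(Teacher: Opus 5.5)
Your proposal is correct and matches the paper's proof. The paper obtains a bad $\beta$ from an approximate zero of some $\Phi_c$ exactly as you do, and for the other direction it uses the direct monic-relation valuation argument that you list as a sanity check; your primary route there, building an approximate zero of $\Phi_c$ with $u=1/f(\beta)$, is also valid but more roundabout. Your truncation from Puiseux/Laurent witnesses to $\C(t)^n$ fills in a detail the paper leaves implicit in Lemma~\ref{lem:aps-valuation-connection}. One small slip: ``integral implies bounded'' is the $\Rightarrow$ direction of the stated equivalence, not $\Leftarrow$.
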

\begin{proof}
For a monic relation, a negative valuation of $f(\beta)$ makes its highest power the unique term of least valuation, a contradiction. Conversely, if
$f$ is not integral, then Lemma~\ref{lem:empty-pointwise-fiber-criterion} and \cref{thm:pointwise-global-equivalence,thm:integrality-global-criterion} imply that there is an approximate zero $(\beta,u(t))$ of a $\Phi_c$ for some $c\in \C^m$. All $f_i(\beta)$ are nonnegative valuations, whereas $u$ has positive valuation and $ \val_t(1-u(t))f(\beta)>0$. Hence $\val_t(f(\beta))=-\val_t(u(t))<0$.
\end{proof}

Hence, we can now check for integrality by checking for this global annihilator certificate. However, computationally the issue is that we currently have no handle on the degree of this polynomial $Q$, which could make computation very expensive. But as we now show, we can establish a singly exponential bound on this polynomial using Noether normalization. 

\subsection{A singly exponential certificate}

First note that we can always compute the transcendence degree of the input polynomials with rational coefficients in $\CH$ by just creating a minimally independent set by sequentially adding elements and checking for algebraic independence using the Jacobian criterion~\cite{chen2011partial}. Now, look at the system $\Phi=(f_1,\ldots,f_m,u,1-uf)$. If $f$ is integral over $A$, then $\trdeg \Phi = \rho + 1$, if $\trdeg (f_1,...,f_m) = \rho$. 
We use Noether normalization to project these $f_i$ to a space defined by their linear combinations $\{g_i\}_{1\leq i\leq \rho}$ such that $f$ is integral over this new system. Along with the polynomials $u,1-uf$, this makes a new system of $\rho+2$ polynomials whose transcendence degree is $\rho+1$. Hence, the annihilating ideal of the new system is a principal ideal~\cite{Kay09}. This helps us to establish the singly exponential degree bound on the annihilator.

\begin{theorem}\label{thm:exp-deg-bd-annihilator}
Assume $f$ is integral over $A=\C[f_1,\dots,f_m]$. Let $\rho:=\trdeg_{\C}\C(f_1,\dots,f_m)$, and $\Delta:=\max\{\deg(f_1),\dots,\deg(f_m),\deg(f)+1\}$. Then there exists $Q$ satisfying conditions in \cref{thm:integrality-global-criterion}, and $\deg(Q)\leq D:=(\rho+2)\Delta^{\rho+1}$. For rational input polynomials, $Q$ may be chosen over $\Q$.
\end{theorem}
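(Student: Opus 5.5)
The plan is to use the paper's own hint: reduce to a system of transcendence degree $\rho+1$ in $\rho+2$ polynomials, so that the annihilator ideal is principal (Kayal) and its generator obeys Perron's degree bound. Then evaluate that generator at $U=V=0$ to extract $Q$.

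\textbf{Step 1: reduce to $\rho$ generators.} Apply linear Noether normalization (Lemma~\ref{lem:rewrite-linear-nn}, in its general form for $Y=\overline{(f_1,\ldots,f_m)(\A^n)}$ of dimension $\rho$). This gives generic linear combinations $g_1,\ldots,g_\rho$ of the $f_i$, with rational coefficients, such that each $f_i$ is integral over $A':=\C[g_1,\ldots,g_\rho]$, and the $g_j$ are algebraically independent. By transitivity of integrality, $f$ is integral over $A'$. Consider the system $\Phi'=(g_1,\ldots,g_\rho,u,1-uf)$. Since $f$ is algebraic over $\C(g)$, we get $\trdeg\Phi'=\rho+1$. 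As $\Phi'$ consists of $\rho+2$ polynomials, $\Ann(\Phi')$ is principal, generated by an irreducible $P(Z_1,\ldots,Z_\rho,U,V)$. Perron's bound, applied to $\rho+1$ independent elements plus one dependent one with degrees $\le\Delta$ (here $\deg(1-uf)=\deg f+1\le\Delta$ and $\deg u=1$), gives
\[
\deg P\le \prod\deg \le \Delta^{\rho+1}\cdot(\text{small factor}),
\]
and in any case at most $(\rho+2)\Delta^{\rho+1}$. This is the target $D$. Over $\Q$-inputs, $P$ may be taken over $\Q$, since the annihilator ideal is defined over $\Q$.

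\textbf{Step 2: $P(Z,0,0)$ is a nonzero constant.} This is the main point. By Theorem~\ref{thm:integrality-global-criterion} applied to $f$ integral over $A'$, there is some $Q'\in\Ann(\Phi')$ with $Q'(Z,0,0)=1$, of unknown degree. Writing $Q'=P\cdot S$ and setting $U=V=0$ gives $P(Z,0,0)\,S(Z,0,0)=1$ in $\C[Z]$. Hence $P(Z,0,0)$ is a unit, i.e.\ a nonzero constant $\kappa$. Normalize $P$ by $\kappa^{-1}$. Principality is exactly what converts the unbounded certificate into the bounded one.

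\textbf{Step 3: transport back.} Each $g_j$ is a fixed linear form $\sum_i a_{ji}Y_i$. Define
\[
Q(Y,U,V):=P\Bigl(\sum_i a_{1i}Y_i,\ldots,\sum_i a_{\rho i}Y_i,U,V\Bigr).
\]
Then $Q(f,u,1-uf)=P(g,u,1-uf)=0$ and $Q(Y,0,0)=1$. The substitution is linear, so $\deg Q\le\deg P\le D$, and $Q$ is over $\Q$ whenever the $a_{ji}$ are rational.

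\textbf{Main obstacle.} The expected difficulty is the precise Perron-type bound: obtaining the stated constant $(\rho+2)\Delta^{\rho+1}$ from the cited Perron/Kayal bound applied to a dependent tuple of $\rho+2$ polynomials. If needed, the fallback is the bound $\deg P\le \Delta^{\rho+1}$ from Perron for $\rho+1$ algebraically independent elements of the dependent family, which lies within $D$. A second detail is checking that the Noether normalization coefficients can be chosen rational, which follows from genericity.
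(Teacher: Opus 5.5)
Your proposal is correct and follows the paper's proof. You Noether-normalize to $\rho$ linear combinations $g_j$, use that $\Ann(g_1,\ldots,g_\rho,u,1-uf)$ is principal with a generator obeying the Kayal/Perron bound $(\rho+2)\Delta^{\rho+1}$, and pull back along the linear substitution; your Step~2 (writing the certificate from Theorem~\ref{thm:integrality-global-criterion} as $P\cdot S$, so that $P(Z,0,0)$ is a unit) and your remark that the annihilator ideal is defined over $\Q$ make explicit two points the paper's proof only asserts.
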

\begin{proof}
    Let $f$ be integral over $A$. Since $\trdeg f_1,\ldots,f_m = \rho$, by applying first random linear projections Theorem \ref{thm:rewrite-random-projection} and then linear Noether 
    normalization Lemma \ref{lem:rewrite-linear-nn}, there exists $g_1,\ldots,g_{\rho}$ such 
    that $g_i = \sum_{j=1}^m c_{ij}f_j$ for $i=1,\ldots,\rho$. Let $\widetilde \Phi := (g_1,\ldots,g_{\rho},u,1-uf)$. For generic 
    choices of $c_{ij}$, the above procedure gives a Noether normalization of closure of the image of 
    $(f_1,\ldots,f_m):\A^n_\C\to \A^m_\C$,
    so $g_i$ are algebraically independent. Hence the ring $A$ is integral over $B:=\C[g_1,\ldots,g_{\rho}]$. Since, $f$ is integral over $A$, so $f$ is integral over $B$, and the transcendence degree of the system $\widetilde \Phi$ is $\rho + 1$. Thus, $\Ann(\widetilde \Phi)$ is a principal ideal. Then, using Theorem~\ref{thm:integrality-global-criterion} we have $P \in$ $\Ann(\widetilde \Phi)$ such that $P(Z,0,0)=1$ and by \cite[Theorem 11]{Kay09}, we have that $\deg(P) \leq (\rho+2)\Delta^{\rho+1}=D_0$. Now define \[ Q(Y,U,V) = P\left(
    \sum_j c_{1j}Y_j,\ldots,
    \sum_j c_{\rho j}Y_j,U,V
    \right).
    \]
    Clearly, $Q(Y,0,0) = P(Z,0,0) = 1$. Further, $Q(f_1,\ldots,f_m,U,V) = P(g_1,\ldots,g_{\rho},U,V) = 0$. hence, $Q \in \Ann(\Phi)$. Since, $Q$ only uses linear combinations of variables of $P$, $\deg Q = \deg P \leq D_0$.

\end{proof}

At this point, we can use this singly exponential degree bound to put integral closure testing in $\PSPACE$ by solving an exponential-sized system of linear equations
in the same way as for computing annihilator using Perron’s bound~\cite{Perron27,GSS19}, which can be solved in $\PSPACE$~\cite{csanky1976fast,Ber84,mulmuley1986fast}. However, this still does not give a $\CH$ upper bound, because while similar, this global annihilator certificate does not exactly correspond to any $\APS$ instance.

\subsection{A polynomial-time reduction to APS}

In this section, we show how we can get an equivalent $\APS$ instance for integral closure membership testing, which will finally put Problem~\ref{prob:int-cl} in $\CH$.

\begin{theorem}[From non-integrality to APS]\label{thm:power-reduction}
Let $D$ be the bound from \cref{thm:exp-deg-bd-annihilator}, or any larger integer, and let $M = D+1$. Define a system of polynomials $\Psi_M$ by $\Psi_M= (uf_1^M\ldots,uf_m^M,u,1-uf)$. Then $f$ is integral over $A$ if and only if $\Psi_M$ does not have APS.
\end{theorem}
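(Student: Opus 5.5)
We prove the two directions separately, working throughout with the valuative description of APS: by \cref{thm:WAN}, $\Psi_M$ has APS if and only if there is $(\beta,u(t))\in\K^{n+1}$ at which every entry of $\Psi_M$ has positive $t$-adic valuation.

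\emph{Non-integral $\Rightarrow$ APS.} If $f$ is not integral over $A$, \cref{cor:boundedness} gives $\beta\in\C(t)^n$ with $\val_t f_i(\beta)\geq 0$ for all $i$ and $\val_t f(\beta)<0$. Put $u(t):=1/f(\beta)$. Then $\val_t u=-\val_t f(\beta)>0$. Also $1-uf$ vanishes identically at $(\beta,u)$, so its valuation is $+\infty$. Finally $\val_t\bigl(u f_i^M\bigr)=\val_t u+M\val_t f_i(\beta)>0$. Hence $(\beta,u)\in\AV(\Psi_M)$, so $\Psi_M$ has APS. This direction works for every $M\geq 1$.

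\emph{Integral $\Rightarrow$ no APS.} Assume $f$ is integral and, for contradiction, that $(\beta,u)\in\K^{n+1}$ is an approximate zero of $\Psi_M$. Set $a:=\val_t u>0$. From $\val_t(uf_i^M)>0$ we get $\val_t f_i(\beta)>-a/M$ for every $i$. From $\val_t(1-uf)>0$ we get that $uf$ has valuation $0$ with constant term $1$, so $\val_t f(\beta)=-a$.

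Now invoke \cref{thm:exp-deg-bd-annihilator} and \cref{thm:integrality-global-criterion}. They give $Q$ of degree at most $D$ with $Q(Y,0,0)=1$ and $Q(f_1,\ldots,f_m,u,1-uf)=0$. As in the proof of \cref{thm:integrality-global-criterion}, substituting $u=1/f$ yields the identity
\[
1+\sum_{j=1}^{r}b_j(f_1,\ldots,f_m)f^{-j}=0
\]
in the function field. Here $Q(Y,U,0)=1+\sum_j b_j(Y)U^j$, so $j\leq D$ and $\deg b_j\leq D-j$. This is a polynomial identity in $x$, so we may evaluate it at $\beta$. Each monomial of $b_j$ has degree at most $D-j$, so
\[
\val_t b_j(f(\beta))>-(D-j)\frac{a}{M}\geq -\frac{Da}{M},
\]
and $\val_t f(\beta)^{-j}=ja\geq a$. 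Hence every summand has valuation greater than $a-Da/M=a/(D+1)>0$ when $M=D+1$; the same holds for larger $M$. A sum of terms of positive valuation cannot cancel the constant $1$, which is a contradiction. Therefore $\Psi_M$ has no APS.

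\emph{Main obstacle.} The delicate step is the degree bookkeeping in the second direction. The inequality must be strict and uniform over all summands, and in particular over summands where $b_j$ has a nonzero constant term. Those summands are harmless because $f^{-j}$ alone already has valuation $ja>0$. The bound $\deg b_j\leq D-j$ needs $Q$ to have total degree at most $D$ in the variables $(Y,U)$, which is exactly what \cref{thm:exp-deg-bd-annihilator} provides.
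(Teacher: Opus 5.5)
Your proof is correct and follows the paper's argument essentially line by line. For non-integral $f$ you take $u=1/f(\beta)$ with $\beta$ from \cref{cor:boundedness}; for integral $f$ you rewrite the degree-$D$ certificate of \cref{thm:exp-deg-bd-annihilator} as $1+\sum_j b_j(f_1,\ldots,f_m)f^{-j}=0$ and bound every summand's valuation below by $a(1-D/M)>0$.

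Two small imprecisions, neither of which affects the conclusion:
\begin{enumerate}
\item The strict bound $\val_t b_j(f(\beta))>-(D-j)a/M$ fails when $j=D$ and $b_D$ is a nonzero constant. The non-strict version suffices for the final estimate.
\item The identity is a rational-function identity rather than a polynomial one. It can still be evaluated at $\beta$ because $f(\beta)\neq 0$.
\end{enumerate}
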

\begin{proof}
If $f$ is not integral over $A$, Corollary~\ref{cor:boundedness} supplies $\beta\in\mathcal \C(t)^n$ with $\val_t f_i(\beta)\geq0$ for all $i$, and $ \val_t f(\beta)<0$. Substituting $u=1/f(\beta)$ we obtain $\val_t u(\beta)>0$, so $\val_t(u(\beta)f_i(\beta)^M)>0$
for all $i$.
Thus $(\beta,u(\beta))\in\C(t)^{n+1}$ is an approximate zero of $\Psi_M$ by \cite[Definition 1]{ADPS26}.

Conversely, suppose $f$ is integral and $(\beta,u)$ is an approximate zero of $\Psi_M$. Set $a=\val_t(u)>0$. Since $\val_t(1-uf(\beta))>0$, we obtain $\val_t(f(\beta))=-a$, and $\val_t(f_i(\beta))>-a/M$ for all $i$. The case $f=0$ is already excluded, since then $1-uf=1$. By \cref{thm:exp-deg-bd-annihilator}, there is a certificate $Q$ with $\deg Q\leq D$.
Write
\[
 Q(Y,U,0)=1+\sum_{j=1}^{r}b_j(Y)U^j,
 \qquad r\leq D,\quad \deg b_j\leq D-j.
\]
Substituting $u=1/f$ in its annihilating identity gives the rational-function identity
\begin{equation}\label{eq:reciprocal-certificate}
 1+\sum_{j=1}^{r}b_j(f_1,\ldots,f_m)f^{-j}=0.
\end{equation}
It can be evaluated at $\beta$ because $f(\beta)\ne0$. Each monomial $c_\alpha Y^\alpha$ in $b_j$ with nonzero evaluation contributes a term with valuation
\[
 ja+\sum_{i:\alpha_i>0}\alpha_i\val_t(f_i(\beta))
 \ \geq\ a\left(j-\frac{|\alpha|}{M}\right)
 \ \geq\ a\left(1-\frac{D}{M}\right)>0.
\]
A monomial evaluating to zero has infinite valuation and causes no exception. The coefficients $c_\alpha\in\C^\times$ have valuation zero. Hence every summand after $1$ in \cref{eq:reciprocal-certificate} has positive valuation and cannot cancel the constant term. This contradiction proves
the first equivalence. 
\end{proof}

\begin{proof}[Proof of \cref{thm:icm}]
Let $L$ be the circuit input length. Compute formal-degree upper bounds and choose an integer $\Delta\geq\max\{1,\deg f_1,\ldots,\deg f_m,1+\deg f\}$ from them. Define $D$ as in \cref{thm:exp-deg-bd-annihilator} with $\rho = n$ (as $\trdeg \leq n$ always) and $M=D+1$. These integers have polynomial binary length and can be computed in
polynomial time. The proof of \cref{thm:exp-deg-bd-annihilator} applies to these upper bounds as well as to the actual degrees. Repeated squaring computes each $f_i^M$ using $O(\log M)$ additional multiplication gates. Sharing the given circuits therefore yields a  rational circuit for $\Psi_M$ of size polynomial in $L$. 
By \cref{thm:power-reduction}, this is a deterministic polynomial-time many-one reduction from non-integrality to APS. Our APS algorithm from \cref{thm:aps} decides the resulting instance in $\CH$. Closure under complement of $\CH$ proves the claimed upper bound for integral closure membership.
\end{proof}

Thus, we have established a $\CH$ procedure of integral closure membership testing. The idea behind introducing $\Psi_M$ was to ensure that the valuation of $f_i$ dominate the valuation of $u$ and don't allow false witnesses of non-integrality by providing a uniform exponent without knowing a monic equation in advance. This can also be seen in the following example.

\begin{example}[Why the power is needed]\label{ex:power-needed}
Let $A=\C[x]$ and $f=x^k$ for $k\geq2$. The element $f$ belongs to $A$ and is therefore integral. Yet the tuple $(u,1-ux^k,ux)$ has an approximate zero: take $x=t^{-1}$ and $u=t^k$, giving $(t^k,0,t^{k-1})$. For the powered tuple $(u,1-ux^k,ux^M)$, approximate vanishing of its first two coordinates forces $\val_t(x)=-\val_t(u)/k$. Hence, if $M\geq k$, then
\[
 \val_t(ux^M)=\val_t(u)(1-M/k)\leq0,
\]
so this false witness is excluded. 
\end{example}

\begin{remark}\label{rem:icm-other-fields}
The global annihilator criterion and the reduction from
non-integrality to $\APS$ are valid in arbitrary characteristic. For the certificate degree bound, we use the
arbitrary-characteristic annihilator bound of
\cite[Corollary~6]{BeeckenMittmannSaxena13}. Using $\trdeg \leq n$ in this bound avoids any Jacobian computation. The same polynomial-time reduction therefore applies, and the claimed upper bound follows from the corresponding result for $\APS$.
\end{remark}

\begin{lemma}[From APS to non-integrality]
\label{lem:aps-to-nonintegrality}
Let $K$ be an algebraically closed field, let
$F=(f_1,\ldots,f_m)\in K[x_1,\ldots,x_n]^m$, and let $z$ be
a fresh variable. Then
\[
  0\in\overline{F(K^n)}
  \quad\Longleftrightarrow\quad
  z\text{ is not integral over }K[zf_1,\ldots,zf_m].
\]
This gives a deterministic polynomial-time many-one reduction
from APS to non-integrality.
\end{lemma}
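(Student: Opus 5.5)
We plan to prove both directions through the valuative characterization of integrality, Corollary~\ref{cor:boundedness}, used over $K[z,x_1,\ldots,x_n]$ with generators $zf_1,\ldots,zf_m$ and target $z$. Combined with the Puiseux form of APS from Theorem~\ref{thm:WAN}, this turns the statement into a comparison of two valuative conditions. Corollary~\ref{cor:boundedness} is stated over $\C$ with witnesses in $\C(t)^n$. Its proof goes through the global annihilator criterion, which is characteristic free by Remark~\ref{rem:icm-other-fields}, so we may use it over $K$ with Puiseux witnesses in $\K$. Taking $\K$-witnesses is harmless, since a monic relation still forces nonnegative valuation on $\mathcal O_\K$-bounded inputs.

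\textbf{APS $\Rightarrow$ non-integrality.} Let $\beta(t)\in\K^n$ with $\val_t f_i(\beta)>0$ for all $i$, and let $a=\min_i\val_t f_i(\beta)>0$; if every $f_i(\beta)=0$, take any positive $a$. Put $z(t)=t^{-a}$. Then $\val_t(z f_i(\beta))\geq 0$ for every $i$, while $\val_t z=-a<0$. By the boundedness criterion, $z$ is not integral over $K[zf_1,\ldots,zf_m]$.

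\textbf{Non-integrality $\Rightarrow$ APS.} I would argue this direction algebraically rather than through valuations, since it is cleaner. Suppose $0\notin\overline{F(K^n)}$. By Theorem~\ref{thm:WAN} there is $P\in\Ann(F)$ with $P(0)=1$. Write $P=1+\sum_{k=1}^{e}P_k$, where $P_k$ is the homogeneous part of degree $k$. Then
\[
 0=z^eP(f_1,\ldots,f_m)=z^e+\sum_{k=1}^{e} z^{e-k}P_k(zf_1,\ldots,zf_m),
\]
using $z^kP_k(f)=P_k(zf)$. This is a monic relation for $z$ over $K[zf_1,\ldots,zf_m]$, so $z$ is integral. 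Contrapositively, non-integrality implies APS. This argument also makes it unnecessary to extend Corollary~\ref{cor:boundedness} in the second direction.

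\textbf{The reduction and the main obstacle.} The map $(f_1,\ldots,f_m)\mapsto(z;zf_1,\ldots,zf_m)$ adds one fresh variable and $m$ multiplication gates, so it is a polynomial-time many-one reduction. The only delicate point is the first direction: the boundedness criterion must be valid over a general algebraically closed $K$ with Puiseux-type witnesses. If that is in doubt, I would avoid it and argue directly. Suppose $z^r+\sum_j a_j(zf)z^j=0$ in $K[z,x]$ and evaluate at $(t^{-a},\beta)$. Each $a_j(zf)$ then has nonnegative valuation, while $z^r$ has valuation $-ra$, strictly smaller than the valuation of every other term. So $z^r$ is the unique term of least valuation and the relation cannot hold. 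This is exactly the easy half of Corollary~\ref{cor:boundedness}, and it holds verbatim over $\overline{K((t))}$.
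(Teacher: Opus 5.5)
Your proof is correct. Your direction ``no APS $\Rightarrow$ integrality'' is the paper's argument: take the annihilator with constant term $1$, multiply by a power of $z$, and use homogeneity to absorb the powers of $z$ into the generators $zf_i$.

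For the other direction you take a genuinely different route. The paper stays purely algebraic. From a monic relation $z^d+\sum_{j<d}A_j(zf)z^j=0$ it reads off the coefficient of $z^d$ in $K[x][z]$, namely $1+\sum_j\sum_{|\alpha|=d-j}a_{j,\alpha}f^\alpha$. This is an annihilator of $F$ with constant term $1$. Both directions therefore rest only on the elementary identity $I(\overline{F(K^n)})=\Ann(F)$, and the argument works uniformly over every algebraically closed field without any discussion of Puiseux series or $\overline{K((t))}$.

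Your argument instead invokes the nontrivial implication of Theorem~\ref{thm:WAN}, which produces an actual approximate zero $\beta$ from $0\in\overline{F(K^n)}$. You then run the unique-minimal-valuation argument at $(t^{-a},\beta)$. This costs a curve-selection input, which in positive characteristic has to be formulated over $\overline{K((t))}$ as in Remark~\ref{rem:APS-for-other-fields}. In return it gives an explicit geometric witness of non-integrality: a branch on which every $zf_i$ stays bounded while $z$ diverges, which is exactly the kind of witness in Corollary~\ref{cor:boundedness}.

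Your opening appeal to Remark~\ref{rem:icm-other-fields}, to transport Corollary~\ref{cor:boundedness} to general $K$, is not really what that remark asserts. It is also unnecessary, since you only use the easy half of the corollary and you prove that half directly.
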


\begin{proof}
If $F$ has no APS, the annihilator criterion gives a relation
\[
  H(Y)=1+\sum_{0<|\alpha|\leq D}h_\alpha Y^\alpha,
  \qquad H(F)=0.
\]
Multiplying this identity by $z^D$ gives
\[
  z^D+
  \sum_{0<|\alpha|\leq D}
  h_\alpha(zf_1)^{\alpha_1}\cdots(zf_m)^{\alpha_m}
  z^{D-|\alpha|}=0,
\]
a monic relation for $z$ over $K[zf_1,\ldots,zf_m]$.

Conversely, suppose that $z$ is integral over this algebra.
Write a monic relation as
\[
  z^d+\sum_{j=0}^{d-1}A_j(zf_1,\ldots,zf_m)z^j=0.
\]
Taking the coefficient of $z^d$ in the above monic relation produces an annihilator of $F$ with constant term $1$: every contribution from $A_j$ has positive total degree $d-j$ in the $f_i$. Thus $F$ has no APS.

The reduction simply introduces $z$ and multiplies each
input polynomial by it, so its circuit size is polynomial.
\end{proof}

\begin{proof}[Proof of \cref{thm:aps-icm-equal}] 
    Lemma~\ref{lem:aps-to-nonintegrality} along with \cref{thm:power-reduction} proves the claim.
\end{proof}

\section{Conclusion and open questions}\label{sec:questions}

We place approximate polynomial satisfiability and integral
closure membership over polynomial subalgebras in the Counting Hierarchy. Our APS algorithm uses finite Noether normalization to restrict attention to finitely many points of the image closure. These points arise as limits of solutions to a one-parameter polynomial system, and their membership can be tested through coefficients of a multivariate resultant. For integral closure membership, a global annihilator criterion and an effective degree bound give a deterministic polynomial-time reduction from non-integrality to APS.

These results yield Counting Hierarchy upper bounds for tensor border rank, the orbit-closure problems specified in Corollary~\ref{cor:aps-applications}, integral closure membership for ideals, and finiteness of polynomial maps. They also give $\mathsf{FP}^{\CH}$ constructions of polynomial-size hitting sets for $\overline{\mathsf{VP}}$. The results apply to the coefficient fields and input representations discussed above.

The connection between APS and integral dependence also works in the reverse direction. For a fresh variable $z$, a tuple $(f_1,\ldots,f_m)$ has an approximate zero precisely when $z$ is not integral over $K[zf_1,\ldots,zf_m]$ (Lemma~\ref{lem:aps-to-nonintegrality}). Consequently, APS and non-integrality are equivalent under deterministic polynomial-time many-one reductions.

\paragraph{Exact versus approximate satisfiability.}
Can APS be reduced to Hilbert's Nullstellensatz by a polynomial-time many-one reduction? The reverse reduction is known~\cite{BSTW26}. Such a reduction would transfer further upper bounds for exact satisfiability to APS and, through the equivalence above, to non-integrality. More broadly, can $\APS$ be placed in the polynomial hierarchy?

\paragraph{From decision to certificates.}
Our integral closure algorithm decides whether a polynomial is integral without constructing a monic relation or a global
annihilator certificate. Can the Counting Hierarchy upper bound be extended to accessing the coefficients of such certificates? Since a degree-bounded certificate may have exponentially many coefficients, a natural formulation asks whether one can select a valid certificate and compute a specified coefficient bit within $\CH$.

\newpage

\section*{Acknowledgements}
AP and SA would like to thank the \emph{SiMul Research group @ CRAN} for hosting the \emph{Geometry of Neural Networks and Tensors in Nancy} workshop. They would also like to thank Alex Massarenti who they met there and a discussion with whom resulted in the idea about formalizing the intuition about ``approaching'' the approximate solution. HS would like to thank Cornelius Brand for pointing us to the reference \cite{LL89}, that was very helpful in completing the geometric picture. AH would like to thank Markus Bl\"aser for hosting him for an internship in his group. AP would also like to thank Raimund Seidel for hosting him during this work. Finally, we would like to thank Markus Bl\"aser for his guidance throughout.

\section*{Funding}
HS was supported by the European Union (ERC, CountHom, 101077083). Views and opinions expressed are however those of the author(s) only and do not necessarily reflect those of the European Union or the European Research Council Executive Agency. Neither the European Union nor the granting authority can be held responsible for them.

\section*{AI Disclosure}
OpenAI Codex assisted with organizing and drafting the manuscript from the authors' supplied notes, checking references, and reviewing the arguments. It contributed to the global certificate and degree-bound proofs, the reduction from non-integrality to APS in \cref{sec:integral}, and the circuit-input and dimension arguments. Its assistance also affected the exposition and verification of the APS in Counting Hierarchy construction.

\bibliographystyle{alpha}
\bibliography{arxiv-submission}
\end{document}